\newcommand{\by}{\mathbf{y}}
\newcommand{\bE}{\mathbf{E}}
\newcommand{\bV}{\mathbf{Var}}
\newcommand{\cov}{\mathbf{cov}}
\newcommand{\bP}{\mathbf{P}}
\newcommand{\bQ}{\mathbf{Q}}
\newcommand{\bfPB}{\bP_\mathtt{B}}
\newcommand{\bfEB}{\bE_\mathtt{B}}
\newcommand{\VB}{\mathbf{Var}_\mathtt{B}}
\newcommand{\bfE}{\mathbf{E}}
\numberwithin{equation}{section}
\theoremstyle{plain}
\newtheorem{theorem}{Theorem}[section]
\newtheorem{lemma}[theorem]{Lemma}
\newtheorem{prop}[theorem]{Proposition}
\newtheorem{remark}[theorem]{Remark}
\newtheorem{assumption}[theorem]{assumption}
\newtheorem{condition}{Condition}
\begin{document}

\begin{frontmatter}
\title{Graph-Based Change-Point Detection} 
\runtitle{Graph-Based Change-Point Detection}

\begin{aug}
\author{\fnms{Hao} \snm{Chen}\thanksref{m1} \ead[label=e1]{hxchen@ucdavis.edu}}
\and
\author{\fnms{Nancy} \snm{Zhang} \thanksref{m2} \ead[label=e2]{nzh@wharton.upenn.edu}}

\runauthor{Hao Chen and Nancy Zhang}

\affiliation{\thanksmark{m1} Department of Statistics, University of California, Davis \\
\thanksmark{m2} Department of Statistics, The Wharton School, University of Pennsylvania}

\address{Department of Statistics \\
University of California, Davis \\
One Shields Avenue \\
Davis, CA 95616 \\
USA \\
\printead{e1}\\
\phantom{E-mail:\ }}

\address{Department of Statistics \\
The Wharton School \\
University of Pennsylvania \\
Philadelphia, PA 19104 \\
USA \\
\printead{e2}}
\end{aug}

\begin{abstract}
We consider the testing and estimation of change-points -- locations where the distribution abruptly changes -- in a data sequence. A new approach, based on scan statistics utilizing graphs representing the similarity between observations, is proposed.  The graph-based approach is non-parametric, and can be applied to any data set as long as an informative similarity measure on the sample space can be defined.  Accurate analytic approximations to the significance of graph-based scan statistics for both the single change-point and the changed interval alternatives are provided.  Simulations reveal that the new approach has better power than existing approaches when the dimension of the data is moderate to high.  The new approach is illustrated on two applications: The determination of authorship of a classic novel, and the detection of change in a network over time.


\end{abstract}

\begin{keyword}[class=AMS]
\kwd[Primary ]{60K35}
\kwd{60K35}
\kwd[; secondary ]{60K35}
\end{keyword}

\begin{keyword}
\kwd{change-point}
\kwd{graph-based tests}
\kwd{nonparametrics}
\kwd{scan statistic}
\kwd{tail probability}
\kwd{high-dimensional data}
\kwd{complex data}
\kwd{network data}
\kwd{non-Euclidean data}
\end{keyword}

\end{frontmatter}

\section{Introduction}
\label{sec:introduction}

Change-point models are widely used in various fields for detecting lack of homogeneity in a sequence of observations.  In the typical formulation, the observations $\{y_i:~i=1,2,\dots,n\}$ are assumed to have distribution $F_0$ for $i \leq \tau$ and possibly a different distribution $F_1$ for $i>\tau$.  The parameter $\tau$ is referred to as the change-point.  We consider the case where the total length of the sequence $n$ is fixed.  There is a rich literature on theory and applications of this model when $y_i$ are real or integer valued scalars.  For example, in a well known study of the annual flow volume of the Nile River at the city of Aswan, Egypt, from 1871 to 1970, each $y_i$ is a continuous measurement of the annual discharge from the river \citep{cobb1978problem}, and the goal is to detect shifts in flow volume.  If the distribution of $y_i$ were assumed to be normal, score- or likelihood- based tests can be applied \citep{james1987tests}.  Bayesian and non-parametric approaches have also been developed (see \cite{carlstein1994change} for a survey).

Modern statistical applications are faced with data of increasing richness and dimension.  High throughput measurement schemes and digitization in many scientific fields have produced data sequences $\{\mathbf{y}_i:~i=1,2,\dots,n\}$, where each $\mathbf{y}_i$ is a high dimensional vector or even a non-Euclidean data object.   The dimension of each observation can be larger than the length of the sequence.  Testing the homogeneity of such high dimensional sequences is a challenging but important problem.  Following are some motivating examples:

\begin{description}
\item[Network evolution:] Data on networks have become increasingly common.  For example, email, phone, and online chat records can be used to construct a network of social interactions among individuals \citep{kossinets2006empirical, eagle2009inferring}.  High throughput biological experiments have led to the ubiquitous study of protein- or gene- interaction networks.  A large part of these studies is characterizing how the network evolves through time.  Here, the observation at each time point is a graphical encoding of the network.  In a longitudinal study, one might ask whether there is an abrupt shift in network connectivity at any point in time.  
\item[Image analysis:] Image data collected through time appears in diverse applications, from video surveillance to climatology to neuroscience.  The detection of abrupt events, such as security breaches, storms, or brain activity, can be formulated as a change-point problem.   Here, the observation at each time point is the digital encoding of an image.  
\item[Text or sequence analysis:]  Many classic works in both western and eastern literature have ongoing authorship debates.  For example, the debate surrounding both \textit{Tirant lo Blanc}, a Catalan romance, and \textit{Dream of the Red Chamber}, a Chinese masterpiece, is whether there is a change of authorship mid-way through the novel.  In the digital era, an objective approach to these debates is to statistically test for abrupt changes in writing style, which can be reflected by word usage.  Similar problems arise in genomic sequence analysis in biology, where it is often of interest to find regions of the genome with different DNA-word compositions (see, for example, \citet{tsirigos2005new}).  In both settings, each observation in the sequence is a vector of word counts over a large dictionary of words.

\end{description}

In all of these examples, the problem can be given the following statistical formulation:  We observe a sequence of observations $\{\by_i\}, i=1,\dots,n$, indexed by some meaningful ordering, such as time or location.  We are concerned with testing the null hypothesis
\begin{align}
  H_0: & \  \by_i \sim F_0,\ i=1,\dots,n, \label{H0}
\end{align}
against the single change-point alternative
\begin{align}
  H_1: & \ \exists ~ 1\leq\tau <n, \ \by_i \sim \left\{ \begin{array}{ll} F_1, & i> \tau \\ F_0, & \text{otherwise}, \end{array} \right. \label{H1one}
\end{align}
or the changed interval alternative
\begin{align}
  H_2: & \ \exists ~ 1 \leq \tau_1<\tau_2 \leq n, \ \by_i \sim \left\{ \begin{array}{ll} F_1, & i= \tau_1+1,\dots,\tau_2 \\ F_0, & \text{otherwise}, \end{array} \right. \label{H1two}
\end{align}
where $F_0$ and $F_1$ are two probability measures that differ on a set of non-zero measure.
Scenarios with multiple change-points can be decomposed into these two types of simple alternatives.

We study this change-point problem under the assumption that $\{\by_i\}$ are \emph{independent}.  
Independence is an ideal assumption that may be violated in some settings.  However, this assumption allows us to conduct theoretical analysis, which also produce results that are useful when the assumption is slightly violated. 
We later discuss modifications to our approach when the independence assumption is violated.

In the multivariate setting, existing approaches
are limited in many ways.
Most methods are based on parametric models that are highly context specific.  For example, \cite{zhang2010detecting} and \cite{siegmund2011detecting} studied the problem of detecting common shifts in mean in sequences of independent multivariate Gaussian variables with identity covariance. Under the same setting, \cite{srivastava1986likelihood} and \cite{james1992tests} discussed general likelihood ratio tests for a change in mean, which requires that the dimension of the observations be smaller than the number of observations.  As we will show in simulations, parametric change-point tests for multivariate data work under very specific assumptions, and are sensitive to violation of these assumptions.  The existing parametric tests also can not be applied in very high dimensions, unless strong assumptions are made to avoid the estimation of the large number of nuisance parameters that are a by-product of increasing dimension.

In the nonparametric context, \cite{desobry2005online} and \cite{harchaoui2009kernel} used kernel-based methods.  A common drawback for kernel-based methods is that they rely heavily on the choice of the kernel function and its parameters, and the problem becomes more severe when the data is in moderate to high dimensions.  Also, none of these methods offer a fast analytical formula for false positive control, thus making them difficult to apply for large data sets.  \cite{lung2011homogeneity} proposed a non-parametric approach based on marginal rank statistics, which is useful if there is a clear ranking mechanism, but also requires the restriction that the number of observations be larger than the dimension of the data.  



In this paper, we describe a nonparametric approach to change-point detection and estimation.  The approach can be applied to data in arbitrary dimension and even to non-Euclidean data, with a general, analytic formula for type I error control.  We illustrate the approach on two applications:  Testing for a change in author of a classic European novel, and testing the temporal homogeneity of a social network.  We show, via simulations, that as dimension increases this nonparametric method gains power over parametric methods in cases where the parametric methods can be applied.  The generality of the new approach and the availability of analytic formulas for type I error make it an easy off-the-shelf tool for homogeneity testing in multivariate settings.  The method is implemented in an R package ``gSeg'', which is available in CRAN.

This paper is organized as follows:  In Section \ref{sec:gener-fram-change} we describe the proposed method.  The underlying idea is graph-based two-sample tests adapted to the scan-statistic setting.  Two-sample tests based on various types of graphs representing the similarity between observations were first proposed in \cite{friedman1979multivariate} and \cite{rosenbaum2005exact}.    We review these previous works in Section \ref{sec:graph-based}.  Once the graph has been constructed, theoretical analysis of the scan statistic can be decoupled from the modeling of the high dimensional data.  We describe the test statistic in the detection of a single change-point in Section \ref{sec:single-change-point}, and that in the detection of a changed interval in Section \ref{sec:changed-interval}.  Section \ref{sec:an-analyt-appr} gives analytic formulas for approximating the significance of the tests, and evaluates their accuracy in numerical studies.  Section \ref{sec:power-analysis} evaluates the power of the test via simulations.  In Section \ref{sec:real-data-examples}, the new method is applied to the analysis of the text of \textit{Tirant lo Blanc}, and the analysis of the Friendship Network data set collected by the MIT Media Laboratory \citep{eagle2009inferring}.  In Section \ref{sec:extensions}, we discuss some extensions to the approach to deal with local dependency in the sequence and to construct a confidence interval to the change-point.  Finally, we conclude with a discussion in Section \ref{sec:disc-concl}.

\section{A Graph-Based Framework for Change-Point Detection}
\label{sec:gener-fram-change}


In both the single change-point \eqref{H1one} and the changed interval alternatives \eqref{H1two}, the observations are partitioned into two groups.  We allow each group to have a minimum number of observations: $1<n_0\leq \tau \leq n_1 <n$ for the single change-point scenario and $1<l_0\leq \tau_2-\tau_1 \leq l_1 <n$ for the changed interval scenario, where $n_0$, $n_1$, $l_0$, $l_1$ are prespecified.  Sometimes, these values can be better chosen using domain knowledge.  We may also have some further constrains on the locations of $\tau_1$ and $\tau_2$.


We do not impose any restrictions on the sample space or distribution of $\by_i$.  Our approach requires that the similarity between $\by_i$ can be represented  by a graph, with edges in the graph connecting observations that are ``close'' in some sense.  For the proposed method to have good power, data points drawn from $F_0$ need to be closer to each other than to data points drawn from $F_1$, in a global sense, and vice versa.  We describe this in more detail next, and briefly review graph-based two-sample tests.


\subsection{Graph-Based Two-Sample Tests}
\label{sec:graph-based}

By graph-based tests, we refer to tests that are based on graphs with the observations $\{\by_i\}$ as nodes.  The graph is usually derived from a distance or a generalized dissimilarity on the sample space, with edges connecting observations that are close in distance.  For example, \cite{friedman1979multivariate} proposed the first graph-based test for testing the null hypothesis that subjects from two groups are equal in distribution against an omnibus alternative.  Their method relies on the minimum spanning tree (MST), which is a tree connecting all observations minimizing the total distance across edges.  Their test statistic is the number of edges in the tree connecting observations from different groups, rejecting the null hypothesis when this count is low compared to its distribution under permutation.  The rationale is that, if the two groups come from different distributions, data points from the same group should be closer to each other, and thus edges in the tree should be more likely to connect subjects within a group.  

There are many other ways to construct the graph.  \cite{rosenbaum2005exact} proposed minimum distance pairing (MDP), which divides the $n$ subjects into $n/2$ (assuming $n$ is even) non-overlapping pairs in such a way as to minimize the total of $n/2$ distances between pairs.  For odd $N$, Rosenbaum suggested creating a pseudo data point that has distance 0 with all other subjects, and later discarding the pair containing this pseudo point.  This method has the desirable property of being truly distribution free. 

The nearest neighbor graph (NNG), which connects each data point to its nearest neighbor, can also be used to define a statistic in similar style to \cite{friedman1979multivariate} and \cite{rosenbaum2005exact}.

Figure \ref{fig:graph} illustrates the MST, MDP, and NNG  on 40 points in $\mathbb{R}^2$.  Ways to construct the graph are not limited to these three.  In some applications, the graph may be given at the start of the analysis without alluding to an underlying distance measure, see the Haplotype example in \cite{chen2013graph}.  The proposed method does not depend on how the graph was constructed.  The test statistic and its properties under the permutation null rely only on the graph and not on the underlying distance measure nor on the original data.  However, the quality of the graph in separating $F_0$ and $F_1$ is integral to the power of the test.



\begin{figure}[]
  \centering
  \includegraphics[width=.32\textwidth]{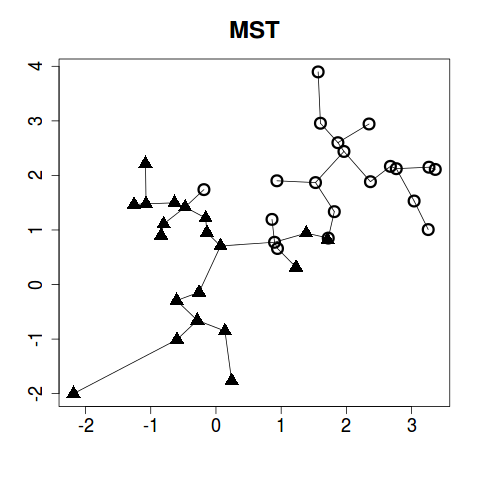}
  \includegraphics[width=.32\textwidth]{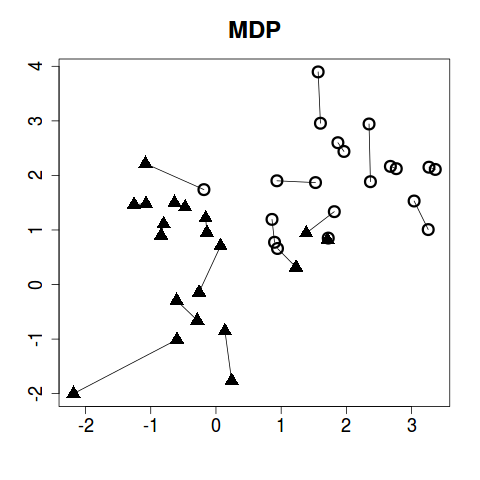}
  \includegraphics[width=.32\textwidth]{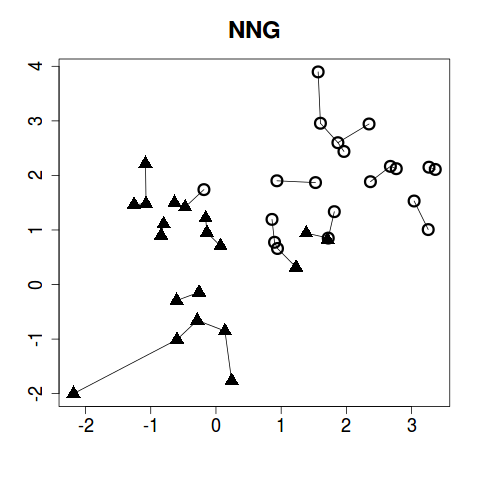}
  \caption{The MST, MDP and NNG graphs on an example two-dimensional data set. 20 points were drawn from $\mathcal{N}(\mathbf{0}, I_2)$ (shown in triangles) and 20 points were drawn from $\mathcal{N}((2,2)', I_2)$ (shown in circles).}
  \label{fig:graph}
\end{figure}

%



\subsection{Test Statistic for a Single Change-Point Alternative}
\label{sec:single-change-point}

Here, we derive the test statistic for testing the null $H_0$ (\ref{H0}) versus the single change-point alternative $H_1$ (\ref{H1one}).  Each possible value of $\tau$ divides the observations into two groups: Observations come before $\tau$ and observations that come after $\tau$.  Let $G$ be the similarity graph on $\{\by_i\}$, as described in Section \ref{sec:graph-based}.  We use $G$ to refer to both the graph and its set of edges when the vertex set is implicitly obvious.  For any event $x$ let $I_x$ be the indicator function that takes value 1 if $x$ is true, and 0 otherwise. Then, for any candidate value $t$ of $\tau$, the number of edges connecting points from different groups is
\begin{align*}
  R_G(t) & = \sum_{(i,j)\in G} I_{g_i(t) \neq g_j(t)}, \quad g_i(t) = I_{i>t}.
\end{align*}
Here, $g_i(t)$ is an indicator function for the event that $\by_i$ is observed after $t$.  So $R_G(t)$ is the number of edges in the graph $G$ that connect observations from the ``past'' $(\leq t)$ to the ``future'' $(>t)$.  Relatively small values of $R_G(t)$ are evidence against the null hypothesis.  

Figure \ref{fig:Rt} illustrates the computation of $R_G(t)$ on a small artificial data set of length $n=40$ with the first 20 points drawn from $\mathcal{N}(\mathbf{0},I_2)$ and the second 20 points drawn from $\mathcal{N}((2,2)^\prime, I_2)$.  The similarity graph $G$ is the MST constructed using Euclidean distance.

\begin{figure}[!htp]
  \centering
  \includegraphics[width=.32\textwidth]{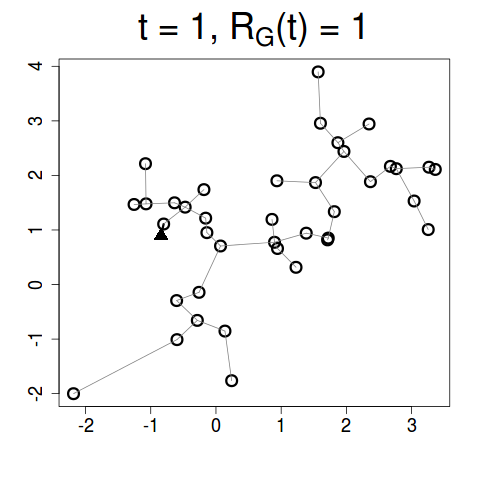}
  \includegraphics[width=.32\textwidth]{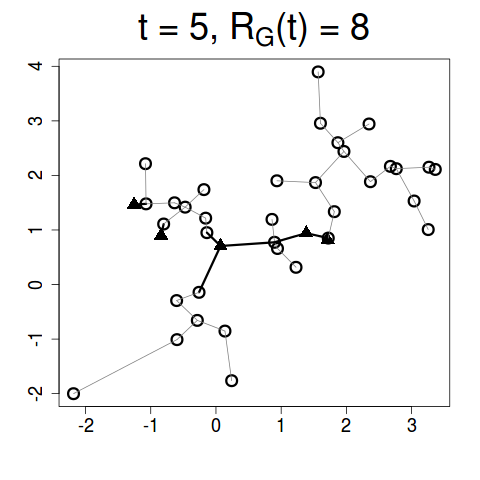}
  \includegraphics[width=.32\textwidth]{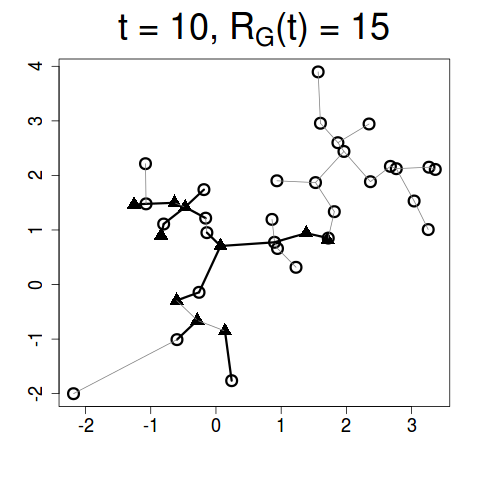} \\
  \includegraphics[width=.32\textwidth]{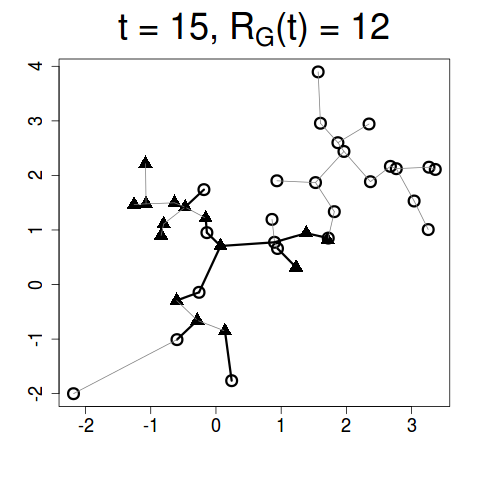}
  \includegraphics[width=.32\textwidth]{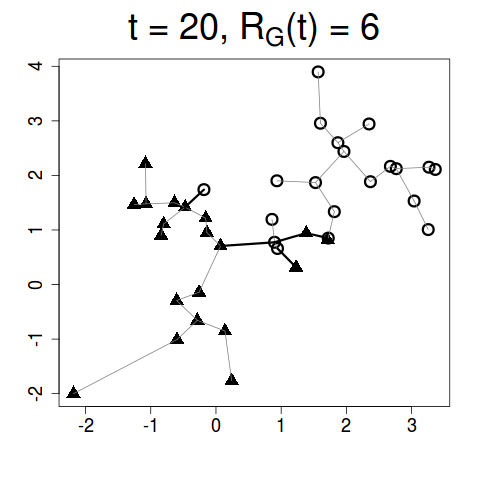}
  \includegraphics[width=.32\textwidth]{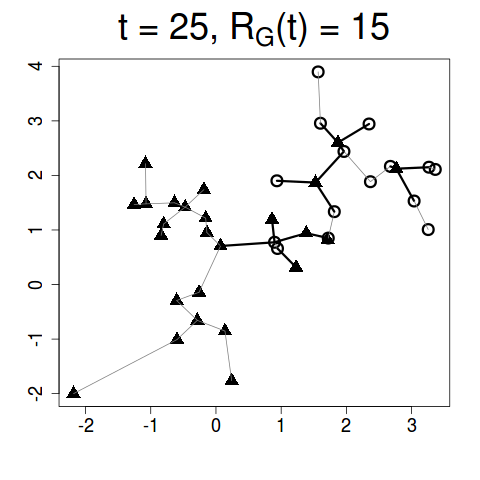} \\
  \includegraphics[width=.32\textwidth]{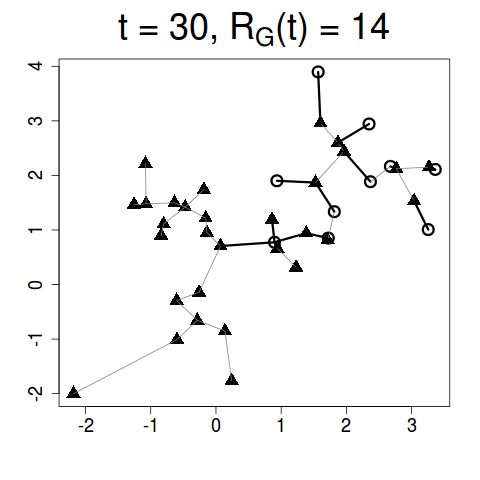}
  \includegraphics[width=.32\textwidth]{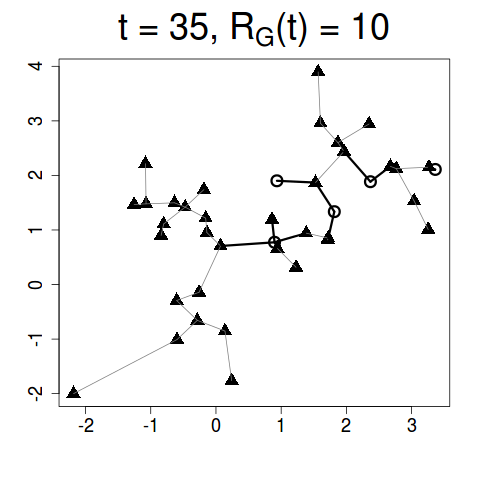}
  \includegraphics[width=.32\textwidth]{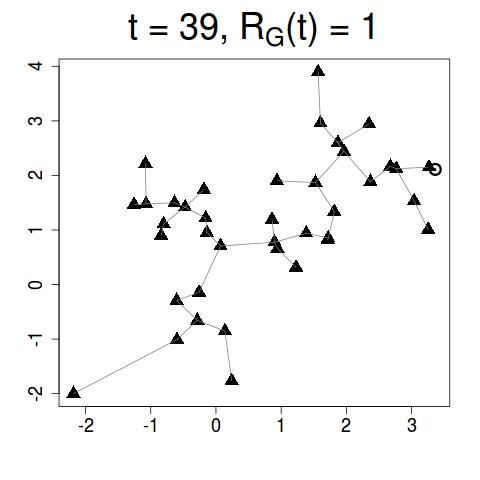}
  \caption{The computation of $R_G(t)$ for nine different values of $t$.  The data is a sequence of length $n=40$, with the first 20 points drawn from $\mathcal{N}(\mathbf{0}, I_2)$ and the second 20 points drawn from $\mathcal{N}((2,2)', I_2)$.  The similarity graph $G$ shown in the plots is the MST on Euclidean distance.  Each $t$ divides the observations into two groups, one group for observations before and at $t$ (shown as triangles) and the other group for observations after $t$ (shown as circles).  Edges that connect observations from the two different groups (i.e. edges connecting a triangle and a circle) are bold in the graph.  Notice that $G$ does not change as $t$ changes, but the group identities of some observations change, causing $R_G(t)$ to change. }
  \label{fig:Rt}
\end{figure}

Under the null hypothesis $H_0$ \eqref{H0} and the independence assumption, the joint distribution of $\{\by_i:~i=1,\dots,n\}$ is the same under the permutation distribution.  We define the null distribution of $R_G(t)$ to be the permutation distribution, which places $1/n!$ probability on each of the $n!$ permutations of $\{\by_i:~i=1,\dots,n\}$.  Let $\pi(i)$ be the time of observing $\by_i$ after permutation, then for the permuted sequence, $g_i(t)$ becomes $I_{\pi(i)>t}$.  Notice that the graph $G$ is determined by the values of $\by_i$'s, not their order of appearance, and thus remains constant under permutation.  When there is no further specification, we denote by $\bP$, $\bE$, $\bV$ probability, expectation, and variance, respectively, under the permutation null distribution.

Since the null distribution of $R_G(t)$ depends on $t$, we standardize $R_G(t)$ so that it is comparable across $t$.  Let
\begin{align} \label{Zt}
  Z_G(t) & = -\frac{R_G(t)-\bE[R_G(t)]}{\sqrt{\bV[R_G(t)]}}.
\end{align}
In the standardization, we also invert the sign, so that \emph{large} values of $Z_G(t)$ are evidence against the null.


Lemma \ref{lemma:onechptstat} below gives analytic formulas for $\bE[R_G(t)]$ and $\bV[R_G(t)]$.  Before we state the lemma, we introduce some new notation: Let $G_i$ be the subgraph of $G$ containing all edges that connect to node $\by_i$.  As before, we recycle notation and use $G_i$ to denote the set of edges in $G_i$.  $|G_i|$ denotes the number of edges in $G_i$, which is apparently also the degree of node $\by_i$ in $G$.

\begin{lemma}
\label{lemma:onechptstat}
  Under the permutation null, the expectation and variance of $R_G(t)$ are
  \begin{align*}
    \bE(R_G(t)) & = p_1(t) |G|, \\
    \bV(R_G(t)) & = p_2(t) |G| + \left( \frac{1}{2}p_1(t) - p_2(t)\right) \sum_i|G_i|^2 + \left( p_2(t) - p_1^2(t) \right) |G|^2,
  \end{align*}
where
\begin{align*}
  p_1(t) & = \frac{2t(n-t)}{n(n-1)},&
  p_2(t) & = \frac{4t(t-1)(n-t)(n-t-1)}{n(n-1)(n-2)(n-3)}.
\end{align*}
\end{lemma}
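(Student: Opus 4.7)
The plan is to compute $\bE[R_G(t)]$ directly from linearity and compute $\bV[R_G(t)]$ by expanding the square and classifying pairs of edges according to how many endpoints they share. All probabilities in sight reduce to elementary counts over uniformly random permutations.

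First I would fix an edge $e = (i,j) \in G$ and note that under the permutation null,
\[
\bP\bigl(g_i(t) \neq g_j(t)\bigr) \;=\; \bP(\pi(i) \leq t,\ \pi(j) > t) + \bP(\pi(i) > t,\ \pi(j) \leq t) \;=\; \frac{2t(n-t)}{n(n-1)} \;=\; p_1(t),
\]
which by linearity gives $\bE[R_G(t)] = p_1(t)\, |G|$.

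Next, for the variance, I would write
\[
\bV[R_G(t)] \;=\; \sum_{e_1, e_2 \in G} \cov\bigl(I_{e_1 \text{ crosses } t},\ I_{e_2 \text{ crosses } t}\bigr)
\]
and split the double sum into three classes: (i) $e_1 = e_2$, (ii) $e_1, e_2$ distinct but sharing exactly one vertex, and (iii) $e_1, e_2$ vertex-disjoint. For class (i), the covariance is $p_1(t)(1-p_1(t))$, contributing $|G|\,p_1(1-p_1)$. For class (ii), a short counting argument gives $\bP(g_j = g_k \neq g_i) = \tfrac{1}{2}p_1(t)$, so the covariance is $\tfrac{1}{2}p_1 - p_1^2$. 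For class (iii), enumerating the four configurations of two disjoint crossing edges yields
\[
\bP\bigl(g_i \neq g_j,\ g_k \neq g_l\bigr) \;=\; \frac{4t(t-1)(n-t)(n-t-1)}{n(n-1)(n-2)(n-3)} \;=\; p_2(t),
\]
so the covariance is $p_2 - p_1^2$.

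Then I would count the number of ordered edge pairs in each class in terms of $|G|$ and the degrees $|G_i|$: class (i) contributes $|G|$, class (ii) contributes $\sum_i |G_i|(|G_i|-1)$, and class (iii) contributes $|G|^2 - |G| - \sum_i |G_i|(|G_i|-1)$. Using the handshake identity $\sum_i |G_i| = 2|G|$, substituting the three covariances, and regrouping by $|G|$, $|G|^2$, and $\sum_i |G_i|^2$, the $|G|$-coefficient collapses to $p_2(t)$ while the $|G|^2$- and $\sum_i|G_i|^2$-coefficients come out to $p_2 - p_1^2$ and $\tfrac{1}{2}p_1 - p_2$, respectively, matching the stated formula. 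The main obstacle is not any single calculation but the bookkeeping in the final regrouping: several $p_1^2$ and $p_1$ terms must cancel exactly via the handshake identity, and miscounting ordered versus unordered pairs, or forgetting that disjoint pairs are obtained by subtracting both the diagonal and the vertex-sharing pairs from $|G|^2$, is the easy place to slip.
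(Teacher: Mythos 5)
Your proposal is correct and follows essentially the same route as the paper: the paper likewise classifies ordered pairs of edges by whether they coincide, share one node, or are disjoint, computes the same joint crossing probabilities $p_1(t)$, $\tfrac{1}{2}p_1(t)$, $p_2(t)$ by permutation counting, and uses the same counts $|G|$, $\sum_i|G_i|(|G_i|-1)$, and $|G|^2-|G|-\sum_i|G_i|(|G_i|-1)$. The only cosmetic difference is that you distribute the subtraction of $\bE^2(R_G(t))$ into per-class covariances, whereas the paper assembles $\bE(R_G^2(t))$ first and subtracts the squared mean at the end.
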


The expressions for the expectation and variance are obtained by combinatorial analysis and the details are in Appendix \ref{sec:proof-lemma}.

\begin{remark}
  The expectation and variance of $R_G(t)$ under the permutation null depend only on $t$, $n$, and two characteristics of the graph -- the number of edges ($|G|$) and the sum of squares of node degrees ($\sum_{i=1}^n|G_i|^2$).
\end{remark}

Figure \ref{fig:RZ} shows the $R_G(t)$ and $Z_G(t)$ processes for the same illustration data set in Figure \ref{fig:Rt}.  We see that $Z_G(t)$ peaks at the true change-point 20.  For contrast, Figure \ref{fig:RZ0} shows $R_G(t)$ and $Z_G(t)$ for a sequence of 40 points all drawn from $\mathcal{N}(\mathbf{0},I_2)$.  Note that for the latter data set, with no change-point, $Z_G(t)$ exhibits random fluctuation and attains a maximum value much smaller than that of Figure \ref{fig:RZ}.

\begin{figure}[!htp]
  \centering
  \includegraphics[width=.4\textwidth]{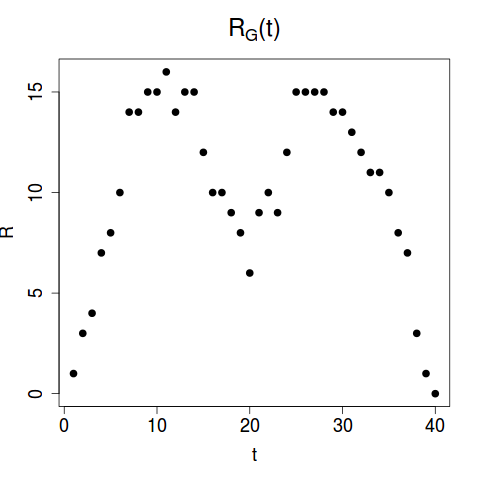}
  \includegraphics[width=.4\textwidth]{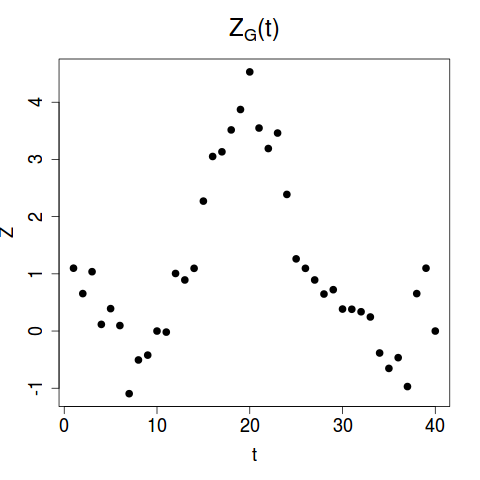}
  \caption{The profile of $R_G(t)$ and $Z_G(t)$ against $t$ for the same data set as in Figure \ref{fig:Rt}.  There is a change-point at $t=20$.}
  \label{fig:RZ}
\end{figure}

\begin{figure}[!htp]
  \centering
  \includegraphics[width=.4\textwidth]{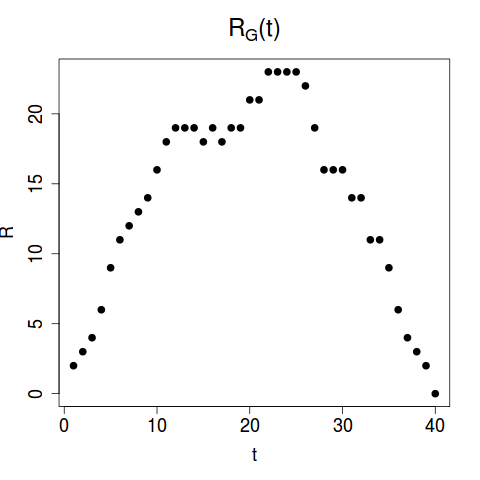}
  \includegraphics[width=.4\textwidth]{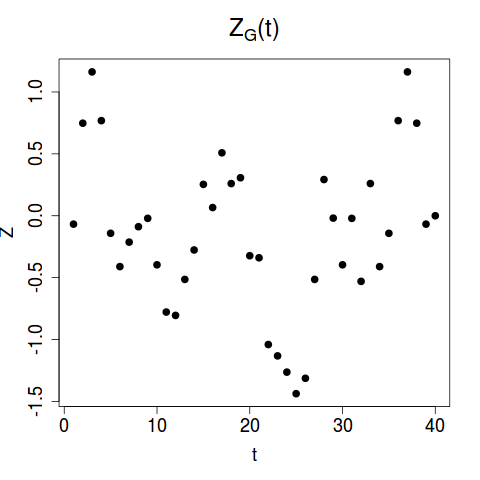}
  \caption{The profile of $R_G(t)$ and $Z_G(t)$ against $t$ on a sequence of points all randomly drawn from $\mathcal{N}(\mathbf{0},I_2)$.  There is no change-point in the sequence.}
  \label{fig:RZ0}
\end{figure}

To test $H_0$ versus $H_a$ we use the scan statistic
\begin{equation}
  \label{eq:onechpt}
  \max_{n_0 \leq t \leq n_1} Z_G(t),
\end{equation}
where $n_0$ and $n_1$ are pre-specified constraints for the range of $\tau$ as described earlier.  The null hypothesis is rejected if the maxima is greater than some threshold.  Section \ref{sec:an-analyt-appr} describes how to choose the threshold to control the family wise error rate.


\subsection{Test Statistic for a Changed Interval Alternative}
\label{sec:changed-interval}
Next, we derive the test statistic for testing $H_0$ \eqref{H0} versus the changed interval alternative $H_2$ (\ref{H1two}).
Similar to the single change-point case, any
specific alternative $(t_1,t_2)$ divides the data into two groups, one group containing all points observed during $(t_1,t_2]$, and the other group containing all points observed outside of this interval.  Then, the number of edges in $G$ connecting data points from different groups is
\begin{align*}
   R_G(t_1,t_2) & = \sum_{(i,j)\in G} I_{g_i(t_1,t_2) \neq g_j(t_1,t_2)}, \quad g_i(t_1,t_2) = I_{t_1<i\leq t_2}.
\end{align*}
We standardize $R_G(t_1,t_2)$ as before,
\begin{align*}
  Z_G(t_1,t_2) & = -\frac{R_G(t_1,t_2)-\bE(R_G(t_1,t_2))}{\sqrt{\bV(R_G(t_1,t_2))}}.
\end{align*}
Lemma \ref{lemma:twochptstat} below gives explicit expressions for the expectation and variance of $R_G(t_1,t_2)$ under the permutation null.  The scan statistic involves a maximization over $t_1$ and $t_2$,
\begin{equation}
  \label{eq:twochpt}
  \max_{\footnotesize \begin{array}{c} 1\leq t_1<t_2\leq n \\ l_0\leq t_2-t_1 \leq l_1 \end{array} } Z_G(t_1, t_2)
\end{equation}
where $l_0$ and $l_1$ are constraints on the window size.  For example, we can set $l_1 = n-l_0$ so that only alternatives where the number of observations in either group is larger than $l_0$ are considered. 

We can further constrain $t_1$ and $t_2$ to prefixed sets based on domain knowledge.  If we do so, the $p$-value approximations in Section \ref{sec:asympt-appr} will have minor but obvious modifications, which can be followed straightforwardly by steps given in Section \ref{sec:asympt-appr}.

\begin{lemma}\label{lemma:twochptstat}
    Under the permutation null, the expectation and variance of $R_G(t_1,t_2)$ are
  \begin{align*}
    \bE(R_G(t_1,t_2)) & = p_1(t_2-t_1) | G|, \\
    \bV(R_G(t_1,t_2)) & = p_2(t_2-t_1) | G| + \left( \frac{1}{2}p_1(t_2-t_1) - p_2(t_2-t_1)\right) \sum_i|G_i|^2 \\
    & \quad + \left( p_2(t_2-t_1) - p_1^2(t_2-t_1) \right) | G|^2,
  \end{align*}
where $p_1(\cdot)$ and $p_2(\cdot)$ are defined in Lemma (\ref{lemma:onechptstat}).
\end{lemma}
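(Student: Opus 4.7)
The proof reduces to Lemma \ref{lemma:onechptstat} via a symmetry argument. The key observation is that under the permutation null only the cardinalities of the two groups induced by $(t_1,t_2)$ matter, not which particular indices occupy the ``inside'' window.

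More precisely, set $s = t_2 - t_1$. For the single change-point statistic with candidate $t = s$, the group indicator is $g_i(s) = I_{\pi(i) > s}$, which splits the permuted labels into sets of sizes $s$ and $n-s$ (here the ``inside'' group is $\{s+1,\dots,n\}$ of cardinality $n-s$, and the ``outside'' group is $\{1,\dots,s\}$ of cardinality $s$). For the changed interval statistic at $(t_1,t_2)$, the group indicator is $g_i(t_1,t_2) = I_{t_1 < \pi(i) \leq t_2}$, which splits the permuted labels into sets of sizes $s$ (the interval $(t_1,t_2]$) and $n-s$. Since a uniform random permutation $\pi$ on $\{1,\dots,n\}$ is invariant under right-composition with any fixed permutation, for any two subsets $A, B \subseteq \{1,\dots,n\}$ with $|A| = |B|$, the random vector $(I_{\pi(i) \in A})_{i=1}^n$ has the same joint distribution as $(I_{\pi(i) \in B})_{i=1}^n$.

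Applying this with $A = (t_1, t_2]$ and $B = \{s+1,\dots,n\}$ (or equivalently to their complements, since the statistic is symmetric in the two group labels), the random vector $(g_i(t_1,t_2))_{i=1}^n$ has the same permutation distribution as $(g_i(s))_{i=1}^n$. Because $R_G$ is a function only of this indicator vector and the fixed graph $G$, it follows that
\begin{equation*}
R_G(t_1, t_2) \stackrel{d}{=} R_G(s) = R_G(t_2 - t_1)
\end{equation*}
under the permutation null. Taking expectations and variances on both sides and substituting $t \mapsto t_2 - t_1$ into the formulas of Lemma \ref{lemma:onechptstat} yields the stated expressions.

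There is essentially no obstacle here beyond articulating the symmetry cleanly: the combinatorial computation was already carried out for Lemma \ref{lemma:onechptstat}, and the changed-interval case introduces no new structure because the permutation distribution depends only on the partition sizes. If one preferred not to invoke the distributional equality directly, an alternative (but strictly redundant) route is to repeat the first- and second-moment combinatorial calculations of the Appendix verbatim, noting at each step that every probability such as $\bP(g_i(t_1,t_2) \neq g_j(t_1,t_2))$ or $\bP(g_i(t_1,t_2) \neq g_j(t_1,t_2), g_k(t_1,t_2) \neq g_\ell(t_1,t_2))$ depends on $(t_1,t_2)$ only through $t_2 - t_1$, and matches $p_1$ or $p_2$ evaluated at $t_2 - t_1$.
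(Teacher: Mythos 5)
Your proposal is correct. The paper omits the proof, stating only that it is ``very similar'' to that of Lemma \ref{lemma:onechptstat}, i.e.\ the intended argument is to repeat the edge-pair combinatorial computation with the indicator $g_i(t_1,t_2)$ and observe that every probability such as $\bP(g_i(t_1,t_2)\neq g_j(t_1,t_2))$ depends on $(t_1,t_2)$ only through $t_2-t_1$ --- exactly your fallback route. Your primary route is genuinely different and cleaner: you establish the single distributional identity $R_G(t_1,t_2)\overset{d}{=}R_G(t_2-t_1)$ under the permutation null, using the exchangeability of $(I_{\pi(i)\in A})_{i=1}^n$ in the set $A$ given $|A|$, together with the symmetry of $R_G$ under swapping the two group labels (needed because the ``inside'' set has size $t_2-t_1$ while the single change-point ``after'' set has size $n-(t_2-t_1)$ --- a point you correctly address via complements). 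This buys more than the paper's calculation: it delivers equality of \emph{all} moments at once, so in particular it also yields \eqref{eq:ER3} in Lemma \ref{lemma:gammat}, which the paper justifies separately by again inspecting configuration counts. The only cost is that one must be careful that $R_G$ really is a function of the indicator vector alone (it is, since $G$ is fixed under permutation), which you state explicitly.
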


The proof for this lemma is very similar to the proof of Lemma \ref{lemma:onechptstat} and is omitted here.





\section{Analytic Approximations to Significance Levels}
\label{sec:an-analyt-appr}

How large do the values of the scan statistics (\ref{eq:onechpt}) and (\ref{eq:twochpt}) need to be to constitute sufficient evidence against the null hypothesis of homogeneity?  In other words, we are concerned with the tail distribution of the scan statistics under $H_0$, that is,
\begin{equation}\label{eq:onechptp}
  \bP\left(\max_{n_0 \leq t \leq n_1} Z_G(t)>b \right)
\end{equation}
for the single change-point alternative, and
\begin{equation}\label{eq:twochptp}
\bP\left( \max_{\scriptsize \begin{array}{c} 1\leq t_1<t_2\leq n \\ l_0\leq t_2-t_1 \leq l_1 \end{array}}Z_G(t_1, t_2) > b  \right)
\end{equation}
for the changed interval alternative. In the rest of the paper, we omit the implicitly obvious constraint $1\leq t_1<t_2\leq n$ for simplicity. 

The null distributions of $\max Z_G(t)$ and $\max Z_G(t_1,t_2)$ are defined as the permutation distribution.  For small $n$, we can directly sample from the permutation distribution to approximate \eqref{eq:onechptp} and \eqref{eq:twochptp}.  However, when $n$ is large, permutation is computationally prohibitive, especially for \eqref{eq:twochptp} where each scan is of order $\mathcal{O}(n^2)$ if $l_1-l_0 \sim \mathcal{O}(n)$.  Therefore, we derive analytic expressions for both tail probabilities to make the method instantly applicable.  Treating $\{Z_G(t)\}$ and $\{Z_G(t_1,t_2)\}$ as families of tests, the two probabilities are their family-wise error rates.  The tests are dependent since they are all based on the same sequence. The marginal distributions of $Z_G(t)$ and $Z_G(t_1,t_2)$, under permutation, are also quite complicated.  Therefore, it is impossible to obtain exact expressions for the two probabilities for finite $n$.  In the rest of this chapter, we give analytic approximations to the two probabilities.  We first show that, under mild conditions on $G$, $\{Z_G([nu]\footnote{$[x]$ is the largest interger that is no larger than $x$.}): 0<u<1\}$ converges to a Gaussian process and $\{Z_G([nu],[nv]): 0<u<v<1\}$ converges to a Gaussian random field as $n\rightarrow\infty$ (Section \ref{sec:limit-distr}).  We then derive analytic approximations to the two probabilities under Gaussian field approximation (Section \ref{sec:asympt-appr}).  To achieve better accuracy for the case of small $n$ and for the case where the conditions for Gaussian convergence are questionable, we refine our approximations by correcting the skewness in the marginal distributions (Section \ref{sec:skewness-correction}).  All of these approximations are checked by numerical studies under a set of representative scenarios (Section \ref{sec:numerical-studies}).


\subsection{Asymptotic Properties of the Processes}
\label{sec:limit-distr}



In this section, we derive the limiting distributions of $\{Z_G([nu]):0<u<1\}$ and $\{Z_G([nu],[nv]):0<u<v<1\}$ under permutation.  We first introduce some notation.  For edge $e = (e_-, e_+)$, where $e_-<e_+$ are the indices of the nodes connected by the edge $e$, let
\begin{equation}
  \label{eq:Ae}
  A_e = G_{e_-} \cup G_{e_+},
\end{equation}
be the set of edges that connect to either node $e_-$ or node $e_+$, and
\begin{equation}
  \label{eq:Be}
  B_e = \cup \{ A_{e^\prime}: e^\prime\in A_e\},
\end{equation}
be the set of edges that connect to nodes in $G_{e_-}$ and $G_{e_+}$.

We define two asymptotic conditions on the graph.
\begin{condition}\label{cond:graph1}
   $|G|\sim \mathcal{O}(n^\alpha)$, $0<\alpha < 1.125$.
\end{condition}
\begin{condition}\label{cond:graph2}
  $\sum_{e\in G}|A_e||B_e| \sim o(n^{1.5(\alpha \wedge 1)})$.
\end{condition}

\begin{theorem}\label{thm:limit}
  Under conditions \ref{cond:graph1} and \ref{cond:graph2}, as $n\rightarrow\infty$,
  \begin{enumerate}
  \item $\{Z_G([nu]):0<u<1\}$ converges to a Gaussian process, which we denote as $\{Z_G^\star(u):0<u<1 \}$,
  \item  $\{Z_G([nu],[nv]):0<u<v<1\}$ converges to a two-dimensional Gaussian random field, which we denote as $\{Z_G^\star(u,v): 0<u<v<1\}$,
  \end{enumerate}
under the permutation distribution.
\end{theorem}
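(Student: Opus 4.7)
The plan is to prove the theorem by the standard two-step route: convergence of finite-dimensional distributions to the claimed Gaussian law, followed by tightness of the normalized processes. Both pieces reduce, via the representation of $R_G(t)$ and $R_G(t_1,t_2)$ as edge sums, to combinatorial estimates on the graph $G$ that are controlled by Conditions \ref{cond:graph1} and \ref{cond:graph2}.

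For finite-dimensional convergence I would use the Cram\'er--Wold device. Fix $0<u_1<\cdots<u_k<1$ and coefficients $a_1,\ldots,a_k$; then $\sum_j a_j(R_G([nu_j])-\bE R_G([nu_j]))$ is a sum over edges $e=(e_-,e_+)\in G$ of a bounded function of $(\sigma(e_-),\sigma(e_+))$, where $\sigma$ is a uniformly random permutation. This is the setting of a combinatorial CLT, which I would attack via Stein's method with an exchangeable pair $(\sigma,\sigma')$ obtained by swapping two randomly chosen positions in $\sigma$. The resulting bound on the distance to normality reduces to quantities of the form $|G|$, $\sum_i|G_i|^2$, and $\sum_e|A_e||B_e|$, and Conditions \ref{cond:graph1}--\ref{cond:graph2} are calibrated exactly so that, after dividing by $\sqrt{\bV[R_G(t)]}$ (whose order is read off from Lemma \ref{lemma:onechptstat}), this Stein error vanishes as $n\to\infty$. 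The limiting covariance function is obtained by the same combinatorial expectation argument used in the lemma: for $s<t$ one computes $\bE[R_G(s)R_G(t)]$ by summing over pairs of edges and conditioning on their endpoint ranks, and the limit depends only on $(s/n,t/n)$. The two-parameter case is identical at the level of f.d.d.'s, since $R_G([nu],[nv])$ is again an edge sum indexed by the permutation.

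For tightness, I would bound the fourth central moment of increments. On the one-parameter process, $R_G([nv])-R_G([nu])$ is a signed edge-count, so $\bE[(R_G([nv])-R_G([nu]))^4]$ expands as a sum over quadruples of edges; grouping those quadruples by which of their vertices coincide yields an expression whose dominant terms are controlled by $|G|$, $\sum_i|G_i|^2$, and $\sum_e|A_e||B_e|$. The target estimate $\bE[(Z_G([nv])-Z_G([nu]))^4]\le C(v-u)^{1+\delta}$ for some $\delta>0$ then gives tightness on any compact subinterval of $(0,1)$ via Billingsley's moment criterion; restricting to $u\in[n_0/n,n_1/n]$ is harmless because the scan statistic \eqref{eq:onechpt} is defined on exactly that range. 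For the two-parameter field, the same edge-quadruple calculation, applied to the mixed rectangular increment $R_G([nu_2],[nv_2])-R_G([nu_1],[nv_2])-R_G([nu_2],[nv_1])+R_G([nu_1],[nv_1])$, combined with a Bickel--Wichura-type criterion, delivers tightness on a compact subset of $\{(u,v):0<u<v<1\}$.

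The main obstacle is the bookkeeping in the edge-quadruple counting that underlies both the Stein bound and the tightness moment estimate. The exponents $1.125$ in Condition \ref{cond:graph1} and $1.5(\alpha\wedge 1)$ in Condition \ref{cond:graph2} are not arbitrary; they come from matching the order of $\bV[R_G(t)]$ in Lemma \ref{lemma:onechptstat} against the orders of the various sums of products of $|A_e|$ and $|B_e|$ that appear when one expands third and fourth moments of $R_G$. Once those orders are pinned down and the contributing quadruple types are enumerated by vertex-overlap pattern, both the Stein estimate and the moment bound close up, and the rest of the argument is standard. The treatment near the boundary of the index set (where variances degenerate as $u\to 0,1$ or $v-u\to 0$) is handled by restricting to the compact ranges $[n_0/n,n_1/n]$ and $\{l_0/n\le v-u\le l_1/n\}$ on which the scan statistics are actually computed.
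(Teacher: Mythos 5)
Your overall architecture is sound, but it follows a genuinely different route from the paper, and one step of your plan needs more care than you give it. The paper does \emph{not} prove tightness at all: it reads ``converges to a Gaussian process'' as convergence of finite-dimensional distributions only, reducing the theorem to showing that $(Z_G(t_1),\dots,Z_G(t_K))$ is asymptotically multivariate normal. For that step the paper does not work directly under the permutation distribution. Instead it introduces a ``bootstrap'' distribution in which each $\pi(i)$ is drawn uniformly and independently from $\{1,\dots,n\}$, applies Stein's method for \emph{locally dependent} summands (Theorem 3.4 of Chen and Shao) to the joint vector of the edge-sums $Z_G^B(t_k)$ and the auxiliary counts $X^B(t_k)=(n^B(t_k)-t_k)/\sqrt{t_k(1-t_k/n)}$, and then recovers the permutation law by conditioning the limiting multivariate normal on $X^B(t_1)=\dots=X^B(t_K)=0$ (together with a lemma matching the bootstrap and permutation moments). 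The quantities $|A_e|$ and $|B_e|$ in Condition \ref{cond:graph2} arise there precisely as the sizes of the dependency neighborhoods $S_e$ and $T_e$ in the local-dependence bound, which is why the conditions close the argument with no further work.

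This is where your plan has a concrete gap rather than a mere difference of taste. Under the permutation distribution the edge summands $I_{g_{\pi(e_-)}(t)\neq g_{\pi(e_+)}(t)}$ are \emph{not} locally dependent -- sampling without replacement couples every pair of edges -- so a local-dependence Stein bound cannot be applied directly, and that is exactly the obstruction the paper's bootstrap-plus-conditioning device is designed to remove. Your substitute, an exchangeable pair obtained by transposing two positions of $\sigma$, is a legitimate way to handle a quadratic permutation statistic of the form $\sum_{e\in G}f(\sigma(e_-),\sigma(e_+))$, but the resulting error term involves $\mathbf{Var}\bigl(\bE[(W'-W)^2\mid\sigma]\bigr)$, and you assert without computation that this is controlled by $|G|$, $\sum_i|G_i|^2$ and $\sum_{e}|A_e||B_e|$ under Conditions \ref{cond:graph1}--\ref{cond:graph2}. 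Those conditions were calibrated for the local-dependence bound; it is plausible but not automatic that the exchangeable-pair variance term is dominated by the same graph functionals, and this is the step you would actually have to verify. Your tightness argument (fourth moments of increments plus Billingsley, and Bickel--Wichura for the two-parameter field) is extra relative to the paper and would, if carried out, strengthen the result to process-level convergence; but note that the increment variance of $Z_G$ is of order $|v-u|$ (the process is locally Brownian-like), so the $(v-u)^{1+\delta}$ bound only holds for $|v-u|\gtrsim 1/n$ and you need the usual discrete-time modification of the moment criterion near that scale.
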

The proof for this theorem utilizes the Stein's method \cite{chen2005stein}.  The whole proof is in Appendix \ref{sec:proofs-limit-distr}.

\begin{remark}
Condition \ref{cond:graph2} restricts both the size and number of hubs, which are nodes with a large degree.  The largest hub in the graph must have degree smaller than $n^{0.75(\alpha\wedge 1)}$ to satisfy the condition.  On the other hand, if we increase the number of edges in the graph (increase $\alpha$), the densest graph we could achieve under condition \ref{cond:graph2} has the number of edges of order less than $n^{1.125}$.  This is because when $|G|\sim\mathcal{O}(n^\alpha), \alpha>1$, $\sum_{e\in G}|A_e||B_e| \geq n^\alpha n^{\alpha-1} n^{2(\alpha-1)} = n^{4\alpha-3}$.
\end{remark}

\begin{lemma}\label{lemma:rho}
The covariance function of the Gaussian process $Z_G^\star(u), 0<u<1$, defined as $\rho^\star_G(u,v)  \overset{\Delta}{=} \cov(Z_G^\star(u), Z_G^\star(v))$ has the following expression:
   \begin{align}
    \label{eq:rho}
    \rho^\star_G(u,v) & = \frac{2(u\wedge v)^2(1-(u\vee v))^2|G|}{\sigma^\star_G(u) \sigma^\star_G(v)} \nonumber  \\
& \quad + \frac{(u\wedge v)(1-(u\vee v))(1-2u)(1-2v)\sum_i|G_i|^2 }{\sigma^\star_G(u) \sigma^\star_G(v)},
  \end{align}
where $$\sigma^\star_G(u) = \sqrt{2u^2(1-u)^2|G|+u(1-u)(1-2u)^2\sum_i|G_i|^2}.$$
\end{lemma}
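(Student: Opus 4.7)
My strategy is to compute $\cov(R_G(s), R_G(t))$ exactly by the same combinatorial decomposition used in Lemma \ref{lemma:onechptstat}, then normalize by the variances from that lemma and pass to the limit with $s = \lfloor nu \rfloor$, $t = \lfloor nv \rfloor$. Since each $Z_G(t)$ has unit variance under the permutation measure,
\[
\rho^\star_G(u,v) = \lim_{n \to \infty} \frac{\cov(R_G(\lfloor nu \rfloor), R_G(\lfloor nv \rfloor))}{\sqrt{\bV(R_G(\lfloor nu \rfloor))\,\bV(R_G(\lfloor nv \rfloor))}},
\]
so the task reduces to finding the leading-order asymptotics of the numerator and reading the denominator off Lemma \ref{lemma:onechptstat}.

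Writing $R_G(t) = \sum_{e \in G} I_e(t)$ with $I_e(t) = I_{g_i(t) \neq g_j(t)}$ for $e = (i,j)$, and assuming WLOG $s \leq t$, I expand
\[
\cov(R_G(s), R_G(t)) = \sum_{e, e' \in G} \bigl[\bP(I_e(s) = 1,\, I_{e'}(t) = 1) - p_1(s) p_1(t)\bigr]
\]
and partition the double sum into three classes by edge overlap: (a) $e = e'$, contributing $|G|$ ordered pairs; (b) $e$ and $e'$ share exactly one vertex, contributing $\sum_i |G_i|(|G_i|-1) = \sum_i |G_i|^2 - 2|G|$ ordered pairs; (c) $e, e'$ vertex-disjoint, contributing $|G|^2 + |G| - \sum_i |G_i|^2$ ordered pairs. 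In each class I compute the joint probability under the uniform permutation by enumerating how the positions of the involved 2, 3, or 4 vertices distribute across the intervals $[1,s]$, $(s,t]$, $(t,n]$: case (a) requires one endpoint in $[1,s]$ and the other in $(t,n]$; case (b), with edges $(i,j)$ and $(i,k)$, is handled by conditioning on which of the three intervals contains $\pi(i)$ and tallying the compatible positions for $\pi(j), \pi(k)$; case (c), with four distinct endpoints, reduces by a symmetry argument to one enumeration of 4-tuples of distinct positions with the required sidedness pattern. Collecting coefficients yields an exact identity
\[
\cov(R_G(s), R_G(t)) = q_1(s,t)|G| + q_2(s,t) \sum_i |G_i|^2 + q_3(s,t)|G|^2,
\]
completely parallel to Lemma \ref{lemma:onechptstat} and reducing to it when $s = t$.

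Substituting $s = \lfloor nu \rfloor$, $t = \lfloor nv \rfloor$ and letting $n \to \infty$, the coefficient $q_3$ turns out to be $O(1/n)$, owing to a cancellation between the disjoint-edges probability and $p_1(s) p_1(t)$, so the $|G|^2$ contribution drops out at leading order under Condition \ref{cond:graph1}. The surviving coefficients $q_1$ and $q_2$ have closed-form limits that, after the algebraic identity $1 - 2u - 2v + 4uv = (1-2u)(1-2v)$, match the numerator coefficients stated in the lemma. Dividing by $\sigma^\star_G(u) \sigma^\star_G(v)$, read off from Lemma \ref{lemma:onechptstat} in the same limit, finishes the proof.

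The main obstacle is case (c): because $P_{\text{disj}}(s,t)$ and $p_1(s) p_1(t)$ are each $\Theta(1)$, the enumeration must be expanded one order past leading to expose the cancellation that forces $q_3 = O(1/n)$; without this, the $|G|^2$ term would contaminate the limit. A secondary challenge is the algebraic simplification in which the combined "share-one-vertex" and "disjoint" contributions collapse into the clean factored $(1-2u)(1-2v)$ structure that appears in the stated formula.
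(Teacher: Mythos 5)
Your proposal is correct and follows essentially the same route as the paper's proof: the paper likewise computes $\bE(R_G(s)R_G(t))$ by decomposing pairs of edges into the identical / share-one-vertex / vertex-disjoint classes with the same occurrence counts and the same interval-placement probabilities $q_1,q_2,q_3$, and the $|G|^2$ contribution disappears in the limit through exactly the cancellation against $\bE(R_G(s))\bE(R_G(t))$ that you describe. The only (cosmetic) difference is that you center term-by-term inside the double sum while the paper subtracts the product of means at the end, and you are somewhat more explicit than the paper about why the near-cancellation in the disjoint-edge class must be tracked to the next order.
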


The lemma is proved through combinatorial analysis and the details are in Appendix \ref{sec:proof-lemma-rho}.

  $\rho^\star_G(u,v)$ is partially differentiable in $u (\neq v)$ to all orders. So, fixing $v$, the $k$-th order left- and right- derivatives in $u$ at $u=v$ are well defined for all $k$.  We denote the $k$-th left- and right- derivative by $f_{v,-}^{(k)}(0) (\equiv \lim_{u\nearrow v}\frac{ \partial \rho^\star_G(u,v)}{\partial u}) $ and $f_{v,+}^{(k)}(0)$, respectively.  One important property, which can be checked by tedious algebra, is that $f_{v,-}^\prime(0) = -f_{v,+}^\prime(0)$.




\subsection{Asymptotic Approximations to $p$-Values}
\label{sec:asympt-appr}


We now examine the asymptotic behavior of the two probabilities \eqref{eq:onechptp} and \eqref{eq:twochptp}.  Our approximations will involve the function $\nu(x)$ defined as
\begin{equation}\label{eq:nu}
\nu(x) = 2x^{-2} \exp\left\{ -2 \sum_{m=1}^\infty m^{-1} \Phi\left(-\frac{1}{2}xm^{1/2} \right)\right\}, \quad x>0. \end{equation}
This function is closely related to the Laplace transform of the overshoot over the boundary of a random walk.  A simple approximation given in \cite{siegmund2007statistics} is sufficient for numerical purpose:
\begin{equation}
  \label{eq:nu_approx}
  \nu(x) \approx \frac{(2/x)(\Phi(x/2)-0.5)}{(x/2)\Phi(x/2)+\phi(x/2)}.
\end{equation}
The following proposition is the foundation for obtaining analytic approximations to the probabilities.

\begin{prop}\label{thm:asym}
Assume that $n_0\rightarrow\infty$, $n_1\rightarrow\infty$, $b\rightarrow \infty$, and $n\rightarrow\infty$ in a way such that for some $0<x_0<x_1<1$ and $b_0>0$
$$n_i/n\rightarrow x_i ~(i=0,1) \text{ and } b/\sqrt{n}\rightarrow b_0.$$
Then as $n\rightarrow\infty$,
\begin{equation}
  \label{eq:onechptp1star}
  P\left(\max_{n_0\leq t\leq n_1} Z_G^\star(t/n)>b\right) \sim b \phi(b) \int_{x_0}^{x_1} h^\star_{r_0,r_1}(x)\nu\left(b_0\sqrt{2h^\star_{r_0,r_1}(x)}\right)dx,
\end{equation}
\begin{align}
   P& \left(\max_{ n_0\leq t_2-t_1\leq n_1 } Z_G^\star(t_1/n,t_2/n)>b\right) \label{eq:twochptp1star}  \\
&\quad \quad \quad \quad \sim b^3 \phi(b) \int_{x_0}^{x_1} \left(h^\star_{r_0,r_1}(x)\nu(b_0\sqrt{2h^\star_{r_0,r_1}(x)})\right)^2(1-x)dx \nonumber
\end{align}
where
$$h^\star_{r_0,r_1}(x) = \frac{1}{2x(1-x)} + \frac{2}{4x(1-x)+(1-2x)^2(r_1-4r_0)}, $$
with $r_0\overset{\Delta}{=}\lim_{n\rightarrow\infty} |G|/n$, and $r_1\overset{\Delta}{=}\lim_{n\rightarrow\infty} \sum_i|G_i|^2/|G|$.
\end{prop}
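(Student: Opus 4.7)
The plan is to follow Siegmund's change-of-measure approach to tail probabilities of Gaussian processes with cusp-type local covariance. The central observation is that the limit process $Z_G^\star$, being the scaling limit of a CUSUM-type statistic, has a covariance function that is smooth off the diagonal but has a kink at $u=v$ (reflected in Lemma \ref{lemma:rho} by $f'_{v,-}(0)=-f'_{v,+}(0)$); this cusp governs the supremum behavior and produces the $\nu$ factor via a random-walk overshoot correction.

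I would first extract the local covariance structure near the diagonal. Starting from the explicit formula for $\rho^\star_G(u,v)$ in Lemma \ref{lemma:rho}, elementary differentiation gives, as $s\to 0$,
$$1-\rho^\star_G(u,u+s)\;=\;h^\star_{r_0,r_1}(u)\,|s|+O(s^2),$$
so that the variance of the increment $Z_G^\star(u+s)-Z_G^\star(u)$ is $\sim 2h^\star_{r_0,r_1}(u)\,|s|$ and $Z_G^\star$ is locally a two-sided Brownian motion of diffusivity $h^\star_{r_0,r_1}(u)$. For the one-dimensional tail \eqref{eq:onechptp1star}, I would then apply an exponential tilt at each candidate $t$: define $dQ_t=\exp(bZ_G^\star(t/n)-b^2/2)\,dP$, so that under $Q_t$ the coordinate $Z_G^\star(t/n)$ is centered at $b$ and the nearby increments form an asymptotic Gaussian random walk. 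Decomposing
$$P\!\left(\max_{n_0\leq t\leq n_1} Z_G^\star(t/n)>b\right)=\sum_t P\!\left(Z_G^\star(t/n)>b,\; t\text{ is a local maximum}\right)$$
and inverting the likelihood ratio, the Gaussian marginal contributes $\phi(b)/b$ per summand, while the conditional probability that $t$ is the local argmax reduces to a random-walk ladder/overshoot problem whose solution is $\nu\big(b_0\sqrt{2h^\star_{r_0,r_1}(t/n)}\big)$ by Siegmund's overshoot identity. Replacing $\sum_t$ by $n\int_{x_0}^{x_1}dx$ and multiplying by the local Gaussian maximum density $\sim b$ yields the right-hand side of \eqref{eq:onechptp1star}.

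For the changed-interval tail \eqref{eq:twochptp1star}, the same machinery is applied to the 2D field $Z_G^\star(u,v)$. An analogue of Lemma \ref{lemma:rho} for this field would show that its covariance has cusps in both the $u$ and $v$ directions which decouple at leading order, so that after tilting at a candidate $(t_1,t_2)$ the local increments in the $t_1$ and $t_2$ directions form two asymptotically independent random walks. Each contributes a factor $h^\star_{r_0,r_1}(x)\,\nu\big(b_0\sqrt{2h^\star_{r_0,r_1}(x)}\big)$ with $x=(t_2-t_1)/n$, giving the square in the integrand. The exponent $b^3$ replaces $b$ because the local density of a 2D Gaussian maximum near level $b$ scales like $b^2$, and the factor $(1-x)$ counts the $\sim n(1-x)$ positions at which an interval of length $xn$ can sit inside $[0,n]$.

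The main technical obstacle is justifying the local random-walk approximation uniformly over the index region and, in the 2D case, rigorously establishing the decoupling of the two directions at the scale relevant to the overshoot (roughly $1/(bh^\star)$). I would handle this by expanding the 2D covariance to second order in both coordinates and verifying that the mixed partials at the cusp vanish to leading order, so that the tilted Gaussian field factorizes into independent one-parameter random walks modulo negligible error; once this factorization is in place, the remaining estimates reduce to standard Mill's-ratio and ladder-variable asymptotics for Gaussian random walks.
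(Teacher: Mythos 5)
Your proposal is essentially the paper's own argument: the paper likewise uses the Woodroofe--Siegmund change-of-measure method, extracting the local covariance slope $f'_{v,-}(0)=h^\star_{r_0,r_1}(v)$ from Lemma \ref{lemma:rho}, tilting the measure at each candidate index, reducing the conditional event to a Gaussian random-walk overshoot problem that produces the $\nu$ factor, and passing from the sum over indices to an integral (with the changed-interval case handled by showing the perturbations of the two endpoints are each distributed as the one-change-point increment process, giving the squared factor and the $(1-x)$ counting term). The only cosmetic difference is that you condition on the local argmax while the paper conditions on the first crossing, which leads to the same asymptotics.
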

The proof of this proposition utilizes Woodroofe's method  \citep{woodroofe1976frequentist, woodroofe1978large} and Siegmund's method \citep{siegmund1988approximate,siegmund1992tail}.  The whole proof is in Appendix \ref{sec:proof-prop-refthm}.

\begin{remark}
  Since $n\sum_i|G_i|^2 \geq (\sum_i|G_i|)^2 = 4 |G|^2$, $r_1-4r_0$ is always non-negative, and $$h_{r_0,r_1}^\star(x) \in \left[\frac{1}{2x(1-x)}, \frac{1}{x(1-x)} \right].$$
\end{remark}

Based on Proposition \ref{thm:asym}, when $\sum_{e\in G}|A_e||B_e| \sim o(n^{3/2})$, $|G|\sim \mathcal{O}(n)$, we approximate \eqref{eq:onechptp} and \eqref{eq:twochptp} by
\begin{align}
  \label{eq:onechptp1}
  P & \left(\max_{n_0\leq t\leq n_1} Z_G(t)>b\right)  \\
& \quad \quad \sim b \phi(b) \int_{n_0/n}^{n_1/n} h^\star_{\hat{r}_0, \hat{r}_1}(x)\nu\left(b_0\sqrt{2h^\star_{\hat{r}_0, \hat{r}_1}(x)}\right)dx, \nonumber \\
   P& \left(\max_{ n_0\leq t_2-t_1\leq n_1 } Z_G(t_1,t_2)>b\right) \label{eq:twochptp1}  \\
&\quad \quad \sim b^3 \phi(b) \int_{n_0/n}^{n_1/n} \left(h^\star_{\hat{r}_0, \hat{r}_1}(x)\nu(b_0\sqrt{2h^\star_{\hat{r}_0, \hat{r}_1}(x)})\right)^2(1-x)dx, \nonumber
\end{align}
where $\hat{r}_0 = |G|/n$, $\hat{r}_1 = \sum_i|G_i|^2/|G|$.

\begin{remark}
  In practice, when using \eqref{eq:onechptp1} and \eqref{eq:twochptp1} to approximate the tail probabilities, we use $h_G(n,x)$ in place of $h^\star_{\hat{r}_0,\hat{r}_1}(x)$, where $h_G(n,x)$ is the finite-sample equivalent of $h^\star_{\hat{r}_0,\hat{r}_1}(x)$ for the stochastic process $Z_G([nu])$.  That is
  \begin{align*}
    h^\star_{\hat{r}_0,\hat{r}_1}(x) & = \lim_{u\nearrow x}\frac{\partial \rho^\star_G(u,x)}{\partial u}, \\
    h_G(n,x) & = \frac{1}{n} \lim_{s\nearrow nx}\frac{\partial \rho_G(s,nx)}{\partial s},
  \end{align*}
where $\rho_G(s,t) \overset{\Delta}{=} \cov(Z_G(s), Z_G(t))$.  The explicit expression for $h_G(n,x)$ is
  \begin{equation}
\label{eq:rhoneg}
h_G(n,x) =  \frac{(n-1)[h_1(n,x) | G| + h_2(n,x) \sum_{i=1}^n |G_i|^2 - h_3(n,x) | G|^2] }{2u(1-u)[h_4(n,x) | G| + h_5(n,x) \sum_{i=1}^n |G_i|^2 - h_6(n,x) | G|^2 ]},
\end{equation}
where
\begin{align*}
  h_1(n,x) & = 4 n {\left(n - 1\right)} \left( - 2nx^2 + 2nx - 1\right) \\ 
h_2(n,x) & = n\left[n(n+1)(1-2x)^2 - 2(n-1)\right] \\ 
h_3(n,x) & = 4n\left[n(1-2x)^2 -1 \right] \\
h_4(n,x) & =4 n (n-1)(nx-1)(n-nx-1) \\
h_5(n,x) & = n  \left(n - 1\right) \left[ n^2(1-2x)^2 - n + 2\right]  \\
h_6(n,x) & = 4 n \left[ n^2(1-2x)^2 -2n(1-3x+3x^2) +1 \right] .
\end{align*}
It is easy to show that $\lim_{n\rightarrow\infty} h_G(n,x) = h^\star_{\hat{r}_0,\hat{r}_1}(x)$.
\end{remark}


\subsection{Skewness Correction}
\label{sec:skewness-correction}


 Convergence of $Z_G(t)$ to normal is slow if $t/n$ is close to 0 or 1.  Also, we may doubt the validity of conditions \ref{cond:graph1} and \ref{cond:graph2} if the graph contains large hubs.  For instance, as we show via simulation in Section \ref{sec:numerical-studies} and as detailed in \cite{radovanovic2010hubs}, MST and NNG constructed on high dimensional data can have large hubs under standard distance measures, such as $L_2$ and $L_1$.  
Then the statistic $Z_G(t)$ is left-skewed (see Figure \ref{fig:skew}, left panel), and the $p$-value approximations \eqref{eq:onechptp1} and \eqref{eq:onechptp2} overestimate the tail probabilities.   The other extreme is the MDP, where each node has degree 1 and the graph is completely ``flat''.  The statistic $Z_G(t)$ and the two ends is right-skewed (see Figure \ref{fig:skew}, right panel), and the $p$-value approximations \eqref{eq:onechptp1} and \eqref{eq:onechptp2} underestimate the true tail probabilities.

\begin{figure}[!htp]
  \centering
  \includegraphics[width=.4\textwidth]{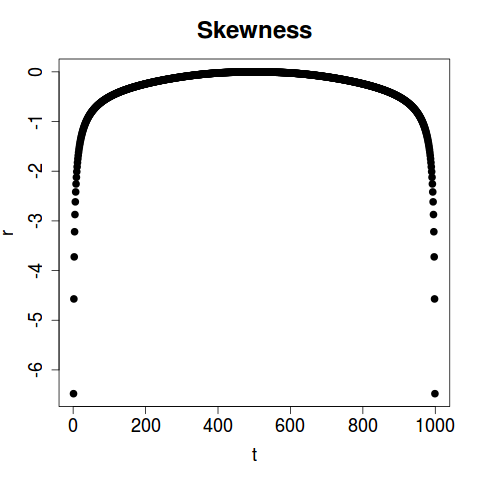}
  \includegraphics[width=.4\textwidth]{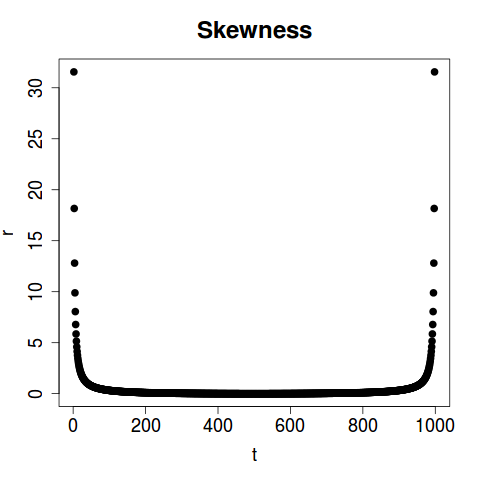}
\caption{Plots of skewness $\gamma_G(t) (=\bE(Z_G(t)))$ against $t$ with $G$ being MST (left panel) and MDP (right panel) constructed on Euclidean distance on a sequence of 1,000 points randomly generated from $\mathcal{N}(\mathbf{0}, I_{100})$. }
  \label{fig:skew}
\end{figure}


Skewness correction in tail probability approximation of change-point tests was first carried out in \cite{tu1999maximum} and later modified in \cite{tang2001mapping}.  Both of these papers applied a universal third moment correction.  In our problem, the extent of the skewness of $Z_G(t)$ depends on the value of $t$.  This can be seen clearly in Figure \ref{fig:skew}, as $Z_G(t)$ is more skewed towards the two ends. Since universal corrections are too crude, we adopt a different approach where the skewness correction adapts to the skewness of $Z_G(t)$ at each $t$.  In particular, we give a better approximation to the marginal probability, $\bP(Z_G(t)\in b+dx/b)$ in the single change-point case and $\bP(Z_G(t_1,t_2)\in b+dx/b)$ in the changed interval case, for which normal approximation was used in producing the approximations \eqref{eq:onechptp1} and \eqref{eq:onechptp2}.


Consider first the approximation of the marginal probability $\bP(Z\in b+dx/b)$, suppressing in our notation the dependence on $t$.  Since $Z$ has been properly standardized, $\bE(Z)=0, \bE(Z^2)=1.$  Let  $\gamma=\bE(Z^3)$ be the skewness term, which can be calculated explicitly by a combinatorial analysis described in Section \ref{sec:expl-expr-skewn} below.  We make use of the cumulant generating function $\psi(\theta) = \log \bE_P(e^{\theta Z})$.  By change of measure $dQ_\theta = e^{\theta Z - \psi(\theta)}dP$, we can approximate $\bP(Z\in b+dx/b)$ by
 $$\frac{1}{\sqrt{2\pi (1 + \gamma \theta_b) }} \exp(-\theta_b b - x \theta_b /b + \theta_b^2 (1+ \gamma \theta_b/3)/2),$$
where $\theta_b$ is chosen such that $\dot{\psi}(\theta_b) = b$.  By a third Taylor approximation, we get
$$\theta_b \approx (-1+\sqrt{1+2\gamma b})/\gamma.$$  More details are given in appendix \ref{sec:skewness}.

The $p$-value approximations, after correcting for the skewness of the marginal distribution of the two processes, become
  \begin{align}
    \label{eq:onechptp2}
    \bP & \left(\max_{n_0 \leq t \leq n_1} Z_G(t)>b \right)
\approx  b\phi(b)\int_{n_0/n}^{n_1/n}  S_G(nx)   h_G(n,x) \nu(\sqrt{2b_0^2 h_G(n,x)}) dx,
  \end{align}
where
\begin{equation}
  \label{eq:fS}
  S_G(t) = \frac{\exp\left(\frac{1}{2}(b-\hat{\theta}_{b,G}(t))^2 + \frac{1}{6}\gamma_G(t) \hat{\theta}_{b,G}(t)^3\right)}{\sqrt{1 + \gamma_G(t) \hat{\theta}_{b,G}(t) }},
\end{equation}
with $\gamma_G(t) = \bE[Z_G^3(t)]$ and $ \hat{\theta}_{b,G}(t) = (-1+\sqrt{1+2\gamma_G(t) b})/\gamma_G(t)$.

 \begin{align}
    \label{eq:twochptp2}
    & \bP \left(\max_{n_0\leq t_2-t_1\leq n_1} Z_G(t_1, t_2) > b  \right) \\
   & \approx \frac{\phi(b)}{b} \sum_{n_0\leq t_2-t_1\leq n_1} S_G(t_1,t_2)\left(b_0^2h_G(n,(t_2-t_1)/n) \nu(b_0\sqrt{2h_G(n,(t_2-t_1)/n)})\right)^2,  \nonumber
  \end{align}
where
\begin{equation}
  \label{eq:fS}
  S_G(t_1,t_2) = \frac{\exp\left(\frac{1}{2}(b-\hat{\theta}_{b,G}(t_1,t_2))^2 + \frac{1}{6}\gamma_G(t_1,t_2) \hat{\theta}_{b,G}(t_1,t_2)^3\right)}{\sqrt{1 + \gamma_G(t_1,t_2) \hat{\theta}_{b,G}(t_1,t_2) }}, \end{equation}
with $\gamma_G(t_1,t_2) = \bE[Z_G^3(t_1,t_2)]$ and $$ \hat{\theta}_{b,G}(t_1,t_2) = (-1+\sqrt{1+2\gamma_G(t_1,t_2) b})/\gamma_G(t_1,t_2).$$

\begin{remark}\label{remark:extra}
When the marginal distribution is highly left-skewed, it is possible that $\gamma(t)$ is too small for $1+2\gamma(t) b$ to be positive.  This does not mean that the solution to $\dot{\psi}_t(\theta) = b$ does not exist, but that higher moments are needed to get a good approximation.  In this paper, we apply an easy heuristic fix to this problem:  Since $1+2\gamma(t) b<0$ usually happens when $t/n$ is close to 0 or 1, within this problematic region $\theta_b(t)$ can be extrapolated using its values outside the region.  The details of the extrapolation method are given in Appendix \ref{sec:extr-skewn-corr}.  
\end{remark}

\subsection{Explicit Expressions for Skewness}
\label{sec:expl-expr-skewn}
We now derive an explicit expression for the skewness terms $\gamma_G(t)$ and $\gamma_G(t_1,t_2)$ that are used in (\ref{eq:onechptp2}) and (\ref{eq:twochptp2}).
We have
\begin{align*}
  \bE(Z_G^3(t)) & = \frac{\bE^3(R_G(t)) + 3\bE(R_G(t))\bV(R_G(t)) - \bE(R^3(t))}{(\bV(R_G(t)))^{3/2}}, \\
  \bE(Z_G^3(t_1,t_2)) & = \frac{\bE^3(R_G(t_1,t_2)) + 3\bE(R_G(t_1,t_2))\bV(R_G(t_1,t_2)) - \bE(R^3(t_1,t_2))}{(\bV(R_G(t_1,t_2)))^{3/2}}.
\end{align*}
The explicit expressions of $\bE(R_G(t))$, $\bV(R_G(t))$, $\bE(R_G(t_1,t_2))$, and $\bV(R_G(t_1,t_2)$ are given in Lemma \ref{lemma:onechptstat} and Lemma \ref{lemma:twochptstat}.  The explicit expressions of $\bE^3(R_G(t))$ and $\bE^3(R_G(t_1,t_2))$
are given in the following lemma.

\begin{lemma}\label{lemma:gammat}
\allowdisplaybreaks
\begin{align*}
  \bE(R_G^3(t)) & = p_1(t) | G| + \frac{3}{2}p_1(t) \sum_i |G_i|(|G_i|-1) \\
  & \quad + 3p_2(t)\left(| G|(| G|-1) + \frac{1}{2} \sum_i |G_i|(|G_i|-1)(| G|-|G_i|) \right) \\
  & \quad - 3p_2(t) \left( \sum_i |G_i|(|G_i|-1) + \sum_{(i,j)\in G} (|G_i|-1)(|G_j|-1) \right) \\
  & \quad + p_3(t) \sum_i |G_i|(|G_i|-1)(|G_i|-2)  \\
  & \quad + p_4(t) \left( | G|(| G|-1)(| G|-2) + 6 \sum_{(i,j)\in G} (|G_i|-1)(|G_j|-1)\right) \\
  & \quad - 2 p_4(t) \sum_{(i,j)\in G}|\{k:(i,k),(j,k)\in G\}| \\
  & \quad - p_4(t)\left(\sum_i |G_i|(|G_i|-1)(3| G|-2|G_i|-2)  \right).
\end{align*}
The functions $p_1(t)$ and $p_2(t)$ are given in Lemma \ref{lemma:onechptstat}, and
\begin{align*}
  p_3(t) & := \frac{t(n-t)((n-t-1)(n-t-2) + (t-1)(t-2))}{n(n-1)(n-2)(n-3)}, \\
  p_4(t) & := \frac{8t(t-1)(t-2)(n-t)(n-t-1)(n-t-2)}{n(n-1)(n-2)(n-3)(n-4)(n-5)}.
\end{align*}
Also
\begin{equation}
  \label{eq:ER3}
  \bE^3(R_G(t_1,t_2)) = \bE^3(R_G(t_2-t_1)).
\end{equation}
\end{lemma}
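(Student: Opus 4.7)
My plan is to expand $R_G(t)^3 = \bigl(\sum_{e\in G} X_e\bigr)^3$ with $X_e := I_{g_{e_-}(t)\neq g_{e_+}(t)}$, so that
\[
\bE\bigl[R_G^3(t)\bigr] \;=\; \sum_{(e_1,e_2,e_3)\in G^3} \bE[X_{e_1}X_{e_2}X_{e_3}],
\]
the outer sum being over ordered triples of edges with repetition.  Because the permutation null is invariant under vertex relabeling, $\bE[X_{e_1}X_{e_2}X_{e_3}]$ depends only on the isomorphism type of the multigraph formed by the three (possibly repeated) edges.  The strategy is to partition the triples by this isomorphism type, compute each type's probability by direct counting of endpoint placements before/after $t$, and express the number of each type in $G$ through the graph invariants $|G|$, $\{|G_i|\}_i$, $\sum_{(i,j)\in G}(|G_i|-1)(|G_j|-1)$, and the triangle-related count $\sum_{(i,j)\in G}|\{k:(i,k),(j,k)\in G\}|$.

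The partition is organized by the number of distinct edges.  All three equal ($|G|$ triples, probability $p_1(t)$) yields $p_1(t)|G|$.  Exactly two equal: the three orderings for which position is the singleton combine with the two subcases (the odd edge either shares a vertex with the repeated edge or is disjoint, giving $3$-vertex probability $\tfrac12 p_1(t)$ or $4$-vertex probability $p_2(t)$) to produce the $\tfrac32 p_1(t)\sum_i|G_i|(|G_i|-1)$ term together with the $3p_2(t)[|G|(|G|-1)-\sum_i|G_i|(|G_i|-1)]$ portion of the stated formula.  When all three edges are distinct, the 3-edge subgraph is isomorphic to one of: triangle, star $K_{1,3}$, path $P_4$, cherry plus a disjoint edge, or three mutually disjoint edges.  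A direct two-side coloring argument gives probabilities $0$ (the triangle, because no $2$-coloring makes three vertices pairwise different), $p_3(t)$ (star: center on one side, all leaves on the other), $\tfrac12 p_2(t)$ for both the path $P_4$ and the cherry-plus-disjoint-edge (the latter via a $5$-vertex hypergeometric sum that collapses using $(t-2)+(n-t-2)=n-4$), and $p_4(t)$ (three disjoint edges), respectively.

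The combinatorial count of each 3-distinct pattern in $G$ is then routine.  Stars centered at $i$ give $|G_i|(|G_i|-1)(|G_i|-2)$ ordered triples, yielding the $p_3(t)$ term.  The path and cherry-plus-edge configurations, sharing probability $\tfrac12 p_2(t)$, are most naturally enumerated together by counting ``(ordered cherry at some vertex $i$, third edge not incident to $i$)'', which gives the crude $\sum_i|G_i|(|G_i|-1)(|G|-|G_i|)$; this overcounts true paths (each path admits two cherry decompositions, one at each of its inner vertices) while counting each cherry-plus-disjoint triple exactly once, so the correction is a term of the form $\sum_{(i,j)\in G}(|G_i|-1)(|G_j|-1)$ and rearranges to the stated $p_2(t)$ block.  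The three-disjoint-edge count is obtained by inclusion-exclusion from the crude total $|G|(|G|-1)(|G|-2)$ of ordered distinct triples, subtracting stars, paths, cherry-plus-edge, and triangles; the triangle quantity $\sum_{(i,j)\in G}|\{k:(i,k),(j,k)\in G\}|$ enters here with the stated negative coefficient in the $p_4(t)$ block precisely because triangles themselves carry probability zero and so must be excised from the $p_4$ coefficient.

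For the changed-interval identity, the same triple-sum decomposition applies verbatim with $X_e := I_{g_{e_-}(t_1,t_2)\neq g_{e_+}(t_1,t_2)}$.  Under the permutation null the joint probability of any prescribed placement of a fixed set of vertices into $(t_1,t_2]$ versus its complement depends only on the interval length $t_2-t_1$ and on $n-(t_2-t_1)$, so every pattern-probability equals the single change-point expression with $t$ replaced by $t_2-t_1$, giving $\bE[R_G^3(t_1,t_2)] = \bE[R_G^3(t_2-t_1)]$.  The main obstacle is algebraic bookkeeping: matching the naive ``pattern count $\times$ pattern probability'' total to the compact form in the lemma requires two nested inclusion-exclusion arguments---one to disentangle paths from cherry-plus-edge configurations inside a common cherry-and-extra-edge enumeration, another to extract the three-disjoint-edge count from the crude distinct-triple total---and it is in these rearrangements that sign errors and overcounting are easiest to introduce.
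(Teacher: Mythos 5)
Your proposal follows essentially the same route as the paper's proof: expanding $\bE(R_G^3(t))$ over ordered triples of edges, classifying them into the eight configurations (repeated edges, star, path, triangle, cherry plus disjoint edge, three disjoint edges), computing each configuration's probability by direct placement counting (your values $p_1$, $\tfrac12 p_1$, $p_2$, $p_3$, $\tfrac12 p_2$, $0$, $p_4$ all agree with the paper's), and obtaining the configuration counts by inclusion--exclusion in the same graph invariants, with the changed-interval identity following because the probabilities depend only on the two group sizes. The plan is correct, though it is an outline rather than a worked-out computation; the only imprecision is that your crude cherry-plus-extra-edge count also sweeps in triangles, which happen to be cancelled exactly by the $\sum_{(i,j)\in G}(|G_i|-1)(|G_j|-1)$-type correction, a detail worth making explicit when carrying out the bookkeeping you flag.
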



\begin{proof}
   For the un-centered process $R_G(t)$,
  $$\bE(R_G^3(t)) = \sum_{(i,j),(k,l),(u,v)\in G} \bP(g_i(t)\neq g_j(t), g_k(t)\neq g_l(t), g_u(t) \neq g_v(t)).$$
There are in total eight different configurations for three edges randomly chosen (with replacement) from the graph (see Figure \ref{fig:3edges} for illustrations).  We derive $\bP(g_i(t)\neq g_j(t), g_k(t)\neq g_l(t), g_u(t) \neq g_v(t)) \overset{\Delta}{=} P_3$ separately for each configuration.  

\begin{figure}[!htp]
  \centering
  \includegraphics[width=\textwidth]{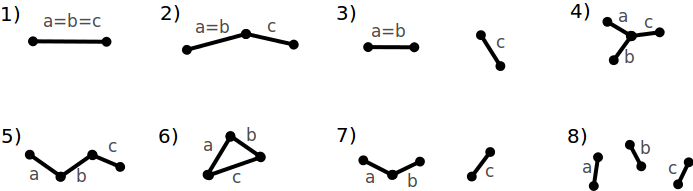}
  \caption{Eight configurations of three edges ($a,b,c$) randomly chosen, with replacement, from the graph.}
  \label{fig:3edges}
\end{figure}

\begin{enumerate}[ 1)]
\item The three edges are actually the same edge.
$$P_3 = \bP(g_i(t)\neq g_j(t)) = \frac{2t(n-t)}{n(n-1)}.$$
\item Two edges are the same and share one node with the third edge.
$$P_3 = \bP(g_i(t)\neq g_j(t), g_i(t)\neq g_k(t)) = \frac{t(n-t)}{n(n-1)}.$$
\item Two edges are the same and do not share any node with the third edge.
$$P_3 = \bP(g_i(t)\neq g_j(t), g_k(t)\neq g_l(t)) = \frac{4t(t-1)(n-t)(n-t-1)}{n(n-1)(n-2)(n-3)}.$$
\item The three edges share one node, and neither of them share the other node (star-shaped).
  \begin{align*}
    P_3 & = \bP(g_i(t)\neq g_j(t), g_i(t)\neq g_k(t), g_i(t)\neq g_l(t)) \\
    & = \frac{t(n-t)((n-t-1)(n-t-2) + (t-1)(t-2))}{n(n-1)(n-2)(n-3)}.
  \end{align*}
\item One edge share one node with another edge and share the other node with the third edge.  No node sharing between the second and the third edge (linear chain).
  \begin{align*}
    P_3 & = \bP(g_i(t)\neq g_j(t), g_i(t)\neq g_k(t), g_j(t)\neq g_l(t)) \\
    & = \frac{2t(t-1)(n-t)(n-t-1)}{n(n-1)(n-2)(n-3)}.
  \end{align*}
\item The three edges form a triangle.
  $$P_3 = \bP(g_i(t)\neq g_j(t), g_j(t)\neq g_k(t), g_k(t)\neq g_i(t)) = 0.$$
\item Two edges share one node, and share no node with the third edge.
  \begin{align*}
    P_3 & = \bP(g_i(t)\neq g_j(t), g_i(t)\neq g_k(t), g_u(t)\neq g_v(t)) \\
    & = \frac{2t(t-1)(n-t)(n-t-1)}{n(n-1)(n-2)(n-3)}.
  \end{align*}
\item No pair of the three edges share any node.
  \begin{align*}
    P_3 & = \bP(g_i(t)\neq g_j(t), g_k(t)\neq g_l(t), g_u(t) \neq g_v(t)) \\
    & = \frac{8t(t-1)(t-2)(n-t)(n-t-1)(n-t-2)}{n(n-1)(n-2)(n-3)(n-4)(n-5)}.
  \end{align*}
\end{enumerate}

Among all $|G|^3$ possible ways of randomly selecting the three edges, the number of occurrences for each of the configuration are:
\begin{enumerate}[ 1)]
\item $|G|$
\item $3\sum_i |G_i|(|G_i|-1)$
\item $3|G|(|G|-1) - 3 \sum_i |G_i|(|G_i|-1)$
\item $\sum_i |G_i|(|G_i|-1)(|G_i|-2)$
\item $6 \sum_{(i,j)\in G} (|G_i|-1)(|G_j|-1) - 6\sum_{(i,j)\in G}|\{k:(i,k),(j,k)\in G\}|$
\item $2\sum_{(i,j)\in G}|\{k:(i,k),(j,k)\in G\}|$
\item $3 \sum_i |G_i|(|G_i|-1)(|G|-|G_i|) + 6\sum_{(i,j)\in G}|\{k:(i,k),(j,k)\in G\}| - 12 \sum_{(i,j)\in G} (|G_i|-1)(|G_j|-1)$
\item $|G|(|G|-1)(|G|-2) + 6 \sum_{(i,j)\in G} (|G_i|-1)(|G_j|-1) - 2\sum_{(i,j)\in G}|\{k:(i,k),(j,k)\in G\}| - \sum_i |G_i|(|G_i|-1)(3|G|-2|G_i|-2)$
\end{enumerate}
The lemma follows by summing up all of the probabilities as enumerated above.

It is not hard to observe that the number of occurrences only depends on the sizes of the two groups, so $ \bE^3(R_G(t_1,t_2)) = \bE^3(R_G(t_2-t_1))$.

\end{proof}

The terms in $\bE(R_G^3(t))$ can be rearranged and written in other forms.  The expansion shown in Lemma \ref{lemma:gammat} makes it easier to understand the origin of each term in the context of the proof.  If we examine the expression, we would find $\gamma_G(t)$ are fully determined by $t$, $n$, $|G|$, $\sum_i |G_i|^2$, $\sum_i |G_i|^3$, $\sum_{(i,j)\in G}(|G_i|-1)(|G_j|-1)$ and the number of triangles in $G$.

For MDP, only configurations 1), 3), and 8) are possible, and the number of occurrences of each case is
\begin{description}
\item[1)] $|G| = n $
\item[3)] $3|G|(|G|-1) = 3n(n-1)$
\item[8)] $|G|(|G|-1)(|G|-2) = n(n-1)(n-2)$
\end{description}
and its $\bE(R_G^3(t))$ has a much simpler expression:
\begin{align*}
  \bE(R_G^3(t)) & = p_1(t)n + p_2(t) 3n(n-1) + p_4(t)  n(n-1)(n-2) \\
  & = (p_1(t)-3p_2(t)+2p_4(t))n + 3(p_2(t)-p_4(t))n^2 + p_4(t) n^3.
\end{align*}



\subsection{Numerical Studies}
\label{sec:numerical-studies}

In this section, we check the analytic approximations to $p$-values, both assuming Gaussianity and after skewness correction, through numerical studies.  We examine both the accuracy of the critical value and the coverage probability.

\subsubsection{Critical Value}
\label{sec:critical-value-1}

We compare the critical values obtained from (\ref{eq:onechptp1}), (\ref{eq:onechptp2}), (\ref{eq:twochptp1}), and (\ref{eq:twochptp2}) to those obtained from doing 10,000 permutations, under various simulation settings.   In each simulation, \emph{iid} sequences of length 1000 were generated from a given distribution $F_0$ in $\mathbb{R}^d$.  MST, MDP, and NNG were constructed on the data based on Euclidean distance.  For each graph, analytic and permutation critical values were computed for both 0.05 and 0.01 $p$-value thresholds.



We first check the single change-point alternative.  Tables \ref{tab:mst_05} - \ref{tab:mdp} show the results for the single change-point alternative with the underlying graph being MST or MDP.  Results for when the underlying graph is NNG, shown in Appendix \ref{sec:single-change-point-1},  are similar to those for when the graph is MST.  In the column headers, ``A1'' denotes critical values obtained assuming Gaussianity (\ref{eq:onechptp1}), ``A2'' denotes critical values obtained after correcting for skewness (\ref{eq:onechptp2}), and ``Per'' denotes critical values obtained by 10,000 permutations (can be viewed as the true $p$-value).

  Six different choices for $F_0$ are shown, for two different distributions (standard normal and exponential with mean 1), each in three different dimensions (d=1,10, or 100).  For $d=10$ or 100, each element of the data vector is generated independently from the given distribution.  The analytic approximations depend also on constraints on the region in which the change-point is searched.  These are reflected in the choice of $n_0$ and $n_1$ ($l_0$ and $l_1$ for the changed interval alternative).  To make things simple, we set $n_1=n-n_0$.  In general, the analytic approximations become less precise when the minimum segment length decreases.  This is mainly because the Gaussian approximation (and skewness correction) to the distribution of $Z(t)$ degrades for small samples.  

Both the analytic and permutation $p$-values depend on certain characteristics of the graph's structure.  The structures of  MST (for $d\geq 2$) and NNG  depend on the underlying data set, and thus the critical values vary by simulation run.  In such cases, we show results for 5 randomly simulated sequences.  Two characteristics of the graph are also shown for each simulated sequence: The sum of squared node degrees ($\sum_i |G_i|^2$) and the maximum node degree ($d_{\text{max}}$).  These quantities give some intuition on the size and density of hubs in the graph.  Since the MST for any one-dimensional data set is a chain, in this case the critical values do not change with simulation run for each setting of the parameters.

The structure of the MDP graph is always the same for all data sets.  Therefore, the critical values for MDP-based scan depend only on $n, n_0$, $n_1$  ($l_0$ and $l_1$ for the changed interval alternative).  The critical values for MDP-based scan do not depend on the dimension or the underlying distribution of the data.  As emphasized in \citet{rosenbaum2005exact}, statistics based on the MDP is truly a distribution free method, which can sometimes be desirable. 

We can see from the tables that the analytic approximations after skewness correction perform much better than the analytic approximations under Gaussian assumption, especially when dimension increases.  The accuracy of the skew-corrected approximation does not degrade significantly with dimension.  For the statistics based on MST and NNG, the skew-corrected approximations remain accurate for window sizes as small as 25 at both 0.05 and 0.01 significance levels.  For the statistics based on MDP, the skew-corrected approximations work well when the minimum window size is as small as 25 at 0.05 significance level, and 50 at 0.01 significance level.

There is not much difference between results for simulations based on normal and those based on exponential distributions.  The main factor influencing approximation accuracy, other than the minimum window size, is the dimension ($d$).  As dimension increases, the graph becomes more ``star-shaped'' as reflected by the increase in both $\sum |G_i|^2$ and $d_{\text{max}}$.  As shown in Section \ref{sec:skewness-correction}, skewness and other higher order moments of $Z_G(t)$ are a function of polynomials of the node degrees.  Thus the increase in the number and density of hubs makes skewness correction important in high dimensions.  This also indicates that a different distance measure other than Euclidean distance in high dimension to better distinguish different distributions.

For the changed interval alternative, the results are similar, with details in Appendix \ref{sec:chang-interv-altern}.

\begin{table}[!htp]
  \caption{Critical values for the single change-point scan statistic based on MST at 0.05 significance level.  $n=1000$. ``A1'' denotes critical values obtained assuming Gaussianity (\ref{eq:onechptp1}), ``A2'' denotes critical values obtained after correcting for skewness (\ref{eq:onechptp2}), and ``Per'' denotes critical values obtained by 10,000 permutations.}
  \label{tab:mst_05}
  \centering
%

\begin{tabular}{c|ccc|ccc|ccc|cc}
\hline \hline
& \multicolumn{9}{|c|}{Critical Values} & \multicolumn{2}{|c}{Graph} \\ \cline{2-12}
& \multicolumn{3}{|c|}{$n_0=100$} & \multicolumn{3}{|c|}{$n_0=50$} & \multicolumn{3}{|c|}{$n_0=25$} & & \\ \cline{2-10}
  & A1 & A2 & Per & A1 & A2 & Per & A1 & A2 & Per & $\sum |G_i|^2$ & $d_{\text{max}}$ \\ \hline \hline
  $d=1$ & 2.98 & 3.05 & 3.04 & 3.08 & 3.22 & 3.23 & 3.14 & 3.39 & 3.49 & 4994 & 2 \\  \hline
  & 2.92 & 2.90 & 2.90 & 3.00 & 2.95 & 2.95 & 3.05 & 2.98 & 2.96 &  5430 &  8 \\
N(0,1) & 2.92 & 2.89 & 2.89 & 3.00 & 2.95 & 2.92 & 3.05 & 2.97 & 2.95 &  5438 &  7 \\
$d=10$ & 2.92 & 2.90 & 2.87 & 3.00 & 2.95 & 2.94 & 3.05 & 2.98 & 2.96 &  5394 &  7 \\
  & 2.92 & 2.89 & 2.86 & 3.00 & 2.94 & 2.90 & 3.05 & 2.97 & 2.92 &  5534 &  8 \\
  & 2.92 & 2.89 & 2.89 & 3.00 & 2.95 & 2.92 & 3.05 & 2.97 & 2.95 &  5460 &  7 \\ \hline
  & 2.93 & 2.91 & 2.89 & 3.01 & 2.97 & 2.96 & 3.06 & 3.00 & 2.97 &  5064 &  7 \\
Exp(1) & 2.93 & 2.91 & 2.88 & 3.01 & 2.97 & 2.92 & 3.06 & 3.00 & 2.95 &  5082 &  7 \\
$d=10$ & 2.93 & 2.91 & 2.91 & 3.01 & 2.98 & 2.97 & 3.06 & 3.01 & 3.00 &  5028 &  5 \\
  & 2.93 & 2.91 & 2.87 & 3.01 & 2.98 & 2.93 & 3.06 & 3.01 & 2.97 &  5028 &  6 \\
  & 2.93 & 2.91 & 2.88 & 3.01 & 2.96 & 2.92 & 3.06 & 2.98 & 2.94 &  5180 &  9 \\ \hline
  & 2.86 & 2.69 & 2.68 & 2.94 & 2.70 & 2.68 & 3.00 & 2.70 & 2.68 & 12454 & 38 \\
N(0,1) & 2.86 & 2.72 & 2.72 & 2.95 & 2.74 & 2.72 & 3.00 & 2.74 & 2.72 & 10904 & 38 \\
$d=100$ & 2.86 & 2.70 & 2.66 & 2.94 & 2.71 & 2.66 & 3.00 & 2.71 & 2.66 & 11294 & 42 \\
  & 2.87 & 2.72 & 2.68 & 2.95 & 2.74 & 2.68 & 3.00 & 2.74 & 2.68 & 10690 & 40 \\
  & 2.86 & 2.69 & 2.65 & 2.94 & 2.70 & 2.65 & 3.00 & 2.70 & 2.65 & 11722 & 40 \\ \hline
  & 2.85 & 2.64 & 2.60 & 2.93 & 2.65 & 2.60 & 2.99 & 2.65 & 2.60 & 14706 & 56 \\
Exp(1) & 2.87 & 2.77 & 2.76 & 2.95 & 2.80 & 2.77 & 3.01 & 2.81 & 2.77 &  9608 & 25 \\
$d=100$ & 2.84 & 2.62 & 2.53 & 2.93 & 2.62 & 2.53 & 2.99 & 2.62 & 2.53 & 15536 & 77 \\
  & 2.86 & 2.74 & 2.69 & 2.95 & 2.76 & 2.69 & 3.00 & 2.76 & 2.69 & 10890 & 30 \\
  & 2.86 & 2.72 & 2.66 & 2.94 & 2.73 & 2.66 & 3.00 & 2.73 & 2.66 & 12018 & 39 \\
\hline \hline
\end{tabular}

\end{table}

\begin{table}[!htp]
  \caption{Critical values for the single change-point scan statistic based on MST at 0.01 significance level.  $n=1000$.}
  \label{tab:mst_01}
  \centering
%

\begin{tabular}{c|ccc|ccc|ccc|cc}
\hline \hline
& \multicolumn{9}{|c|}{Critical Values} & \multicolumn{2}{|c}{Graph} \\ \cline{2-12}
& \multicolumn{3}{|c|}{$n_0=100$} & \multicolumn{3}{|c|}{$n_0=50$} & \multicolumn{3}{|c|}{$n_0=25$} & & \\ \cline{2-10}
  & A1 & A2 & Per & A1 & A2 & Per & A1 & A2 & Per & $\sum |G_i|^2$ & $d_{\text{max}}$ \\ \hline \hline
    $d=1$ & 3.52 & 3.62 & 3.67 & 3.60 & 3.81 & 3.85 & 3.65 & 4.05 & 4.31 & 4994 & 2 \\  \hline
  & 3.47 & 3.43 & 3.46 & 3.53 & 3.46 & 3.48 & 3.57 & 3.48 & 3.48 &  5430 &  8 \\
N(0,1) & 3.47 & 3.43 & 3.44 & 3.53 & 3.46 & 3.46 & 3.57 & 3.47 & 3.46 &  5438 &  7 \\
$d=10$ & 3.47 & 3.43 & 3.44 & 3.53 & 3.46 & 3.47 & 3.58 & 3.48 & 3.48 &  5394 &  7 \\
  & 3.47 & 3.42 & 3.38 & 3.53 & 3.46 & 3.40 & 3.57 & 3.47 & 3.41 &  5534 &  8 \\
  & 3.47 & 3.43 & 3.44 & 3.53 & 3.46 & 3.46 & 3.57 & 3.47 & 3.46 &  5460 &  7 \\ \hline
  & 3.48 & 3.45 & 3.40 & 3.54 & 3.49 & 3.44 & 3.58 & 3.50 & 3.45 &  5064 &  7 \\
Exp(1) & 3.48 & 3.44 & 3.40 & 3.54 & 3.48 & 3.42 & 3.58 & 3.50 & 3.44 &  5082 &  7 \\
$d=10$ & 3.48 & 3.45 & 3.47 & 3.54 & 3.49 & 3.49 & 3.58 & 3.51 & 3.52 &  5028 &  5 \\
  & 3.48 & 3.45 & 3.41 & 3.54 & 3.49 & 3.44 & 3.58 & 3.51 & 3.46 &  5028 &  6 \\
  & 3.48 & 3.44 & 3.49 & 3.54 & 3.47 & 3.53 & 3.58 & 3.48 & 3.54 &  5180 &  9 \\ \hline
  & 3.42 & 3.17 & 3.19 & 3.48 & 3.17 & 3.19 & 3.53 & 3.17 & 3.19 & 12454 & 38 \\
N(0,1) & 3.42 & 3.21 & 3.24 & 3.49 & 3.21 & 3.24 & 3.53 & 3.21 & 3.24 & 10904 & 38 \\
$d=100$ & 3.42 & 3.19 & 3.17 & 3.49 & 3.19 & 3.17 & 3.53 & 3.19 & 3.17 & 11294 & 42 \\
  & 3.42 & 3.22 & 3.18 & 3.49 & 3.22 & 3.18 & 3.53 & 3.22 & 3.18 & 10690 & 40 \\
  & 3.42 & 3.18 & 3.21 & 3.49 & 3.18 & 3.21 & 3.53 & 3.18 & 3.21 & 11722 & 40 \\ \hline
  & 3.41 & 3.14 & 3.12 & 3.48 & 3.14 & 3.12 & 3.52 & 3.14 & 3.12 & 14706 & 56 \\
Exp(1) & 3.43 & 3.28 & 3.26 & 3.49 & 3.28 & 3.26 & 3.54 & 3.28 & 3.26 &  9608 & 25 \\
$d=100$ & 3.41 & 3.15 & 3.10 & 3.48 & 3.15 & 3.10 & 3.52 & 3.15 & 3.10 & 15536 & 77 \\
  & 3.42 & 3.24 & 3.21 & 3.49 & 3.24 & 3.21 & 3.53 & 3.24 & 3.21 & 10890 & 30 \\
  & 3.42 & 3.22 & 3.13 & 3.48 & 3.22 & 3.13 & 3.53 & 3.22 & 3.13 & 12018 & 39 \\
\hline \hline
\end{tabular}

\end{table}

\begin{table}[!htp]
  \caption{Critical values for the single change-point scan statistic based on MDP.  $n=1000$.}
  \label{tab:mdp}
  \centering
significance level = 0.05
  \begin{tabular}{c|cc|cc|cc|cc}
\hline \hline
 & & & \multicolumn{2}{|c|}{$d=1$} & \multicolumn{2}{|c|}{$d=10$} & \multicolumn{2}{|c}{$d=100$}   \\ \hline
$n_0$ & A1 & A2 & N(0,1) & Exp(1) & N(0,1) & Exp(1) & N(0,1) & Exp(1) \\ \hline \hline
200 & 2.82 & 2.84 & 2.83 & 2.81 & 2.85 & 2.85 & 2.85 & 2.83 \\ \hline
100 & 2.98 & 3.07 & 3.06 & 3.04 & 3.08 & 3.08 & 3.07 & 3.05 \\ \hline
50 & 3.08 & 3.27 & 3.30 & 3.29 & 3.35 & 3.36 & 3.35 & 3.31 \\ \hline
25 & 3.14 & 3.48 & 3.54 & 3.58 & 3.57 & 3.66 & 3.60 & 3.60  \\ \hline \hline
  \end{tabular}

\bigskip

significance level = 0.01
  \begin{tabular}{c|cc|cc|cc|cc}
\hline \hline
 & & & \multicolumn{2}{|c|}{$d=1$} & \multicolumn{2}{|c|}{$d=10$} & \multicolumn{2}{|c}{$d=100$}   \\ \hline
$n_0$ & A1 & A2 & N(0,1) & Exp(1) & N(0,1) & Exp(1) & N(0,1) & Exp(1) \\ \hline \hline
200 & 3.38 & 3.43 & 3.39 & 3.38 & 3.44 & 3.46 & 3.45 & 3.44 \\ \hline
100 & 3.52 & 3.66 & 3.66 & 3.64 & 3.67 & 3.75 & 3.67 & 3.59 \\ \hline
50 & 3.60 & 3.90 & 3.99 & 3.99 & 3.94 & 4.05 & 3.95 & 3.99 \\ \hline
25 & 3.65 & 4.21 & 4.61 & 4.65 & 4.78 & 4.72 & 4.59 & 4.81 \\ \hline \hline
  \end{tabular}

\end{table}

\subsubsection{Coverage Probability}
\label{sec:coverage-probability-1}

For both widely used significance levels, 0.05 and 0.01, we also check the coverage probability of the $p$-value approximations.  From the previous section on checking critical values, we see that the underlying distribution of the data does not affect the result, so we generate data only from the multivariate Gaussian distribution.  We now expand our study to a denser graphs.  In each simulation run, a sequence of length 1,000 were generated from $\mathcal{N}(\mathbf{0},I_d)$. 1,3,5-MST/MDP/NNG were constructed on the data based on Euclidean distance.  1-MST is the same as MST, which we also call the 1st MST.  The 2nd MST is defined as a spanning tree that is orthogonal to the 1st MST (not using any edge in the 1st MST) minimizing the total distance over the edges, and the 2-MST is defined as the union of 1st and 2nd MST.  Recursively, the $k$th MST is the spanning tree that is orthogonal to all $i$ MSTs ($i<k$) minimizing the total distance over edges, and the $k$-MST is defined as the union of all of the $i$th MSTs, $i=1,\dots,k$.  Similar definitions apply to $k$-MDP and $k$-NNG.  

 In each simulation run, we calculated the critical value based on the $p$-value approximation for a given significance level (0.05 or 0.01), and used this critical value as the actual threshold.  Then, we did 10,000 permutations and calculated the percentage of the permutations with the scan statistic larger than the threshold.  This percentage is viewed as the coverage probability.  We checked the coverage probability for data in low dimension ($d=10$) and high dimension ($d=100$), with 100 simulation runs for each.  Figures \ref{fig:cov_50_05} and \ref{fig:cov_50_01} show boxplots of the coverage probability for the single change-point alternative with the smallest window size ($n_0$) being 50.  The results for $n_0$ being 25 or 100, other settings unchanged, are shown in Appendix \ref{sec:coverage-probability}.  The coverage probabilities for the p-value apporixmation assuming Gaussianity \eqref{eq:onechptp1} are shown in blue and those after skewness correction \eqref{eq:onechptp2} are shown in red.  We see that coverage probabilities based on the skewness-corrected $p$-value approximation are closer to the designed significance level, with the improvement being very significant for MDP in all scenarios and for MST/NNG when the data dimension is high.  

\begin{figure}[!htp]
  \centering
  \includegraphics[width=\textwidth]{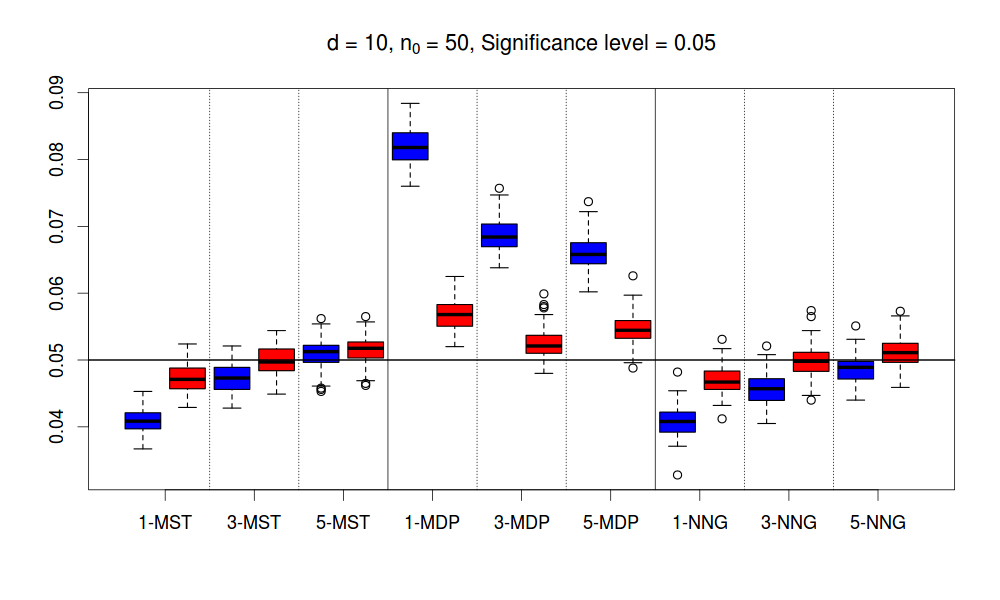}
  \includegraphics[width=\textwidth]{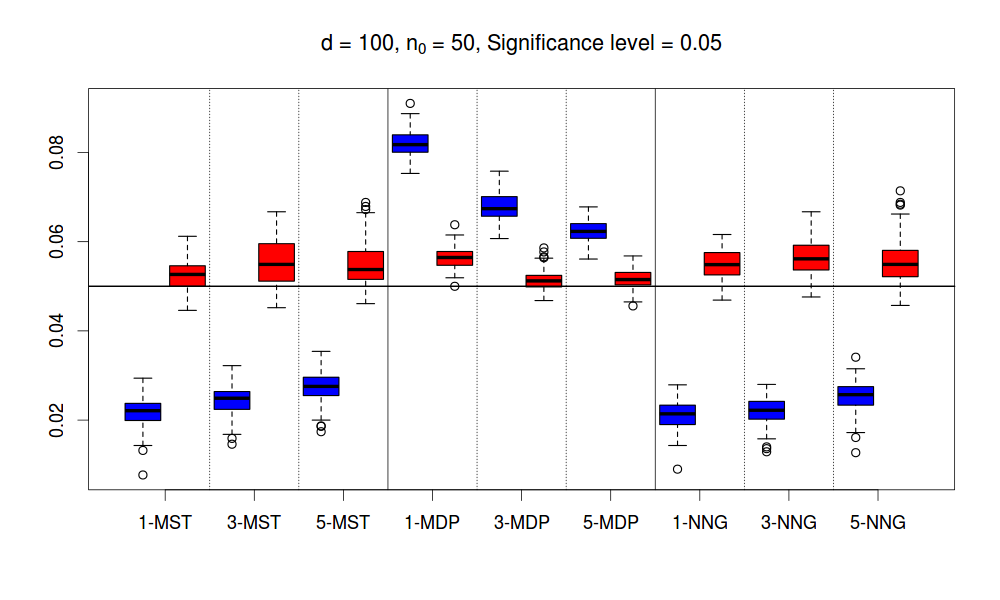}
  \caption{Boxplots for coverage probability with significance level 0.05 under the single change-point alternative.  The smallest windows size is 50.  The dimension of each observation in the sequence is 10 in the upper panel and 100 in the lower panel.  For each type of graph, the result from the $p$-value approximation assuming Gaussianity is shown in blue and that after skewness correction is shown in red.}
  \label{fig:cov_50_05}
\end{figure}

\begin{figure}[!htp]
  \centering
  \includegraphics[width=\textwidth]{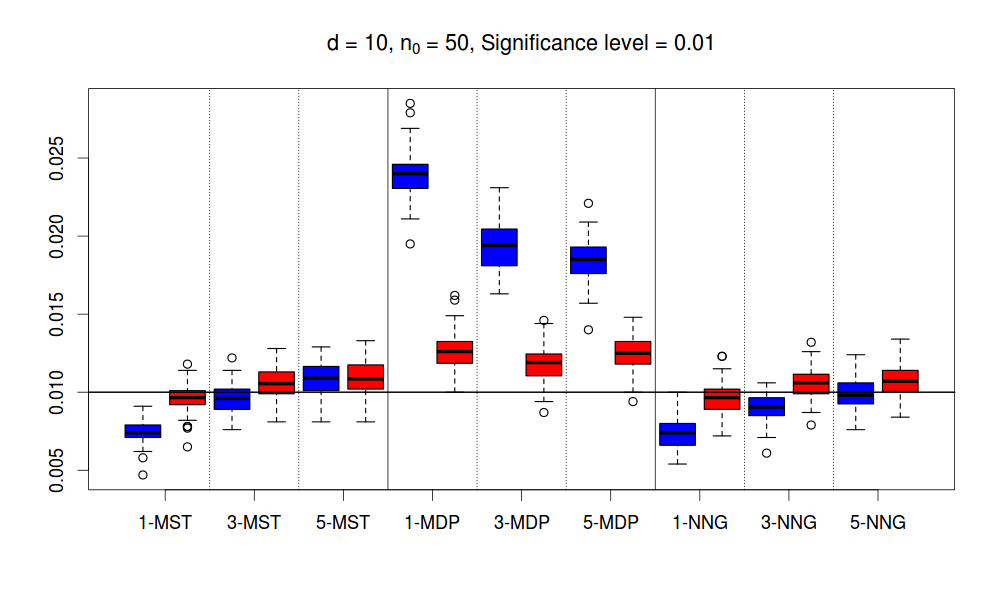}
  \includegraphics[width=\textwidth]{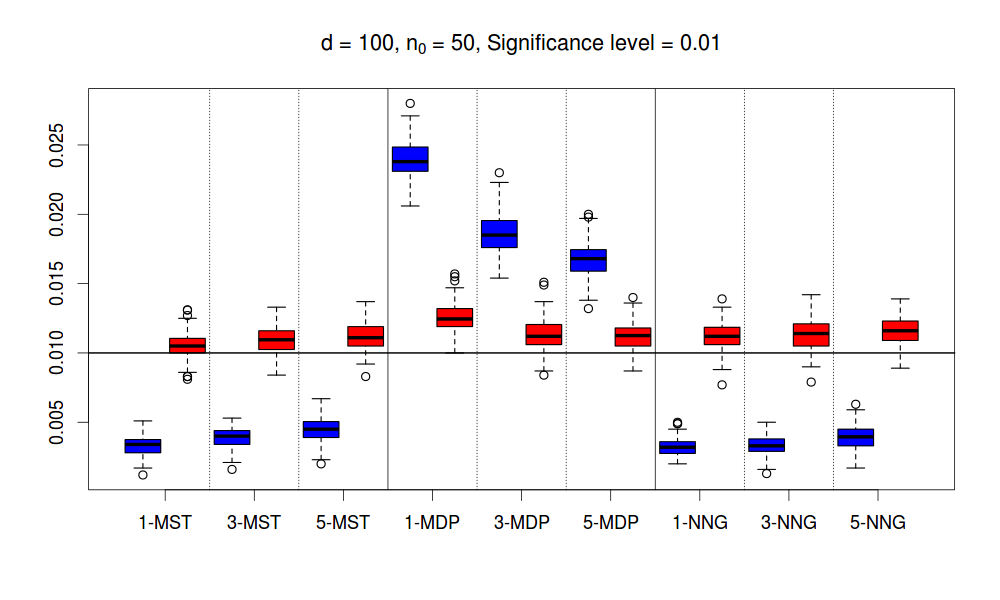}
  \caption{Boxplots for coverage probability with significance level 0.01. Other parameters remain unchanged from Figure \ref{fig:cov_50_05}.}
  \label{fig:cov_50_01}
\end{figure}

Base on results in both the critical values and coverage probabilities, the skew-corrected approximations are quite safe to use.



\section{Power Comparisons}
\label{sec:power-analysis}

To examine the power of our proposed method, we consider cases that parametric methods are applicable.  In particular, we consider cases where normal theory can apply.  In the first simulation set-up, we generated a sequence of 200 observations from the following model:
$$ \by_t \sim \left\{
                \begin{array}{ll}
                  N(\mathbf{0}, I_d), & t = 1,\dots, 100; \\
                  N(\boldsymbol\mu, \Sigma), & t = 101,\dots,200.
                \end{array}
              \right.
$$
As before, $d$ is the dimension of each observation.  There is a change-point at 100.  The mean $\mu$ of the second half of the data is shifted from 0 by amount $\Delta$ in Euclidean distance.  We considered cases where the covariance matrix remains constant ($\Sigma=I_d$), as well as cases where the covariance matrix also changes.  When the covariance matrix changes, we set $\Sigma$ to a diagonal matrix with $\Sigma[1,1]=d^{1/3}$ and $\Sigma[i,i]=1$ for $i=2,\dots,d$.
We chose $\Delta$ for each value of $d$ so that most methods have moderate power.

Hotelling's T$^2$ is a parametric test designed specifically for detecting a change in multivariate normal mean when there is no change in variance.
When there is a change in both mean and variance, the generalized likelihood ratio test (GLR) can be used.  We compare the graph-based scan statistics to scan statistics based on these two existing methods.  For any candidate change-point $t$, the Hotelling's $T^2$ is
$$
T^2(t)= \frac{t(n-t)}{n}(\bar{\by}_t-\bar{\by}_t^*)^T\widetilde{\Sigma}^{-1}(\bar{\by}_t-\bar{\by}_t^*),
$$
where
\begin{align*}
\bar{\by}_t & = \sum_{i=1}^t \by_i/t,  \quad
\bar{\by}_t^*  = \sum_{i=t+1}^n \by_i/(n-t), \\
\widetilde{\Sigma} & = (n-2)^{-1}\left[\sum_{i=1}^t(\by_i - \bar{\by}_t)(\by_i - \bar{\by}_t)^T + \sum_{i=t+1}^n(\by_i - \bar{\by}_t^*) (\by_i - \bar{\by}_t^*)^T\right].
\end{align*}
The GLR is
$$GLR(t)= n\log|\hat{\Sigma}_n|- t\log|\hat{\Sigma}_t| - (n-t)\log|\hat{\Sigma}_t^*|,  $$
where
\begin{align*}
  \hat{\Sigma}_t & = \frac{ \sum_{i=1}^t(\by_i - \bar{\by}_t)(\by_i - \bar{\by}_t)^T }{t},\quad
  \hat{\Sigma}_t^* = \frac{\sum_{i=t+1}^n(\by_i - \bar{\by}_t^*) (\by_i - \bar{\by}_t^*)^T}{n-t}.
\end{align*}
$T^2(t)$ and $GLR(t)$ both have some constraints on the dimension of the data.  For $T^2$ the number of observations $n$ needs to be larger than the dimension of the data $d$ so that $\widetilde{\Sigma}$ can be inverted.  For GLR, both $t$ and $n-t$ need to be larger than the dimension of the data so that the determinants of $\hat{\Sigma}_t, \hat{\Sigma}_t^*$ are not zero.  Thus, when $d \leq 20$, we set $n_0=d+10$ and $n_1 = n-n_0$.  When $d>20$, we set $n_0=50$ and $n_1=150$. (An exception for GLR is that when $d=50$, $n_0$ and $n_1$ are set to 60 and 140, respectively, so that the test statistic can be calculated.)  

Scan statistics based on the three ways of constructing the graph -- MST, MDP and NNG -- using Euclidean distance are compared to scan statistics based on maximization of $T^2(t)$ and $GLR(t)$.  We also examined the power of denser graphs: 3-MST, 3-MDP and 3-NNG.  The significance level is determined through 10,000 permutation runs (for Hotelling $T^2$ and GLR) or skew-corrected approximations (for graph-based methods).  Table \ref{tab:comparison} shows the number of trials, out of 100, that the null hypothesis is rejected at 0.05 level for each of these methods.  To examine the accuracy of the estimated change-point, the number of trials where the estimated change-point is within 20 from the true change-point is given in parentheses.  In the table, red numbers are cases when the graph-based method outperforms both tests based on normal theory.  In general, they appear when the dimension is relatively high.

  First, compare the graph-based methods to Hotelling's $T^2$:  When the variance does not change, $T^2$ outperforms all other methods in low to moderate dimension ($d<150$).  This is expected, as $T^2$ was designed specifically for this scenario.  Remarkably, graph-based methods surpass $T^2$ at its own game when dimension is high ($d\geq 150$).  If we increase the dimension further, our proposed method is still working while the standard Hotelling's $T^2$ is no longer applicable. 
In the case where the variance also changes.  By assuming an incorrect alternative, the power of $T^2$ is quickly surpassed by graph-based methods, for $d$ as low as 5.

Comparing graph-based methods to GLR-based scan statistic, we see a similar pattern:  When dimension is low ($d=1,5,10$), GLR-based scans dominate in power when both the mean and variance changes.  Graph-based methods exceed GLR in power when $d$ increases, already performing much better by $d=20$, which is considered quite low in today's applications. The low power of GLR at even moderate dimension is due to its requirement that the covariance matrix be estimated for both segments.

We also considered a case where the normality assumption is violated by generating data from the log-normal distribution ($\Sigma=I_d$).  Then, graph-based methods outperform $T^2$ by $d=10$, and GLR even when $d=1$ (3-MST and 3-NNG).

Comparing among the graph-based scan statistics, we see that MST and NNG have comparable power, and dominate MDP in all scenarios.  An explanation is that, of these three types of graphs, the MDP retains the least information from the data, having half as many edges as the other two graphs.  The fact that denser graphs lead to higher power is also evident as we compare the performance of 3-MST/MDP/NNG to the (1-) versions.  Also, 3-MST/MDP/NNG have similar power,  indicating that power is not sensitive to the method of graph construction, so long as the graph and distance function effectively separates $F_0$ from $F_1$.

Another interesting fact on the graph-based tests is that their power mainly depends on the size of the change and not decrease much as the dimension increases.  This can be seen clearly in the first table in Table \ref{tab:comparison}.  As we increase the change ($\Delta$) from 1.2 ($d=100$) to 2 ($d=175$), there is an increasing trend in power for each of the graph-based tests.  On the other hand, there is a slightly decreasing trend for the Hotelling's $T^2$ test.  Also, as we jump from $d=175$ to $d=500$, we only increase the change a little (2 to 2.5) to have all the graph-based tests remain similar power.  These results show that the graph-based tests are powerful in high dimension despite the hubbing phenomenon. 

For all scenarios that the null is rejected, we also tally whether the estimated change-points are within $[80,120]$ to check their accuracy (numbers in parentheses, Table \ref{tab:comparison}).  We see that, in terms of the accuracy, the graph-based methods are comparable to, if not better than, that based on normal theory.

\begin{table}[!htp]
\caption{Number of simulated sequences (out of 100) with significance less than 5\%, and the numbers in parentheses are those having the estimated change-point within [80, 120].}
  \centering
Normal data, $\Sigma = I$

  \begin{tabular}{c|cccccccccc}
\hline \hline
    $d$ & 1 & 10 & 50  & 100 & 125 & 150 & 175 & 500 \\
$\Delta$ & 0.5 & 0.8  & 1  & 1.2 & 1.4 & 1.6 & 2 & 2.5 \\
\hline \hline
T2  & \emph{85} & \emph{97} & \emph{80} & \emph{69} & \emph{69} & \emph{66} & \emph{53} & - \\
& (68) & (83) & (64) & (58) & (58) & (54) & (42) & - \\ \hline
GLR  & \emph{74} & \emph{26} & \emph{12} & -  & -  & -  & - & - \\
& (60) & (14) & (0) & - & - & - & - & - \\ \hline
1-, 3-MST & \emph{15 30} & \emph{20 52} & \emph{14 42} & \emph{17 38} & \emph{27 48} & \emph{38 65} & \emph{\textcolor{red}{{60}} \textcolor{red}{{86}}} & \emph{\textcolor{red}{58 87}} \\
 & (4 16) & (13 37) & (11 37) & (13 34) & (18 44) & (33 \textcolor{red}{\textbf{59}}) & (\textcolor{red}{\textbf{54} \textbf{85}}) & (\textcolor{red}{\textbf{51} \textbf{85}}) \\ \hline
 1-, 3-MDP & \emph{13 19} & \emph{16 34}  & \emph{14 29} & \emph{15 24} & \emph{30 42} & \emph{26 48} & \emph{40 \textcolor{red}{{77}}} & \emph{\textcolor{red}{49 72}} \\
& (0 6) & (6 23) & (7 18) & (8 15) & (19 24) & (19 37) & (30 \textcolor{red}{\textbf{63}}) & (\textcolor{red}{\textbf{29} \textbf{56}}) \\ \hline
1-, 3-NNG & \emph{11 28} & \emph{20 51} & \emph{18 40} &  \emph{17 32} & \emph{27 51} & \emph{32 \textcolor{red}{67}} & \emph{53 \textcolor{red}{{87}}} & \emph{\textcolor{red}{57 88}} \\
& (3 17) & (14 39) & (14 32) & (12 28) & (19 47) & (27 \textcolor{red}{\textbf{61}}) & (\textcolor{red}{\textbf{49} \textbf{85}}) & (\textcolor{red}{\textbf{50} \textbf{85}}) \\
\hline \hline
  \end{tabular}


\bigskip

Normal data, $\Sigma$ is diagonal with $\Sigma[1,1]=d^{1/3}, \Sigma[i,i]=1, i=2,\dots, d$.
  \begin{tabular}{c|cccc}
\hline\hline
    $d$ & 1 & 5 & 10 & 20  \\
$\Delta$ & 0.5 & 0.4 & 0.1 & 0.2  \\
\hline \hline
T2 & \emph{78} & \emph{16} & \emph{7} & \emph{7} \\
& (60) & (11) & (1) & (1) \\ \hline
GLR & \emph{65} & \emph{80} & \emph{69} & \emph{23} \\
& (45) & (70) & (59) & (10) \\ \hline
1-, 3-MST & \emph{14 33} & \emph{29 52} & \emph{35 61} & \emph{\textcolor{red}{{62}} \textcolor{red}{{85}}} \\
& (5 17) & (15 30) & (24 52) & (\textcolor{red}{\textbf{44}} \textcolor{red}{\textbf{77}}) \\ \hline
1-, 3-MDP & \emph{14 17} & \emph{12 29} & \emph{17 42} & \emph{\textcolor{red}{{44}} \textcolor{red}{{76}}} \\
& (1 6) & (4 17) & (7 28) & (\textcolor{red}{\textbf{24}} \textcolor{red}{\textbf{61}}) \\ \hline
1-, 3-NNG & \emph{8 31} & \emph{28 45} & \emph{30 64} & \emph{\textcolor{red}{{58}} \textcolor{red}{{86}}} \\
& (3 11) & (12 28) & (18 51) & (\textcolor{red}{\textbf{42}} \textcolor{red}{\textbf{74}}) \\
\hline \hline
  \end{tabular}

\bigskip

Log-normal data, $\Sigma=I$.

  \begin{tabular}{c|cccccccccc}
\hline \hline
    $d$ & 1 & 5 & 10 & 20 & 50 & 75 & 100 \\
$\Delta$ & 0.7 & 0.9 & 1 & 1 & 1.2 & 1.4 & 1.4 \\
\hline \hline
T2 & \emph{78} & \emph{84} & \emph{78} & \emph{54} & \emph{57} & \emph{43} & \emph{28} \\
& (57) & (69) & (61) & (38) & (43) & (34) & (20) \\ \hline
GLR & \emph{30} & \emph{15} & \emph{16} & \emph{14} &  11 & - & - \\
& (21) & (9) & (8) & (3) & (0) & - & - \\ \hline
1-, 3-MST & \emph{24 59} & \emph{41 67} & \emph{43 \textcolor{red}{{89}}} & \emph{33 \textcolor{red}{{64}}} & \emph{45 \textcolor{red}{{66}}} & \emph{\textcolor{red}{{54}} \textcolor{red}{{84}}} & \emph{\textcolor{red}{{52}} \textcolor{red}{{75}}} \\
&(11 43) & (29 54) & (38 \textcolor{red}{\textbf{79}}) & (26 \textcolor{red}{\textbf{57}}) & (32 \textcolor{red}{\textbf{60}}) & (\textcolor{red}{\textbf{48} \textbf{81}}) & (\textcolor{red}{\textbf{44} \textbf{72}}) \\ \hline
1-, 3-MDP & \emph{18 28} & \emph{23 53} & \emph{24 52} & \emph{13 32} & \emph{30 51} & \emph{23 \textcolor{red}{65}} & \emph{24 \textcolor{red}{58}} \\
& (5 17) & (7 35) & (15 40) & (3 19) & (19 37) & (19 \textcolor{red}{\textbf{50}}) & (15 \textcolor{red}{\textbf{41}}) \\ \hline
1-, 3-NNG & \emph{18 51} & \emph{38 67} & \emph{32 77} & \emph{27 \textcolor{red}{{60}}} & \emph{46 \textcolor{red}{{70}}} & \emph{\textcolor{red}{{49}} \textcolor{red}{{85}}} & \emph{\textcolor{red}{{46}} \textcolor{red}{{75}}} \\
& (10 33) & (27 53) & (27 \textcolor{red}{\textbf{66}}) & (20 \textcolor{red}{\textbf{52}}) & (32 \textcolor{red}{\textbf{61}}) & (\textcolor{red}{\textbf{43}} \textcolor{red}{\textbf{82}}) & (\textcolor{red}{\textbf{38}} \textcolor{red}{\textbf{71}}) \\
\hline \hline
  \end{tabular}

  \label{tab:comparison}
\end{table}



\section{Real Data Examples}
\label{sec:real-data-examples}

We illustrate the new approach on two different applications.   The first is a statistical analysis of the text of \emph{Tirant lo Blanc}.  The second is a longitudinal study of a network through time.


\subsection{Authorship Debate}
\label{sec:authorship-debate}

 \emph{Tirant lo Blanc},  a chivalry novel published in 1490, is considered to be one of the best known medieval works of literature in Catalan, and is well recognized to be a major influence to \emph{Don Quxote}.  For such an important work in western literature, there is a long lasting debate regarding its authorship originated from conflicting information provided in its first published version.  The dedicatory letter at the beginning of the book states,

 \begin{quote}
   ... So that no one else can be blamed if any faults are found in this work, I, Joanot Martorell, knight, take sole responsibility for it, as I have carried out the task singlehandedly...
 \end{quote}
However, the colophon at the end of the book states something different,
\begin{quote}
  ... by the magnificent and virtuous knight, Sir Joanot Martorell, who because of his death, could only finish writing three parts of it.  The fourth part, which is the end of the book, was written by the illustrious knight Sir Marti Joan de Galba.  If faults are found in that part, let them be attributed to his ignorance...
\end{quote}
This inconsistency sparked a debate, still ongoing, about the authorship of \emph{Tirant lo Blanc} since its publication.  Opinions have mainly fallen into two camps, one favoring single authorship by Joanot Martorell and the other favoring a change of author somewhere between chapters 350 and 400 with 487 chapters in total.  One objective way to settle this debate is through the statistical analysis of word usage, which reflects the unique writing style of different people.

\cite{giron2005bayesian} analyzed two sets of word usage statistics extracted from the book.  The first, which we call the word length data set, categorizes the words in each chapter by its length, with a single category for all words with length greater than nine letters.  Thus, this data set represents each chapter by a vector of length 10.  The second, which we call the context-free word frequency data set, counts the occurrence of the 25 most frequent context-free words in each chapter.  \cite{giron2005bayesian} analyzed the two data sets using a Bayesian multinomial change-point model and a Bayesian clustering method, and concluded in favor of the change of author hypothesis with the estimated change-point between chapters 371 and 382.

Here, we apply the graph based change-point method to the two data sets, treating each chapter as a time-point.  There are in total 487 chapters, and we use the 425 chapters that have more than 200 words.  For both data sets, we normalized the count vector for each chapter by dividing the total number of words in the chapter.  Thus, our data is a sequence of 425 normalized proportions, of dimension 10 for the word length data and dimension 25 for the context-free word frequency data.  The $L_2$ norm is used to construct the MST, MDP and NNG graphs representing similarity between chapters.  $Z_G(t)$ and the estimated change-points, computed for each type of graph, are shown in Figure \ref{fig:author}.  Test results using the three different graphs and the two data sets support the change of author hypothesis, with the estimated change-point around chapter 360, which is consistent with the view that there is a change of author somewhere between chapters 350 and 400.  
The $p$-values are shown in the table in Figure \ref{fig:author}. 

\begin{figure}[!htp] 
  \centering
  \includegraphics[width=\textwidth]{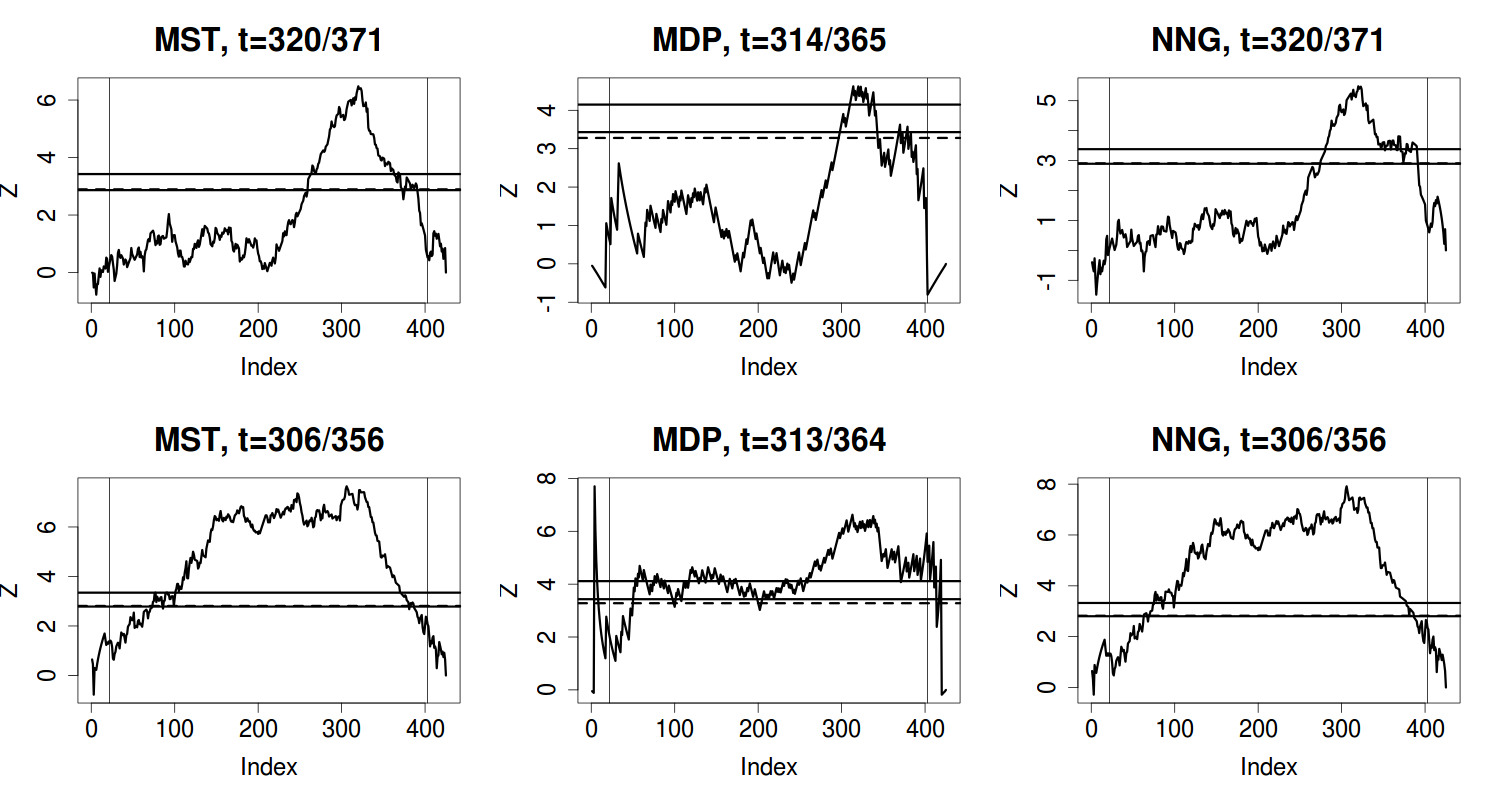}

\vspace{1em}

  \begin{tabular}{c|ccc}
    \hline
    data & MST & MDP & NNG \\ \hline
    word length & 0.0000 (1.5e-9) & 0.0042 (0.0018) & 0.0000 (7.5e-7) \\ \hline
    context-free word frequency & 0.0000 (2.7e-13)& 0.0000 (6.1e-6) & 0.0000 (3.0e-14) \\
    \hline
  \end{tabular}
  \caption{Results of graph-based scans of chapter-wise word usage frequencies of \emph{Tirant lo Blanc}, based on the word length data (first row) and context-free word frequency data (second row).  The three columns show scans based on three different graphs: MST, MDP, and NNG from left to right.  
In each plot, $Z_G(t)$ is plotted along $t$ (chapter).  The estimated change-point is shown in the caption above the plot in the form $A/B$, where $A$ is the index of the change-point within the 425 chapters used for analysis, and $B$ is the chapter number in the novel.  The two vertical lines show $n_0$ and $n_1$; we excluded the first 5\% and the last 5\% of the points.  The horizontal lines show critical values at 0.05 and 0.01 significance levels, with the solid lines showing critical values computed from 10,000 permutations and the dashed lines showing those computed from the analytic approximation with skewness correction.  The table lists the $p$-values for the tests through 10,000 permutations with the skew-corrected approximations in parentheses.}
  \label{fig:author}
\end{figure}


To check the robustness of our analysis, we also applied the scan on data for the first 350 chapters to see if it rejects the null there.  Opinions seem to be quite uniform that the first 350 chapters were all written by Joanot Martorell.  The results are shown in Figure \ref{fig:author2}.  The word length data does not reject the null for the 350 chapters at 0.05 significance level.  However, the context-free word frequency data supports a change-point, although different graphs favor different locations for the change-point.  The $p$-values of the tests are shown in the table in Figure \ref{fig:author2}. 
One explanation is that the context-free word frequency is still affected by the context and thus less robust than the word length in reflecting writing styles. It is also possible that the first author's writing style evolves as he proceeded in the dimension of context-free word frequency. 

\begin{figure}[!htp]
  \centering
  \includegraphics[width=\textwidth]{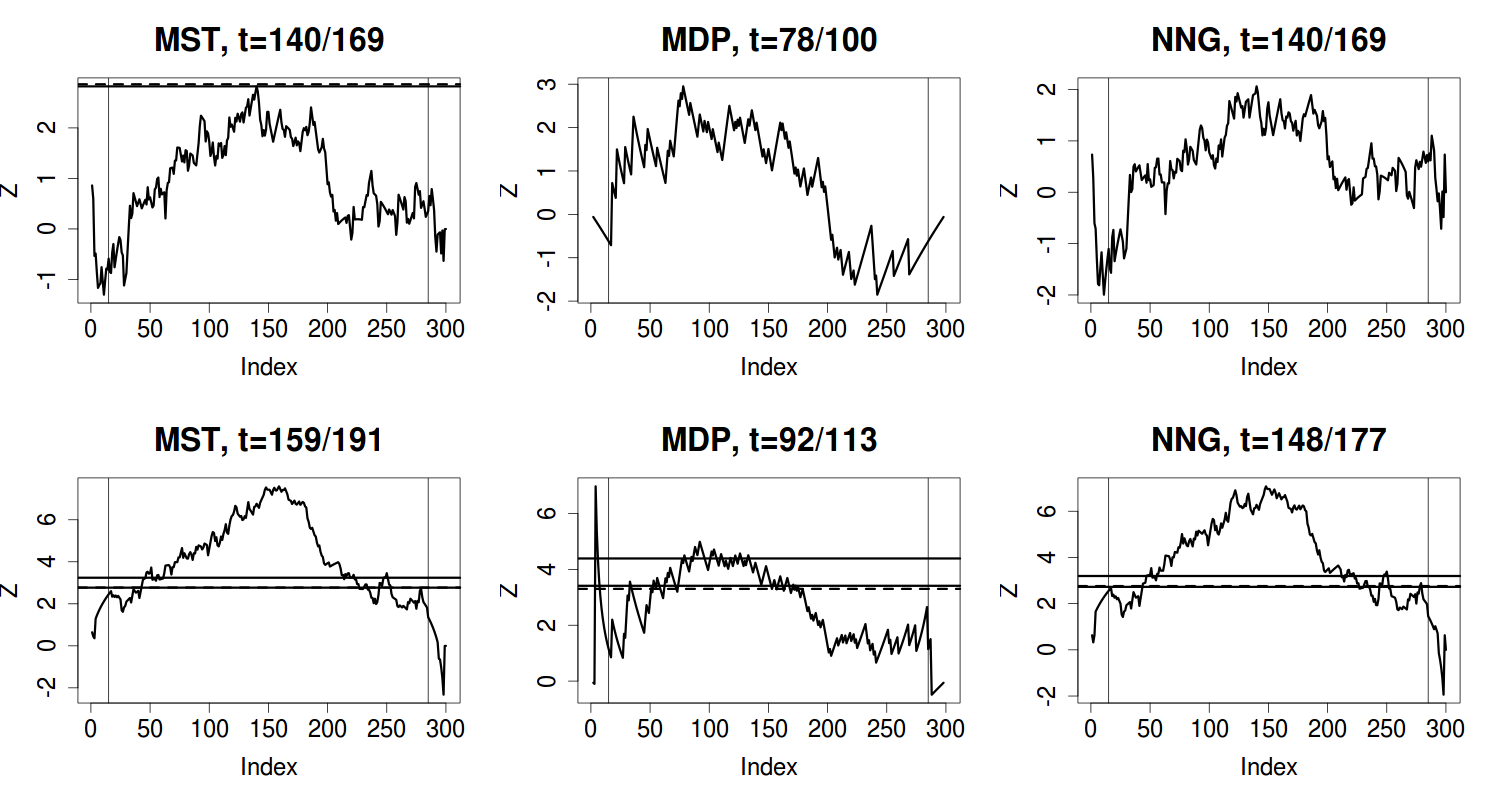}

\vspace{1em}

 \begin{tabular}{c|ccc}
    \hline \hline
    data & MST & MDP & NNG \\ \hline \hline
    word length & 0.0485 (0.0562) & 0.1079 (0.1040) & 0.3053 (0.3527) \\ \hline
    context-free word frequency & 0.0000 (2.9e-13) & 0.0018 (0.0009) & 0.0000 (1.3e-11) \\
    \hline \hline
  \end{tabular}

  \caption{Results from the first 350 chapters.  The setting of the figure is the same as in Figure \ref{fig:author}. The table lists the $p$-values for the tests through 10,000 permutations with the skew-corrected approximations in parentheses.}
  \label{fig:author2}
\end{figure}



\subsection{Friendship Network}
\label{sec:friendship-network}

The MIT Media Laboratory conducted a study following 90 subjects, consisting of students and staff at the university, using mobile phones with pre-installed software recording call logs from July 2004 to June 2005 \citep{eagle2009inferring}.  In this analysis, we extract the information on the caller, callee and time for every call that was made during the study period.   The question of interest is whether phone call patterns changed during this time, which may reflect a change in relationship among these subjects.  
We bin the calls by day and, for each day, construct a network with the 90 subjects as nodes and a link between two subjects if they had at lease one call on that day.  We encode the network of each day by an adjacency matrix, with 1 for element $[i,j]$ if there is an edge between subject $i$ and subject $j$, and 0 otherwise.  Thus, the processed data are adjacency matrices, one for each day from 2004/7/20 to 2005/6/14.

We show results for graphs constructed using two different dissimilarity measures.  Let $A_i$ be the 90 by 90 adjacency matrix on day $i$.  We denote $v_i$ to be the vector form of $A_i$. The dissimilarities are:
\begin{enumerate}[(1)]
\item the number of different edges: $\|v_i - v_j\|_1 = \|v_i-v_j\|_2^2$,
\item the number of different edges, normalized by the geometric mean of the total for each day: $\frac{\|v_i - v_j\|_1}{\sqrt{\|v_i\|_1 \|v_j\|_1}}$.
\end{enumerate}

\begin{figure}[!htp]
  \centering
  \includegraphics[width=\textwidth]{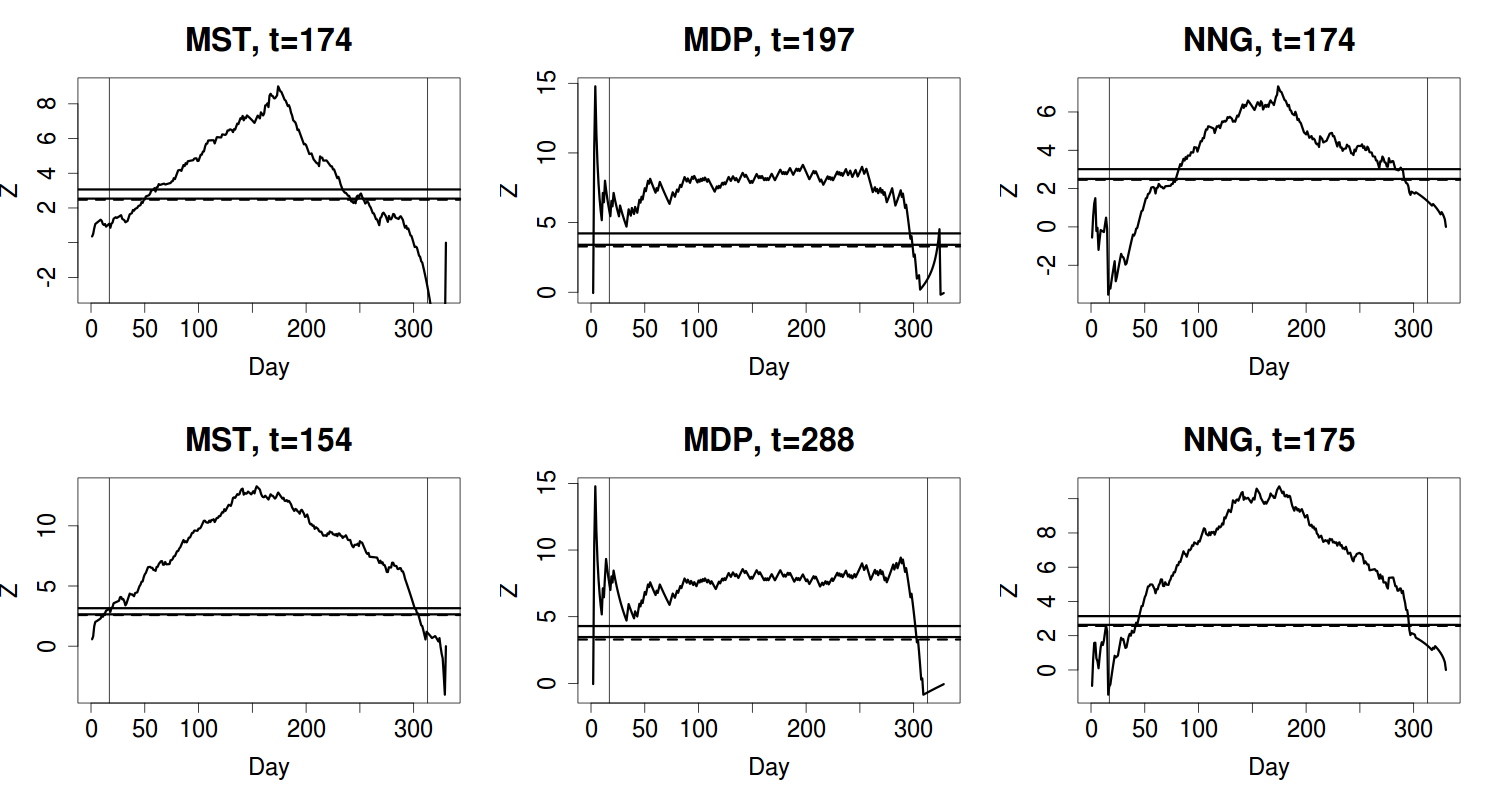}
  \caption{Results of graph-based scans of the MIT phone call network.   Top row shows results from using number of different edges as the dissimilarity measure and bottom row shows results from using the normalized number of different edges.   The three columns show three different ways of constructing the graph: MST, MDP, and NNG from left to right. The content in each plot is the same as in Figure \ref{fig:author}.}  
  \label{fig:phone}
\end{figure}

Results based on different dissimilarities and different ways of constructing the graph are shown in Figure \ref{fig:phone}.  We see that statistics based on MST and NNG give similar results under both dissimilarities.  Based on the scans using MST and NNG, a change-point occurred at around December 19, 2004 ($t=154$) or January 9/10, 2005 ($t=174/175$), which are almost the two ends of the Winter break.  So these results suggest a change of phone call pattern as they move from the fall quarter to the spring quarter.  The statistic based on MDP is quite horizontal for a long range of time.  One reason is that for this network data, the change is relatively gradual rather than abrupt and the constructing of MDP then tend to connect observation $t$ to observations $t-1$ or $t+1$, which makes the resulting graph not informative in determine the location of a ``big'' change.  
The $p$-values for the scan based on MST and NNG under both dissimilarity measures are all $<0.0001$, by both 10,000 permutations and skew-corrected approximations.  


\section{Extensions}
\label{sec:extensions}
In this section, we discuss some extensions to the approach to deal with local dependency in the sequence (Section \ref{sec:block-perm}) and to construct a confidence interval for the change-point (Section \ref{sec:conf-interv-estim}).

\subsection{Block Permutation for Local Dependency}
\label{sec:block-perm}

In both applications, independence is a useful but idealized assumption for the data. One way to deal with local dependency is to define the null distribution as the distribution under block permutation rather than permutation.  In block permutation, the sequence is divided into blocks of size $b$ and the blocks are permuted.\footnote{There are $b$ ways to divide the sequence into blocks of size $b$, with the first block of size $1,2,\dots,b$.  For each block permuted sequence, we first randomly chosen one way from the $b$ ways, and then randomly permute the blocks.}  
The standardized count is then defined as:
\begin{equation}
  \label{eq:Zbp}
  Z_{G,bp}(t) = - \frac{R_G(t)-\bE_{bp}(R_G(t))}{\sqrt{\bV_{bp}(R_G(t))}},
\end{equation}
where $\bE_{bp}(R_G(t))$ and $\bV_{bp}(R_G(t))$ are the expectation and variance for $R_G(t)$ under block permutation.\footnote{$\bE_{bp}(\cdot)$ and $\bV_{bp}(\cdot)$ can be calculated by doing, for example, 10,000 block permutations.}  The test statistic is now defined as:
\begin{equation}
  \label{eq:maxZbp}
  \max_{n_0\leq t\leq n_1} Z_{G,bp}(t),
\end{equation}
and the $p$-value for the above statistic can be obtained by block permutation.  While analytical formulas for the null moments and the family-wise error rate under the block permutation model are too complicated to be practical, for medium to small data sets these quantities can be obtained by brute force computation.  

We used the block permutation model to analyze both the \emph{Tirant lo Blanc} authorship and the friendship network data.  The $p$-values for the authorship data under different block sizes (2, 5, 10) are summarized in Table \ref{tab:authorbp}, and plots of the $Z_{G,bp}$ values are shown in Figure \ref{fig:authorbp5} (block size 5) and Appendix \ref{sec:scan-over-entire} (block size 2 and 10).  Results for the authorship data with the first 350 chapters and the phone call network data are in Appendix \ref{sec:block-perm-results}.

\begin{table}[!htp]
  \centering
   \caption{p-values from 10,000 block permutations for the authorship data set.}
 \begin{tabular}{|c||ccc|ccc|}
    \hline
    & \multicolumn{3}{|c|}{Word length} & \multicolumn{3}{c|}{Context-free word frequency} \\ \cline{2-7}
block size   & MST & MDP & NNG & MST & MDP & NNG \\ \hline \hline
1$^*$ & 0 & 0.0042 & 0 & 0 & 0 & 0 \\ \hline
2 & 0 & 0.0029 & 0 & 0 & 0 & 0 \\ \hline
5 & 0 & 0.0041 & 0 & 0 & 0.0001 & 0 \\ \hline
10 & 0 & 0.0057 & 0 & 0 & 0.0006 & 0 \\ \hline
  \end{tabular}
\begin{flushleft}
\quad \quad \quad \quad $^*$: a block size of 1 is equivalent to permutation under independence assumption.
\end{flushleft}
  \label{tab:authorbp}
\end{table}

\begin{figure}[!htp]
  \centering
  \includegraphics[width=\textwidth]{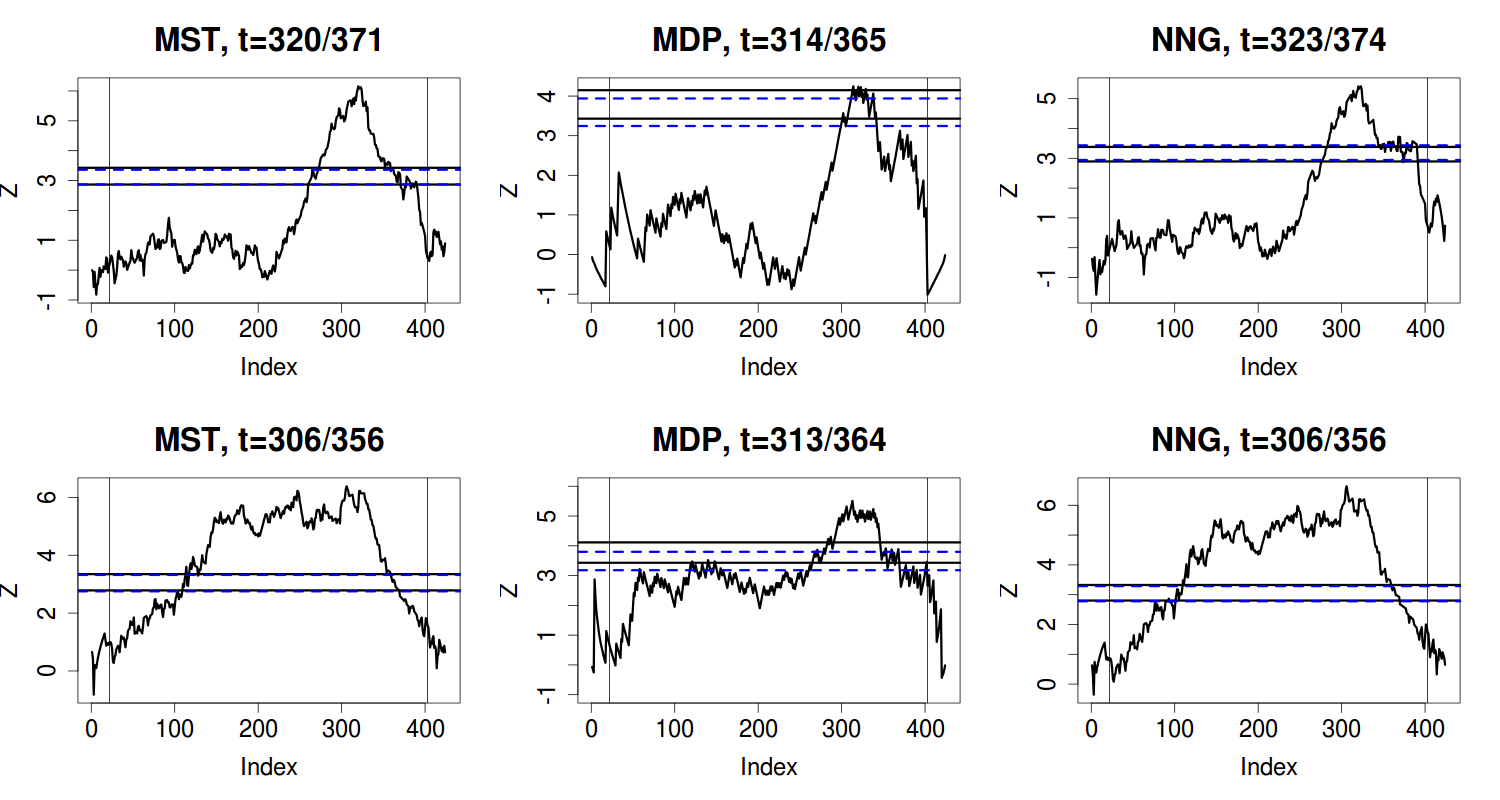}
   \caption{Results of graph-based scans of chapter-wise word usage frequencies of \emph{Tiran lo Blanc}, based on the word length data (first row) and context-free word frequency data (second row), under block permutation with block size 5.  The critical values at 0.05 and 0.01 levels, obtained from 10,000 block permutation runs are shown in these blue dash lines.  The solid black lines are critical values from permutation.}
  \label{fig:authorbp5}
\end{figure}

In all cases, block permutation gives the same conclusion as permutation.  Block permutation tends to increase the $p$-value when the block size is large.  Simulation studies show that this is the case even when the sequence is made up of independent observations.  It is due to the fact that 
block permutation with large blocks produces a less homogeneously mixed sequence.    Despite the slight decrease in significance, the fact that $p$-values remain in the same regime even under block permutation boosts our confidence in our conclusions for both applications.


\subsection{Confidence Interval for Estimated Change-Point}
\label{sec:conf-interv-estim}
Upon rejection of the null one is often concerned with the accuracy of the estimate of the change-point location.  To some extent, we explored this in Section \ref{sec:power-analysis} by tallying whether the estimated change-point is within a fixed window centered at the true change-point.  Here, we describe a procedure for constructing a confidence region for the change-point, which is motivated by the approach studied by \cite{worsley1986confidence}, where it was called a Cox-Spj{\o}tvoll type confidence region citing the original paper \cite{cox1982partitioning}.


The Cox-Spj{\o}tvoll type confidence region is based on the duality relationship where the $\alpha$ level confidence region for a change-point contains all values $k$ that partition the sequence into two subsequences (before and after $k$), where within the subsequences the hypothesis of homogeneity can not be rejected at level $\alpha$.  That is, let $p^L_k$ and $p^R_k$ be the $p$-values for testing the null hypothesis of homogeneity in respectively the left and right subsequences when partitioned at $k$.  
A $1-\alpha$ confidence region can be expressed as
$$D_\alpha = \{k:~p_k^L,~ p_k^R \geq 1-\sqrt{1-\alpha} \}.$$

On the word length data, the 0.01 confidence region $D_{0.01}$ for the chapter where the author changes from  Joanot Martorell to Marti Joan de Galba is  $\{296\} \cup [298,355]$.  While this region is informative, the break between chapters 296 and 298 is hard to interpret.  Note that for a value $k<\hat{\tau}$ to belong to a $1-\alpha$ level Cox-Spj{\o}tvoll region, both the subsequence to the left of $k$ and the subsequence to the right of $k$ must test negative for a change-point.  The subsequence to the right of $k$, which contains $\hat{\tau}$, usually tests positive if there are enough points between $k$ and $\hat{\tau}$.  In this way, the confidence region has the desirable tendency of including points close to $\hat{\tau}$.  The subsequence to the left of $k$, which does not include $\hat{\tau}$, may test positive for a change-point for two reasons:  Inhomogeneity in the left subsequence, e.g., existence of another change-point before $\tau$, or a false positive due to random chance.  Neither of these reasons seems to have much to do with our precision of estimating $\tau$ with $\hat{\tau}$.

For example, 297 is excluded in the confidence region for the change of author in \emph{Tirant lo Blanc} because of what happened in Chapters 1 to 297, not because of what happened in Chapter 298 onwards (the right subsequence actually test negative).  There is no historical evidence pointing to a third author, and thus we are willing to believe that there is either one author (Joan Martorell) or two authors (Joan Martorell and Marti Joan de Galba).  Thus, when we compute our confidence region for the change-point, we are doing so under the premise that there is a single change in author.  Hence, not including 297 due to possible inhomogeneity prior to chapter 297, when our best estimate of the change-point is 320, seems a bit silly.  We would much rather include 297, claim to have a conservative interval, and forego the exact coverage property of the  Cox-Spj{\o}tvoll region.

Motivated by these considerations, we modify the Cox-Spj{\o}tvoll type confidence region in the following way:  If $k$ comes before the estimated change-point ($\hat{\tau}$), we test whether the right-subsequence contains a change-point; and if $k$ comes after $\hat{\tau}$, we test whether the left-subsequence contains a change-point.  In other words, let
$$C_{\alpha, L} = \{k<\hat{\tau}:~ p_k^R\geq 1-\sqrt{1-\alpha} \},$$
$$C_{\alpha, R} = \{k<\hat{\tau}:~ p_k^L\geq 1-\sqrt{1-\alpha} \},$$
our confidence region is $C_\alpha = C_{\alpha, L} \cup C_{\alpha, R} \cup \{\hat{\tau}\}$.  Since $C_\alpha \supseteq D_\alpha$, $C_\alpha$ is a conservative $\alpha$ level confidence region.  $C_\alpha$ is more likely than $D_\alpha$ to form an interval, and despite its conservativeness it is more accurate in reflecting the precision of $\hat{\tau}$ in estimating $\tau$ when we believe $\tau$ to be the sole change-point.  

This modified procedure is illustrated on the word length data shown in Figure \ref{fig:CI}.  The 0.01 confidence region for the location of change in author is $[281,355]$, 
which correspond to original chapter numbers 330 to 409.  Comparing to $D_{0.01}$, we deduce that not only 297 but 281-295 were excluded from $D_{0.01}$ due to possible inhomogeneity in the left subsequence.  As for any real data, homogeneity is an ideal and not a completely correct assumption for the \emph{Tirant lo Blanc} word length sequence.  In reporting the $C_{\alpha}$ region, we are choosing a region that is more conservative, but in turn, more robust against slight deviations from the model.

\begin{figure}[!htp]
  \centering
  \includegraphics[width=.65\textwidth]{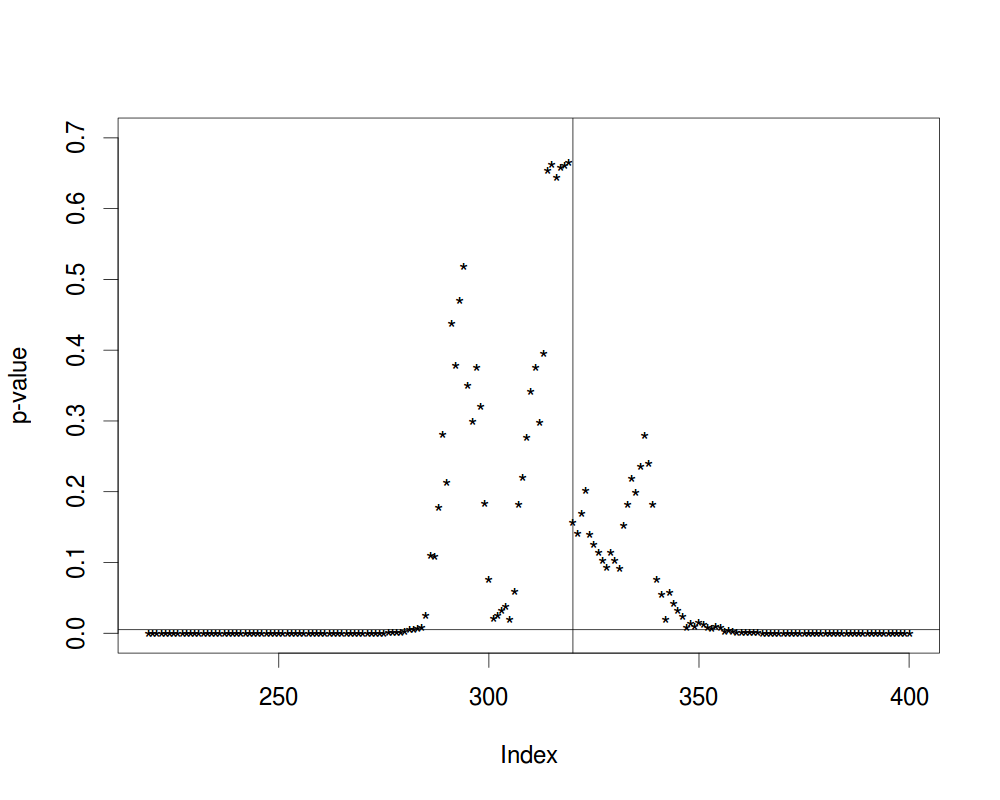}
  \caption{Illustration for computing the $C_{\alpha}$ confidence region for location of change in author in \emph{Tirant lo Blanc} on word length data with G being a MST.  The vertical line is the estimated change-point.  The $x$-axis is the indices on the chapters being used and $y$-axis is the $p$-value for the right- or left-subsequence (depending on whether the point is respectively before or after $\hat{\tau}$).  The horizontal line is at the value $1-\sqrt{1-0.01}$.}
  \label{fig:CI}
\end{figure}


\section{Conclusions and Discussion}
\label{sec:disc-concl}
The proposed method for change-point detection can be applied to a wide range of data, requiring only the existence of a dissimilarity measure on the sample space.  In applications, the choice of a good dissimilarity measure is critical, and domain knowledge should be used to design a measure that is sensitive to the signal of interest.  The graph-based approach in this paper decouples this modeling choice of dissimilarity measure from the formal test for a change-point.  Given the graph, the scan statistics are straightforward to compute, with general off-the-shelf analytic formulas for family-wise error control.

We have shown that the $p$-value approximations are quite accurate.  Our simulations were for a data sequence of length $n=1000$.  The accuracy of the approximations depend on the minimum allowed group size $n_0$ ($l_0$ for the changed interval alternative) and not so much on $n$.    Accuracy also depends on the structure of the graph.  When the graph is dominated by hubs, 
skewness correction is critical for the approximations to be accurate. For \emph{extremely} star-shaped graphs, we imagine that adjusting for kurtosis and higher order moments would also be helpful.  The strategy would be similar to skewness correction, but more technically complicated.  We don't compute these higher order terms in this paper, but if needed they can be computed in a similar fashion as the skewness term with the aid of a symbolic computation software.

If hubs dominate the topology of the graph, perturbation of any hub can change the topology drastically, and $R_G(t)$, which does not take into account the interaction between edges, loses all information regarding the high order structure.  Under such circumstances, the particular graph would not be useful for differentiating $F_1$ from $F_0$, and one would need to explore other dissimilarity measures and graph construction methods on the data.
\cite{radovanovic2010hubs} studied the hubbing phenomenon in high dimensional data under several similarity measures, which can serve as a starting point for choosing informative similarity measures for particular problems.

Compared to parametric approaches, the graph-based approach requires far fewer assumptions, but also makes less use of the data. Although this leads to loss of power in low dimensions if the data indeed follow the parametric model, it leads to robustness and wider applicability.  An important observation is that the graph-based approach has desirable power, compared to existing parametric tests, in moderate and high dimensions.  For high dimensional data, it is often hard to predict the direction and nature of the change.  Without such prior knowledge, parametric models would require the estimation of many parameters, most of which would be unrelated to the change.  For example, the Hotelling $T^2$ statistic requires the estimation of the large covariance matrix.  If, by prior knowledge or data pre-processing, we can circumvent the covariance estimation, then Hotelling $T^2$ would be preferable when the data satisfies its assumptions -- normality with no change of variance.  Otherwise, graph-based approaches gain increasing advantage over Hotelling's $T^2$ as $d$ increases, even in the problem for which Hotelling's $T^2$ was explicitly designed.

We explored three different ways of constructing the underlying graph given a dissimilarity measure.  From the numerical results and the analysis of the MIT cell phone network, we see that scans based on MST and NNG perform similarly, while scans based on MDP have lower power.  We suspect this is due to the fact that MDP is the least dense graph and utilizes the least amount of information from the original data set.  This is confirmed as the power increases when we use denser graphs (3-MST/MDP/NNG vs. 1-MST/MDP/NNG).  More study is needed to determine what is the optimal choice of graph. 
One may also consider assigning weights to the edges.  As in all problems, building more assumptions into the statistic leads to improved power if the assumptions are true, but sacrifices robustness.  

The analytic moment and significance formulas assume independent observations.  When there is local dependence, block permutations may be useful in producing more accurate $p$-values.  We illustrated this in Section \ref{sec:block-perm}.  Block permutation is computationally intensive, and in practice one always wrestles with the question of how to choose the block size.  When local dependence is weak, as for our data examples, the thresholds given by block permutation are quite close to the analytic thresholds that assume dependence.

A Cox-Spj{\o}tvoll type confidence region, as proposed by \cite{worsley1986confidence}, can be computed under this graph-based framework to assess the uncertainty in the estimation of the change-point.  As described in Section \ref{sec:conf-interv-estim}, we find Worsley's approach to be sometimes misleading in practice, and propose a modification that is conservative but more robust.  Our discussion focused on the inference for the chapter where authorship changed in \emph{Tirant lo Blanc}, because this seems to be a problem where the space of models is limited, and the interpretation of the change-point parameter is clear.



If more than one change-point or changed interval were of interest, the graph-based scan can be applied recursively in a procedure that is called binary or circular binary segmentation \citep{vostrikova1981detecting,olshen2004circular}.  



\bibliographystyle{imsart-nameyear}
\bibliography{graph_chpt}

\appendix

\section{Proofs for Lemmas and Propositions}
\label{sec:proofs-lemm-prop}

\subsection{Proof of Lemma \ref{lemma:onechptstat}} \label{woodroofe}
\label{sec:proof-lemma}

When observations $\by_i$ and $\by_j$ are linked in the graph, the edge is denoted as $(i,j)$. Then,
 \begin{align*}
    \bE(R_G(t)) & = \sum_{(i,j)\in G}\bP(g_i(t)\neq g_j(t)) = p_1(t)|G|,
  \end{align*}
because there are $2t(n-t)$ ways to place $i$ and $j$ on the two sides of $t$ among all $n(n-1)$ ways.

For the second moment,
  \begin{align*}
    \bE(R_G^2(t)) & = \sum_{(i,j),(k,l)\in G}\bP(g_i(t)\neq g_j(t), g_k(t)\neq g_l(t)).
  \end{align*}
By examining different ways of placing $i$, $j$, $k$, $l$, we have
\begin{align*}
  \bP(g_i(t) &\neq g_j(t),  g_k(t)\neq g_l(t))  \\
& = \left\{
    \begin{array}{ll}
      \frac{2t(n-t)}{n(n-1)} = p_1(t) & \text{ if } \left\{
        \footnotesize  \begin{array}{l}
            i = k,~j = l \\
            i = l,~j = k
          \end{array}
          \right. \\
      \frac{t(n-t)}{n(n-1)} = \frac{1}{2}p_1(t) & \text{ if } \left\{
       \footnotesize \begin{array}{l}
          i = k,~j\neq l \\
          i = l,~j\neq k \\
          j = k,~i\neq l \\
          j = l,~i\neq k
        \end{array}
        \right. \\
      \frac{4t(t-1)(n-t)(n-t-1)}{n(n-1)(n-2)(n-3)} = p_2(t)& \text{ if } i,j,k,l \text{ all different. }
    \end{array}
\right.
\end{align*}
So
\begin{align*}
  \bE(R_G^2(t)) & =  \sum_{(i,j)\in G} p_1(t) +  \sum_{\footnotesize \begin{array}{c} (i,j),(i,k)\in G \\ j\neq k \end{array}} \frac{1}{2}p_1(t) + \sum_{\footnotesize \begin{array}{c}(i,j),(k,l)\in G \\ i,j,k,l \text{ all different} \end{array}} p_2(t)  \\
  & = p_2(t) |G| + \left( \frac{1}{2}p_1(t) - p_2(t)\right) \sum_i|G_i|^2 + p_2(t)  |G|^2.
\end{align*}
$\bV(R_G(t))$ follows from $\bE(R_G^2(t)) - \bE^2(R_G(t))$.

\subsection{Proof of Theorem \ref{thm:limit}}
\label{sec:proofs-limit-distr}

We here prove that $\{Z_G([nu]):0<u<1\}$ converges to a Gaussian process.  The proof for the convergence of $\{Z_G([nu],[nv]):0<u<v<1\}$ to two-dimensional Gaussian random field can be done in the same manner but with a more careful treatment of the indices. 

To prove $\{Z_G([nu]):0<u<1\}$ converges to a Gaussian process, we only need to show that $(Z_G([nu_1]), Z_G([nu_2]), \dots, Z_G([nu_K]))$ becomes multivariate Gaussian as $n\rightarrow \infty$ for any $0<u_1<u_2<\dots <u_K<1$ and fixed K.  For simplicity, let $t_k=[nu_k], k=1,\dots,K$.

To prove that $(Z_G(t_1), Z_G(t_2), \dots, Z_G(t_K))$ is multivariate Gaussian, we take one step back.  In permutation distribution, we permute the order of the observations. Let $\pi(i)$ be the observed time of $\by_i$ after permutation, then $(\pi(1),\dots,\pi(n))$ is a permutation of $1,\dots,n$.  On the other hand, to obtain the permutation distribution, we can do it in two steps: 1) For each $i$, $\pi(i)$ is sampled uniformly from 1 to $n$; 2) only those that each value in $\{1,\dots,n\}$ is sampled once are retained.  It is easy to see that each permutation has the same occurrence probability after these two steps.  

 We call the distribution resulting from only performing the first step the bootstrap distribution, and we use $\bfPB$, $\bfEB$ and $\VB$ to denote the probability, expectation and variance, respectively. ($\bP$, $\bE$, $\bV$ without the subscript $\texttt{B}$ are used to denote the equivalences under the permutation distribution.) Let

 \begin{align*}
   Z_G^B(t) & = -\frac{R_G(t)-\bfEB(R_G(t))}{\sqrt{\VB(R_G(t))}}, \\
   X^B(t) & = \frac{n^B(t)-t}{\sqrt{t(1-t/n)}}, \text{ where }n^B(t) = \sum_{i=1}^n I_{\pi(i)\leq t}. 
 \end{align*}


Then following a similar argument in the proof for Lemma \ref{lemma:onechptstat} but replacing the permutation distribution with bootstrap distribution, we have
  \begin{align*}
    \bfEB(R_G(t)) & = p_1^B(t) |G|, \\
    \VB(R_G(t)) & = p_2^B(t) |G| + \left( \frac{1}{2}p_1^B(t) - p_2^B(t)\right) \sum_i|G_i|^2,
  \end{align*}
where
\begin{align*}
  p_1^B(t) & = \frac{2t(n-t)}{n^2}, &
  p_2^B(t) & = \frac{4t^2(n-t)^2}{n^4}. 
\end{align*}

  We prove the following two lemmas.
 
\begin{lemma}\label{lemma:bootstrap}
Under conditions \ref{cond:graph1} and \ref{cond:graph2}, for $0<u_1<u_2<\dots <u_k<1$, as $n\rightarrow\infty$, under the bootstrap distribution,
\begin{equation}\label{eq:star}
  (Z_G^B(t_1), Z_G^B(t_2),\dots,Z_G^B(t_K), X^B(t_1), X^B(t_2),\dots, X^B(t_K)) 
\end{equation}
 is multivariate normal and the covariance matrix of $$(X^B(t_1), X^B(t_2),\dots, X^B(t_K))$$ is positive definite.
\end{lemma}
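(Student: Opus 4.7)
\subsection*{Proof Proposal for Lemma \ref{lemma:bootstrap}}

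The plan is to establish the joint asymptotic normality via the Cramér--Wold device combined with Stein's method for locally dependent sums, exploiting the key simplification provided by the bootstrap construction: under $\bfPB$, the indices $\pi(1), \ldots, \pi(n)$ are i.i.d.\ uniform on $\{1,\ldots,n\}$. This independence is precisely what makes the Stein-method argument tractable, since dependence between summands then comes entirely from shared vertices in the graph rather than from the exchangeability constraint inherent in the permutation distribution.

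First I would rewrite each coordinate in \eqref{eq:star} as a sum of locally dependent random variables over a common index set that combines vertices and edges. For $X^B(t_k)$, the summand attached to vertex $i$ is $\big(I_{\pi(i)\le t_k} - t_k/n\big)/\sqrt{t_k(1-t_k/n)}$; for $Z_G^B(t_k)$, the summand attached to edge $e=(i,j)$ is a centered, normalized version of $I_{\pi(i)\le t_k < \pi(j)} + I_{\pi(j)\le t_k < \pi(i)}$. After applying Cramér--Wold, it suffices to show that $W_n = \sum_k a_k Z_G^B(t_k) + \sum_k b_k X^B(t_k)$ is asymptotically normal for every fixed vector $(a_1,\ldots,a_K,b_1,\ldots,b_K)$.

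Next I would identify the dependency neighborhood for each summand in $W_n$. A vertex-type summand at $i$ depends only on $\pi(i)$, and an edge-type summand at $e$ depends only on $\pi(e_-)$ and $\pi(e_+)$; hence the natural dependency neighborhood of an edge $e$ is $A_e$ (together with its two endpoints), and the second-order neighborhood is $B_e$, matching the quantities defined in \eqref{eq:Ae}--\eqref{eq:Be}. I would then invoke a standard Stein-method normal approximation bound for locally dependent random fields (as in Chen--Shao, cf.\ \cite{chen2005stein}), whose typical error term is of the form
\begin{equation*}
\frac{1}{\sigma_n^3}\sum_{e\in G}\bfEB\bigl|\xi_e\bigr|\cdot \bfEB\Bigl(\sum_{e'\in A_e}|\xi_{e'}|\Bigr)\Bigl(\sum_{e''\in B_e}|\xi_{e''}|\Bigr),
\end{equation*}
plus an analogous vertex contribution. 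Using $|\xi_e|=O(1/\sqrt{\VB R_G(t)})$ together with the variance formula for $R_G(t)$ from Lemma \ref{lemma:onechptstat} transcribed into the bootstrap setting, I expect Condition \ref{cond:graph1} to control the denominator $\sigma_n^3$ and Condition \ref{cond:graph2} to control the combinatorial numerator $\sum_{e}|A_e||B_e|$, yielding a bound that tends to zero. The vertex contributions are trivially controlled since the $X^B(t_k)$ summands are bounded and independent.

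Finally, the positive definiteness of the covariance matrix of $(X^B(t_1),\ldots,X^B(t_K))$ reduces to a direct calculation: with $\pi(i)$ i.i.d.\ uniform, $\cov_B(X^B(t_j),X^B(t_k)) = \sqrt{(t_j\wedge t_k)(n-(t_j\vee t_k))/((t_j\vee t_k)(n-(t_j\wedge t_k)))}$, which in the limit yields the Brownian-bridge covariance $\sqrt{u_j(1-u_k)/(u_k(1-u_j))}$ for $j\le k$; this matrix is well known to be strictly positive definite for distinct $u_1<\cdots<u_K$ in $(0,1)$. The main obstacle will be the bookkeeping for Stein's method: identifying the correct dependency neighborhoods jointly for vertex- and edge-indexed summands and verifying that the moment bounds assemble into exactly the expressions appearing in Conditions \ref{cond:graph1} and \ref{cond:graph2}, rather than into something stronger.
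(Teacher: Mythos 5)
Your proposal follows essentially the same route as the paper's proof: Cram\'er--Wold reduction, then Stein's method for locally dependent sums over a combined vertex-and-edge index set, with dependency neighborhoods built from $A_e$ and $B_e$ so that Condition \ref{cond:graph1} controls the normalizing variance and Condition \ref{cond:graph2} controls the error term $\sum_{e\in G}|A_e||B_e|$, and the positive definiteness handled by the explicit Brownian-bridge-type covariance (the paper verifies it via a Cholesky factorization of $\Sigma_X^{-1}$ rather than citing it as known). The one point to be careful about is your remark that the vertex contribution is ``trivially controlled by independence'': the vertex summands are mutually independent but not independent of the edge summands sharing that vertex, so their neighborhoods must include incident edges; the paper bounds the resulting term by $\sum_i |G_i||G_{i,2}| \leq 2\sum_{e\in G}|A_e||B_e|$, which folds it into the same quantity you already control.
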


\begin{lemma}\label{lemma:PBeq}
  When $|G|\sim o(n^2)$, for $t\sim \mathcal{O}(n)$, as $|G|\rightarrow \infty$, we have 
  \begin{enumerate}[1.]
  \item $$\frac{\VB(R_G(t))}{\bV(R_G(t))} \rightarrow 1.$$
  \item $$\frac{\bfEB(R_G(t))-\bE(R_G(t))}{\sqrt{\VB(R_G(t))}} \rightarrow 0. $$
  \end{enumerate}

\end{lemma}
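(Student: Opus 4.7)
The strategy is to read off both claims from the explicit moment formulas. Under permutation, Lemma~\ref{lemma:onechptstat} provides what I need. Under bootstrap, repeating the combinatorial expansion of $\bE(R_G^2(t))$ from the proof of Lemma~\ref{lemma:onechptstat}, but now with $\pi(i),\pi(j)$ \emph{independent} for $i\ne j$, gives
\[
\bfEB(R_G(t)) = p_1^B(t)\,|G|, \qquad
\VB(R_G(t)) = p_2^B(t)\,|G| + \bigl(\tfrac12 p_1^B(t)-p_2^B(t)\bigr)\sum_i|G_i|^2,
\]
with $p_1^B(t)=2t(n-t)/n^2$ and $p_2^B(t)=(p_1^B(t))^2$. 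Note that the $(p_2-p_1^2)|G|^2$ term which appears under permutation is absent here, because $p_2^B-(p_1^B)^2=0$ when the time labels are sampled independently.

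Part~2 I would do first, since it reduces to an elementary size estimate. One has $\bE(R_G(t))-\bfEB(R_G(t))=(p_1(t)-p_1^B(t))|G|$, and a short calculation gives $p_1(t)-p_1^B(t)=2t(n-t)/[n^2(n-1)]=O(n^{-1})$ uniformly in $t\sim\mathcal{O}(n)$. On the denominator side, $\VB(R_G(t))\ge p_2^B(t)\,|G|=\Theta(|G|)$ when $t/n$ stays away from $0$ and $1$, so
\[
\frac{|\bE(R_G(t))-\bfEB(R_G(t))|}{\sqrt{\VB(R_G(t))}} \;=\; O\!\left(\frac{\sqrt{|G|}}{n}\right) \;=\; o(1),
\]
where the last step uses exactly the standing hypothesis $|G|=o(n^2)$.

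For Part~1 I would subtract the two variance expressions,
\[
\bV(R_G(t))-\VB(R_G(t)) = (p_2-p_2^B)|G|+\bigl[\tfrac12(p_1-p_1^B)-(p_2-p_2^B)\bigr]\sum_i|G_i|^2 + (p_2-p_1^2)|G|^2,
\]
Taylor-expand the rational prefactors in $1/n$ to obtain $p_1-p_1^B=O(n^{-1})$, $p_2-p_2^B=O(n^{-1})$, and $p_2-p_1^2=O(n^{-1})$ for $t\sim\Theta(n)$, and then divide by the lower bound $\VB(R_G(t))\gtrsim p_2^B(t)|G|+p_1^B(t)\sum_i|G_i|^2$. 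The first two summands contribute $O(1/n)$ after division and are harmless. The main obstacle is the third summand $(p_2-p_1^2)|G|^2$: it is of size $O(|G|^2/n)$ while Cauchy--Schwarz only gives $\sum_i|G_i|^2\ge 4|G|^2/n$, so $|G|^2/n$ sits exactly at the Cauchy--Schwarz lower bound on the sum-of-squares part of $\VB$. Closing the argument therefore requires either a bit of additional graph regularity (ruling out extremal near-regular configurations) or a cancellation argument exploiting the fact that $(p_2-p_1^2)$ and $(p_1^B/2-p_2^B)$ have opposite signs at leading order, so that the third summand partially merges with the second; with such an input in place, the third contribution is also $o(\VB(R_G(t)))$, and the stated limit $\VB/\bV\to 1$ follows.
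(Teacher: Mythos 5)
Your Part 2 is correct and is essentially the paper's own argument: the numerator is $(p_1(t)-p_1^B(t))|G|=O(|G|/n)$ uniformly for $t\sim\mathcal{O}(n)$, the denominator is at least $\sqrt{p_2^B(t)|G|}=\Theta(\sqrt{|G|})$ since $\tfrac12 p_1^B-p_2^B\ge 0$, and the ratio is $O(\sqrt{|G|}/n)\to 0$ precisely under $|G|=o(n^2)$. Your bootstrap moment formulas, including the observation that $p_2^B=(p_1^B)^2$ removes the $|G|^2$ term from $\VB(R_G(t))$, agree with what the paper uses.

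For Part 1 you have put your finger on exactly the step the paper's proof does not justify: the paper simply asserts that $\bV(R_G(t))$ and $\VB(R_G(t))$ share the limit $4u^2(1-u)^2|G|+u(1-u)(1-2u)^2\sum_i|G_i|^2$, silently discarding the $(p_2-p_1^2)|G|^2$ term. Your suspicion that this term sits at the critical order is right, but neither of your proposed repairs can succeed, because the term is genuinely not $o(\VB(R_G(t)))$. A Taylor expansion gives $p_2(t)-p_1^2(t)=-4u(1-u)(1-2u)^2/n+O(n^{-2})$, so the cancellation you anticipate does occur, and it yields $\bV(R_G(t))\approx 4u^2(1-u)^2|G|+u(1-u)(1-2u)^2\left(\sum_i|G_i|^2-4|G|^2/n\right)$; what survives next to $\sum_i|G_i|^2$ is the degree \emph{variance} $\sum_i(|G_i|-2|G|/n)^2$, not the raw second moment. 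For the path graph (the MST of one-dimensional data: $|G|=n-1$, $\sum_i|G_i|^2=4n-6$) this residual is $O(1)$ and $\bV(R_G(t))$ collapses to the classical runs-statistic variance $2t(n-t)(2t(n-t)-n)/(n^2(n-1))\approx 4u^2(1-u)^2 n$, whereas $\VB(R_G(t))\approx (4u(1-u)-12u^2(1-u)^2)n$; the ratio tends to $(1-3u(1-u))/(u(1-u))$, which equals $1$ only at $u=1/2$ (it is $7/3$ at $u=1/4$). So the obstacle you flagged is not closable by extra regularity or by the sign cancellation: Part 1 as stated fails for one of the paper's principal graphs, and the defect lies in the lemma and the paper's proof of it rather than in your estimate. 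A correct version requires an additional hypothesis forcing $|G|^2/n=o\left(|G|+\sum_i(|G_i|-2|G|/n)^2\right)$, for instance $|G|=o(n)$.
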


From Lemma \ref{lemma:bootstrap}, $(Z_G^B(t_1), Z_G^B(t_2),\dots,Z_G^B(t_K)|X^B(t_1), X^B(t_2),\dots,X^B(t_K)$ is multivariate normal under the bootstrap distribution.  Since $(Z_G^B(t_1), Z_G^B(t_2),\allowbreak\dots,Z_G^B(t_K)|X^B(t_1)=0, X^B(t_2)=0,\dots,X^B(t_K)=0)$ under the bootstrap distribution has the same distribution as $(Z_G^B(t_1),Z^B(t_1),\dots,Z^B(t_K))$ under the permutation distribution, and notice that $$ Z_G(t) = \frac{\VB(R_G(t))}{\bV(R_G(t))} \left( Z_G^B(t) - \frac{\bfEB(R_G(t))-\bE(R_G(t))}{\sqrt{\VB(R_G(t))}} \right), $$ we conclude that
$(Z_G([nu_1]), Z_G([nu_2]), \dots, Z_G([nu_K]))$ is multivariate Gaussian under the permutation distribution.

We next prove the two lemmas.

\begin{proof}[Proof for Lemma \ref{lemma:bootstrap}]  To show that \eqref{eq:star} is multivariate normal, we only need to show that 
$\sum_{k=1}^K (a_k Z_G^B(t_k) + b_k X^B(t_k))$ is normal for any fixed $\{a_k\}$ and $\{b_k\}$.

If $\VB(\sum_{k=1}^K (a_k Z_G^B(t_k) + b_k X^B(t_k)))=0$, $\sum_{k=1}^K (a_k Z_G^B(t_k) + b_k X^B(t_k))$ is degenerating and we can claim any distribution for it.  For non-degenerating case, let $\sigma_0^2:=\VB(a_k Z_G^B(t_k) + b_k X^B(t_k)))$.  Then $\sigma_0\sim\mathcal{O}(1)$.  We prove the Gaussianity of $\sum_{k=1}^K (a_k Z_G^B(t_k) + b_k X^B(t_k))$  by the Stein's method.  

 Consider sums of the form
$W=\sum_{i\in{\cal J}} \xi_i,$
where $\mathcal{J}$ is an index set and $\xi$ are random variables with $E[\xi_i]=0$, and $E[W^2]=1$.  The following assumption restricts the dependence between $\{\xi_i:~i \in \mathcal{J}\}$.
\begin{assumption} \cite[p.\, 17]{chen2005stein}
  \label{assump:LD}
For each $i\in{\cal J}$ there exists $S_i \subset T_i \subset {\cal J}$ such that $\xi_i$ is independent of $\xi_{S_i^c}$ and $\xi_{S_i}$ is independent of $\xi_{T_i^c}$.
\end{assumption}
We will use the following specific form of Stein's method.
\begin{theorem}\label{thm:3.4} \cite[Theorem 3.4]{chen2005stein}
Under Assumption \ref{assump:LD}, we have
$$\sup_{h\in Lip(1)} |\bfE h(W) - \bfE h(Z)| \leq \delta$$
where $Lip(1) = \{h: \mathbb{R}\rightarrow \mathbb{R} \}$, $Z$ has ${\cal N}(0,1)$ distribution and
 $$\delta = 2 \sum_{i\in{\cal J}} (\bfE|\xi_i \eta_i\theta_i| + |\bfE(\xi_i\eta_i)|\bfE|\theta_i|) + \sum_{i\in{\cal J}} \bfE|\xi_i\eta_i^2|$$
with $\eta_i = \sum_{j\in S_i}\xi_j$ and $\theta_i = \sum_{j\in T_i} \xi_j$, where $S_i$ and $T_i$ are defined in Assumption \ref{assump:LD}.
\end{theorem}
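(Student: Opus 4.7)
The plan is to use Stein's characterization of $\mathcal{N}(0,1)$ together with the local-dependence structure encoded by $S_i$ and $T_i$. For any $h \in \mathrm{Lip}(1)$, let $f = f_h$ be the solution to the Stein equation $f'(w) - w f(w) = h(w) - \bfE h(Z)$. I will use the standard regularity bounds $\|f_h'\|_\infty \le 1$ and $\|f_h''\|_\infty \le 2$ (available for Lipschitz test functions; see e.g.\ Lemma~2.4 of \cite{chen2005stein}). The problem then reduces to bounding
\[
\bfE h(W) - \bfE h(Z) \;=\; \bfE[f'(W) - W f(W)],
\]
which I will do by exploiting the two independence statements in Assumption~\ref{assump:LD}.

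\textbf{Step 1 (first independence, local centering).} Since $W = \sum_{i \in \mathcal{J}} \xi_i$, I would write $\bfE[W f(W)] = \sum_i \bfE[\xi_i f(W)]$ and split $f(W) = f(W - \eta_i) + (f(W) - f(W-\eta_i))$. Because $\xi_i$ is independent of $\xi_{S_i^c}$ and $W - \eta_i = \sum_{j \notin S_i} \xi_j$ depends only on $\xi_{S_i^c}$, we get $\bfE[\xi_i f(W - \eta_i)] = \bfE[\xi_i]\bfE[f(W-\eta_i)] = 0$. A first-order Taylor expansion $f(W) - f(W-\eta_i) = \eta_i f'(W) - \tfrac12 \eta_i^2 f''(\xi^{(i)})$ then gives
\[
\bfE[\xi_i f(W)] \;=\; \bfE[\xi_i \eta_i f'(W)] \;+\; R_i^{(1)}, \qquad |R_i^{(1)}| \le \tfrac{\|f''\|_\infty}{2}\,\bfE|\xi_i \eta_i^2|.
\]

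\textbf{Step 2 (second independence, decoupling from $f'(W)$).} Next I split $f'(W) = f'(W - \theta_i) + \bigl(f'(W) - f'(W-\theta_i)\bigr)$. The quantity $\xi_i \eta_i$ is measurable with respect to $\xi_{S_i}$ (since $\eta_i$ is, and I take $i \in S_i$ in the convention), and $W - \theta_i$ depends only on $\xi_{T_i^c}$, so by the second independence assumption $\xi_{S_i} \perp \xi_{T_i^c}$ we have $\bfE[\xi_i \eta_i f'(W - \theta_i)] = \bfE[\xi_i \eta_i]\,\bfE[f'(W-\theta_i)]$. Using $\|f''\|_\infty \le 2$ twice, I pick up the two remainder terms
\[
\bigl|\bfE[\xi_i \eta_i (f'(W) - f'(W-\theta_i))]\bigr| \le \|f''\|_\infty\,\bfE|\xi_i \eta_i \theta_i|, \qquad
\bigl|\bfE[f'(W-\theta_i)] - \bfE[f'(W)]\bigr| \le \|f''\|_\infty\,\bfE|\theta_i|.
\]

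\textbf{Step 3 (cancellation using $\bfE W^2 = 1$).} Summing the results of Steps~1 and~2 over $i$, the leading term becomes $\bigl(\sum_i \bfE[\xi_i \eta_i]\bigr)\bfE[f'(W)]$. The key identity is that $\bfE[\xi_i W] = \bfE[\xi_i \eta_i]$, obtained by noting $\bfE[\xi_i \xi_j] = 0$ whenever $j \notin S_i$ (using $\xi_i \perp \xi_{S_i^c}$ and $\bfE[\xi_i]=0$); hence $\sum_i \bfE[\xi_i \eta_i] = \bfE[W^2] = 1$. This makes the leading term exactly $\bfE[f'(W)]$, which cancels against the $\bfE[f'(W)]$ appearing from the Stein equation. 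Inserting $\|f''\|_\infty \le 2$ into the remainders from Steps~1 and~2 recovers the stated $\delta$, where the factor $2$ attaches to the $\bfE|\xi_i \eta_i \theta_i|$ and $|\bfE(\xi_i\eta_i)|\bfE|\theta_i|$ terms while the factor $1$ attaches to $\bfE|\xi_i \eta_i^2|$ (the latter carries the $\tfrac12$ from Taylor combined with the bound $\|f''\|_\infty \le 2$).

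\textbf{Main obstacle.} The non-routine part is the bookkeeping in Step~2: one must verify that $\xi_i \eta_i$ (a function of $\xi_{S_i}$ provided $i \in S_i$) is genuinely independent of $W - \theta_i$ (a function of $\xi_{T_i^c}$). This relies on the full joint independence $\xi_{S_i} \perp \xi_{T_i^c}$, not merely the marginal independence $\xi_i \perp \xi_{S_i^c}$; and it is the reason both a smaller neighborhood $S_i$ and a larger one $T_i$ appear in the assumption. Once this decoupling is justified, the remainder of the argument is a mechanical combination of Stein's lemma, Taylor expansion with the universal bound $\|f_h''\|_\infty \le 2$, and the variance normalization $\bfE W^2 = 1$.
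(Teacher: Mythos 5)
The paper does not prove this statement: it is imported verbatim as \cite[Theorem 3.4]{chen2005stein} and used as a black box in the proof of Lemma \ref{lemma:bootstrap}, so there is no in-paper argument to compare against. Your sketch is the standard Chen--Shao local-dependence proof and it is correct: the two independence layers are used exactly where they must be (the first to kill $\bfE[\xi_i f(W-\eta_i)]$, the second to factor $\bfE[\xi_i\eta_i f'(W-\theta_i)]$), the identity $\sum_i \bfE[\xi_i\eta_i]=\bfE[W^2]=1$ produces the cancellation with $\bfE[f'(W)]$ from the Stein equation, and the bookkeeping with $\|f_h''\|_\infty\le 2$ reproduces the coefficients in $\delta$ precisely (the $\tfrac12$ from Taylor absorbing the factor $2$ on the $\bfE|\xi_i\eta_i^2|$ term). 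You are also right to flag the convention $i\in S_i$ as the one point that must be checked for the Step~2 factorization to be legitimate; this is implicit in the cited source and holds in the paper's application.
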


We adopt the same notation with the index set $\mathcal{J} = \{G, 1,\dots,n\}$.

Let
$$\xi_{e,k} = \frac{I_{g_{\pi(e_-)}(t_k)\neq g_{\pi(e_+)}(t_k)} - p_1^B(t_k)}{\sigma^B(t_k)},$$
 Since $I_{g_{\pi(e_-)}(t_k)\neq g_{\pi(e_+)}(t_k)} \in \{0,1\}, p_1^B(t_k)\in (0,1)$, we have
$$|\xi_{e,k}| \leq \frac{1}{\sigma^B(t_k)}.$$
Let $$\xi_{i,k} = \frac{I_{\pi(i)\leq t_k} - u_k}{\sqrt{nu_k(1-u_k)}}.$$
Similarly, we have $$|\xi_{i,k}| \leq \frac{1}{\sqrt{nu_k(1-u_k)}}.$$
Let $\xi_e = \sum_k a_k \xi_{e,k}/\sigma_0$, $\xi_i = \sum_k b_k \xi_{i,k}/\sigma_0$, then $W = \sum_{j\in \mathcal{J}}\xi_j = \sum_k (a_k Z_G^B(t_k) + b_k X^B(t_k))/\sigma_0$, $\bfEB(W) = 0$, $\bfEB(W^2) = 1$. 
Let $a = \max(\max_k a_k, \max_k b_k)$, $\sigma = \min(\min_k\sigma^B(t_k), \min_k\sqrt{nu_k(1-u_k)})$.  Then
$$|\xi_j| \leq \frac{aK}{\sigma \sigma_0}, \quad \forall j\in \mathcal{J}.$$
For $e\in G$, let 
\begin{align*}
  S_e & = \{A_e, e^-, e^+\}, \\
  T_e & = B_e \cup \{\text{Nodes in } A_e\},
\end{align*}
where $A_e$, $B_e$ defined in \eqref{eq:Ae} and \eqref{eq:Be}.  Then $S_e$ and $T_e$ satisfy Assumption \ref{assump:LD}.

For $i=1,\dots,n$, let
\begin{align*}
  S_i & = G_i \\
  T_i & = G_{i,2} \cup \{\text{Nodes in } G_i\},
\end{align*}
where $G_{i,2}$ is the subgraph of $G$ including all edges connect to $G_i$.  Then $S_i$ and $T_i$ satisfy Assumption \ref{assump:LD}.

We have $|S_e| = |A_e|+2$, $|T_e| = |B_e| + |A_e|+1$, $|S_i| = |G_i|$, $|T_i| = |G_{i,2}| + |G_i| + 1$.

By Theorem \ref{thm:3.4}, we have $\sup_{h\in Lip(1)} |\bE h(W) - \bE h(Z)| \leq \delta$ for $Z \sim {\cal N}(0,1)$, where
\allowdisplaybreaks
\begin{align*}
  \delta & = 2 \sum_{j\in \mathcal{J}} (\bfE|\xi_j \eta_j\theta_j| + |\bfE(\xi_j\eta_j)|\bfE|\theta_j|) + \sum_{j\in \mathcal{J}} \bfE|\xi_j\eta_j^2| \\
& = 2 \sum_{e\in G} (\bfE|\xi_e \eta_e\theta_e| + |\bfE(\xi_e\eta_e)|\bfE|\theta_e|) + \sum_{e\in G} \bfE|\xi_e\eta_e^2| \\
& \quad \quad + 2 \sum_{i=1}^n (\bfE|\xi_i \eta_i\theta_i| + |\bfE(\xi_i\eta_i)|\bfE|\theta_i|) + \sum_{i=1}^n \bfE|\xi_i\eta_i^2| \\
& \leq \frac{a^3K^3}{\sigma^3\sigma_0^3} \left(\sum_{e\in G} 5(|A_e|+2)(|B_e|+|A_e|+1) + \sum_{i=1}^n 5 |G_i|(|G_{i,2}|+|G_i|+1) \right) \\
& \leq \frac{a^3K^3}{\sigma^3\sigma_0^3} \left(45\sum_{e\in G}|A_e||B_e| + 15\sum_{i=1}^n |G_i||G_{i,2}| \right)
\end{align*} 
Observe that if $e=(i,j)$, then $G_i, G_j \subseteq A_e$, $G_{i,2}, G_{j,2} \subseteq B_e$.  For each node $\by_i$, we can randomly pick an edge $e$ that connects node $\by_i$, and we have $|G_i||G_{i,2}|\leq |A_e||B_e|$.  Each node in the graph can be picked twice in maximum since an edge connects two nodes, therefore,
$$\sum_{i=1}^n |G_i||G_{i,2}| \leq 2\sum_{e\in G}|A_e||B_e|.$$  So
$$\delta\leq \frac{75 a^3K^3}{\sigma^3\sigma_0^3} \sum_{e\in G}|A_e||B_e|.$$
 
Notice that $\sigma\sim\mathcal{O}(\min(n^{1/2},|G|^{1/2}))$.  When $|G|\sim\mathcal{O}(n^\alpha)$, we have $\sigma\sim\mathcal{O}(n^{0.5(\alpha\wedge 1)})$.  When $\sum_{e\in G}|A_e||B_e| \sim o(n^{1.5(\alpha\wedge 1)})$,  we have $\delta\rightarrow 0$ as $n\rightarrow \infty$.

Let $\Sigma_X$ be the covariance matrix of $(X^B(t_1),X^B(t_2),\dots,X^B(t_K))$.  It's not hard to derive that for $i\leq j$,
$$\Sigma_X(i,j) = \Sigma_X(j,i) = \frac{t_i(1-t_j/n)}{\sqrt{t_i(1-t_i/n)t_j(1-t_j/n)}}, $$
where $t_{K+1} \overset{\Delta}{=} n$.

$\Sigma_X^{-1}$ admits the Cholesky decomposition
$$\Sigma_X^{-1} = L L^\prime,$$
where
\begin{align*}
  L(i,j) = \left\{ \begin{array}{ll} \frac{\sqrt{1-t_i/n}}{\sqrt{1-t_i/t_{i+1}}}, & j=i; \\ -\frac{t_i \sqrt{1-t_i/n}}{t_{i+1} \sqrt{1-t_i/t_{i+1}}}, & j=i-1; \\ 0, & \text{otherwise.}  \end{array}\right.
\end{align*}

Therefore, $$|\Sigma_X| = \frac{\prod_{k=1}^K (1-t_k/t_{k+1})}{\prod_{k=1}^K(1-t_k/n)},$$
is positive definite.

\end{proof} 

\begin{proof}[Proof for Lemma \ref{lemma:PBeq}]
Let $u=\lim_{n\rightarrow\infty} t/n$, then
\begin{align*}
  \lim_{n\rightarrow \infty} p_1(t) & = \lim_{n\rightarrow \infty} p_1^B(t) = 2u(1-u), \\
  \lim_{n\rightarrow \infty} p_2(t) & = \lim_{n\rightarrow \infty} p_2^B(t) = 4u^2(1-u)^2, \\
  \lim_{n\rightarrow \infty} \bV(R_G(t)) & = \lim_{n\rightarrow \infty} \VB(R_G(t)) = 4u^2(1-u)^2|G| + u(1-u)(1-2u)^2 \sum_i|G_i|^2.
\end{align*}
So $$\frac{\VB(R_G(t))}{\bV(R_G(t))} \rightarrow 1.$$
Since
\begin{align*}
  \bfEB(R_G(t))-\bE(R_G(t)) & = (p_1^B(t)-p_1(t))|G| = -\frac{2t(n-t)}{n^3}|G|,
\end{align*}
we have
\begin{align*}
  \lim_{n\rightarrow \infty} & \frac{\bfEB(R_G(t))-\bE(R_G(t))}{\sqrt{\VB(R_G(t))}}  = - \lim_{n\rightarrow \infty} \frac{2u(1-u)|G|/n}{\sqrt{4u^2(1-u)^2|G| + u(1-u)(1-2u)^2 \sum_i|G_i|^2}} \\
& = -\lim_{n\rightarrow \infty} \frac{2u(1-u)}{\sqrt{4u^2(1-u)^2n^2/|G| + u(1-u)(1-2u)^2 n^2\sum_i|G_i|^2/|G|^2}},
\end{align*}
which is 0 when $|G|\sim o(n^2)$.

\end{proof}


\subsection{Proof of Lemma \ref{lemma:rho}}
\label{sec:proof-lemma-rho}

First observe that $\rho^\star_G(u,u)=1$, which holds for \eqref{eq:rho}.  Because of the interchangeability of $u$ and $v$ in the definition of $\rho_G(u,v)$, it is enough to show that when $u<v$, 
\begin{equation}
  \label{eq:rhouv}
  \rho^\star_G(u,v) = \frac{2u^2(1-v)^2|G| + u(1-v))(1-2u)(1-2v)\sum_i|G_i|^2 }{\sigma^\star_G(u) \sigma^\star_G(v)}.
\end{equation}
  Let $\rho_{G,n}(u,v) \overset{\Delta}{=}  \cov(Z_G([nu]), Z_G([nv]))$, then $\rho_G(u,v) = \lim_{n\rightarrow\infty} \rho_{G,n}(u,v)$. 
 Let $s=[nu]$, $t=[nv]$, then $s<t$, and $\lim_{n\rightarrow\infty} s/n = u$, $\lim_{n\rightarrow\infty} t/n = v$.
Since 
\begin{align*}
  \cov(Z_G(s),Z_G(t)) & = \frac{\bE(R_G(s)R_G(t)) -\bE(R_G(s))\bE(R_G(t))}{\sqrt{\bV(R_G(s))\bV(R_G(t))}},
\end{align*}
where the expressions for $\bE(R_G(s))$, $\bE(R_G(t))$, $\bV(R_G(s))$, $\bV(R_G(t))$ can be found in Lemma \ref{lemma:onechptstat}, we only need to figure out 
\begin{align*}
  \bE(R_G(s)R_G(t))  = \sum_{(i,j),(k,l)\in G} \bP(g_i(s)\neq g_j(s), g_k(t)\neq g_l(t)).
\end{align*}
By examining different ways of placing $i$, $j$, $k$, $l$, we have
\begin{align*}
  \bP&[g_i(s) \neq g_j(s),  g_k(t)\neq g_l(t)]  \\
& = \left\{
    \begin{array}{ll}
      \frac{2s(n-t)}{n(n-1)} := q_1(s,t) & \text{ if } \left\{\footnotesize
          \begin{array}{l}
            i = k, j = l \\
            i = l, j = k
          \end{array}
          \right. \\
      \frac{s(n-t)(n+2t-2s-2)}{n(n-1)(n-2)} := q_2(s,t) & \text{ if } \left\{ \footnotesize
        \begin{array}{l}
          i = k, j\neq l \\
          i = l, j\neq k \\
          j = k, i\neq l \\
          j = l, i\neq k
        \end{array}
        \right. \\
      \frac{4 s(n-t)[(s-1)(n-s-1) + (t-s)(n-s-2)]}{n(n-1)(n-2)(n-3)} := q_3(s,t) & \text{ if } i,j,k,l \text{ all different. }
    \end{array}
\right.
\end{align*}
Then
\begin{align*}
  &\bE(R_G(s)R_G(t))  = \sum_{(i,j),(k,l)\in G} \bP(g_i(s)\neq g_j(s), g_k(t)\neq g_l(t))\\
& =  \sum_{(i,j)\in G} q_1(s,t) + \sum_{\footnotesize \begin{array}{c} (i,j),(i,k)\in G \\ j\neq k \end{array}} q_2(s,t) + \sum_{\footnotesize \begin{array}{c} (i,j),(k,l)\in G \\ i,j,k,l \text{ all different } \end{array}} q_3(s,t)   \\
& =  (q_1(s,t) - 2 q_2(s,t) +  q_3(s,t)) |G| + (q_2(s,t) - q_3(s,t)) \sum_{i=1}^n |G_i|^2  + q_3(s,t)|G|^2.
\end{align*}
So
\begin{align*}
  \lim_{n\rightarrow\infty}\bE(R_G(s)R_G(t)) & = 4u^2(1-v)^2|G|+u(1-v)(1-2u)(1-2v)\sum_{i=1}^n|G_i|^2 \\
  & \quad + 4uv(1-u)(1-v)|G|^2.
\end{align*}
Together with
\begin{align*}
  \lim_{n\rightarrow\infty}\bE(R_G(s)) & = 2u(1-u)|G|, \\
  \lim_{n\rightarrow\infty}\bV(R_G(s)) & = 4u^2(1-u)^2|G| + u(1-u)(1-2u)^2\sum_{i=1}^n|G_i|^2, 
\end{align*}
and similar for $R_G(t)$, we have \eqref{eq:rhouv}.

\subsection{Proof of Proposition \ref{thm:asym}}
\label{sec:proof-prop-refthm}

We first show the single change-point case.  We adopt Woodroofe's method \citep{woodroofe1976frequentist, woodroofe1978large} by condition on the first cross-over.
\begin{align}
&  \bP(\max_{n_0\leq t \leq n_1} Z_G^\star(t/n) > b) \nonumber \\
 & = \sum_{n_0\leq t\leq n_1} \int_0^\infty \bP(Z_G^\star(t/n) = b+dx) \bP(\max_{n_0\leq s<t } Z_G^\star(s/n)<b | Z_G^\star(t/n) = b+dx )  \label{proofline1} 
\end{align}
By change of measure and rearranging the terms, we have
\begin{align*}
  &  \bP(\max_{n_0\leq t \leq n_1} Z_G^\star(t/n) > b) \\
& = \frac{\phi(b)}{b} \sum_{n_0\leq t \leq n_1} \int_0^\infty e^{-x-\frac{x^2}{2b^2}} \bP(\max_{n_0\leq s<t }  b(Z_G^\star(s/n)-Z_G^\star(t/n))<-x | Z_G^\star(t/n) = b+\frac{x}{b}) dx.
\end{align*}
Since $b\rightarrow\infty$, if $x\sim o(b^2)$, then $\frac{x^2}{2b^2}$ is negligible to $x$ and $\frac{x}{b}$ is negligible to $b$; while if $x\sim \mathcal{O}(b)$, then $x+\frac{x^2}{2b^2} \rightarrow \infty$, and the integrand becomes 0, so  
\begin{align*}
  &  \bP(\max_{n_0\leq t \leq n_1} Z_G^\star(t/n) > b) \\
& \approx \frac{\phi(b)}{b} \sum_{n_0\leq t \leq n_1} \int_0^\infty e^{-x} \bP(\max_{n_0\leq s<t }  b(Z_G^\star(s/n)-Z_G^\star(t/n))<-x | Z_G^\star(t/n) = b) dx.
\end{align*}



Notice that for $u<v$,
$$b(Z_G^\star(u)-Z_G^\star(v))|(Z_G^\star(v)=b) \sim \mathcal{N}((\rho_G(u,v) -1)b^2, (1-\rho_G^2(u,v))b^2).$$


Let $\delta=v-u$, by Taylor expansion, we have
\begin{align*}
  \rho_G(u,v) & = 1 + f_{v,-}^\prime(0) \delta + f_{v,-}^{\prime\prime}(0) \delta^2/2 + \mathcal{O}(\delta^3), \\
  \rho_G^2(u,v) & = 1 + 2 f_{v,-}^\prime(0) \delta + ((f_{v,-}^\prime)^2 + f_{v,-}^{\prime\prime}(0))\delta^2 + \mathcal{O}(\delta^3).
\end{align*}

So for $\delta\sim \mathcal{O}(n^{-1})$,
$$b(Z_G^\star(u)-Z_G^\star(v))|(Z_G^\star(v)=b) \sim \mathcal{N} (-f_{v,-}^\prime(0)|\delta| b^2, 2f_{v,-}^\prime(0) |\delta|  b^2) .$$

One can show that, for $b = b_0\sqrt{n}$, and $n\rightarrow \infty$,
$$\lim_{k\rightarrow\infty} \limsup_{n\rightarrow\infty} \sum_{|i-t|>k} \bP(Z_G^\star(i/n)>b|Z_G^\star(t/n)=b+dx) = 0. $$



Let $W_m^{(t)}$ be a random walk with $W_1^{(t)}\sim \mathcal{N}(\mu^{(t)}, (\sigma^{(t)})^2)$, where $\mu^{(t)} = \frac{1}{n}f_{v,-}^\prime(0) b^2, (\sigma^{(t)})^2 = 2\mu^{(t)}$.  Then
\begin{align*}
  \bP(\max_{n_0\leq s<t }  b(Z_G^\star(s/n)-Z_G^\star(t/n))<-x | Z_G^\star(t/n) = b)& \sim \bP(\max_{n_0\leq s\leq t} -W_{t-s}^{(t)}<-x) \\
  &\sim \bP(\min_{m\geq 1} W_m^{(t)}>x).
\end{align*}

Together with the fact $$\int_0^\infty \exp{-2\mu x/\sigma} \bP(\min_{m\geq 1} W_m >x) dx =\mu \nu(2\mu/\sigma),$$
for a random walk $W_1\sim\mathcal{N}(\mu,\sigma)$ (see \citet{siegmund1992tail}), we have
\begin{align*}
  \lim_{n\rightarrow\infty}\bP(\max_{n_0\leq t \leq n_1} Z_G^\star(t/n) > b) & \approx \lim_{n\rightarrow\infty} \frac{\phi(b)}{b} \sum_{n_0\leq t \leq n_1} b_0^2 f_{t/n,-}^\prime(0) \nu (b_0\sqrt{2f_{t/n,-}^\prime(0)}) 
\end{align*}

For $f_{t/n,-}^\prime(0)$, we take the derivative of $\rho_G^\star(u,v)$, and after some tedious calculation, we have 
\begin{equation}
  \label{eq:rho0}
  f_{v,-}^\prime(0) = \frac{1}{2v(1-v)} + \frac{2}{4v(1-v)+(1-2v)^2(\sum_i|G_i|^2/|G|-4|G|)}.
\end{equation}

Putting everything together, we have
\begin{align*}
  \lim_{n\rightarrow\infty}\bP(\max_{n_0\leq t \leq n_1} Z_G^\star(t/n) > b) & \approx \lim_{n\rightarrow\infty} \frac{\phi(b)}{b} \sum_{n_0\leq t \leq n_1} b_0^2 h^*_{r_0,r_1}(t/n) \nu (b_0\sqrt{2h^*_{r_0,r_1}(t/n)}) \\
 & = \frac{\phi(b)}{b} \int_{x_0}^{x_1} b_0^2 h^*_{r_0,r_1}(x) \nu (b_0\sqrt{2h^*_{r_0,r_1}(x)}) n dx \\
 & =  b \phi(b) \int_{x_0}^{x_1} h^*_{r_0,r_1}(x)\nu\left(b_0\sqrt{2h^*_{r_0,r_1}(x)}\right)dx.
\end{align*}

Now, we show the changed interval case following the method of \citet{siegmund1988approximate,siegmund1992tail}.  We omit most of the technical details, which follow these two papers given that 
$$\rho^\star_{G, (u_1,u_2)}(\delta_1,\delta_2) \overset{\Delta}{=} \cov(Z_G^\star(u_1-\delta_1, u_2-\delta_2), Z_G^\star(u_1,u_2)). $$
 is differentiable with the derivative being continuous except at $\delta_1=0$ and at $\delta_2=0$.

A key intermediate form  is
\begin{align}
 & \bP \left(\max_{n_0\leq t_2-t_1 \leq n_1} Z_G^\star(t_1/n,t_2/n) > b \right)  \nonumber \\
  & \approx \frac{\phi(b)}{b} \sum_{ n_0 \leq t_2-t_1 \leq n_1 }  C_1(t_1,t_2)b^2 C_2(t_1,t_2) b^2 \times \nu\left(\sqrt{2C_1(t_1,t_2)b^2}\right) \nu\left(\sqrt{2C_2(t_1,t_2)b^2}\right), \label{intermediate}
\end{align}
 where $C_1$, $C_2$ are the partial derivatives $$C_1(nu_1,nu_2) \equiv \frac{1}{n} \left.\frac{\partial_- \rho^\star_{G,(u_1,u_2)}(\delta_1,0)}{\partial \delta_1}\right|_{\delta_1=0} = - \frac{1}{n} \left.\frac{\partial_+ \rho^\star_{G,(u_1,u_2)}(\delta_1,0)}{\partial \delta_1}\right|_{\delta_1=0} ,$$
$$C_2(nu_1,nu_2) \equiv -\frac{1}{n} \left.\frac{\partial_+ \rho^\star_{G,(u_1,u_2)}(0, \delta_2)}{\partial \delta_2}\right|_{\delta_2=0}.$$
Under the permutation null, the processes derived from perturbation of the left and right end points,
$$ Z_G^\star((t_1+k)/n,t_2/n), \quad k = \dots,-2,-1,0,1,2,\dots$$
and
$$ Z_G^\star(t_1/n,(t_2-k)/n), \quad k = \dots,-2,-1,0,1,2,\dots,$$
are identical in distribution to the process
$$ Z_G^\star((t_2-t_1-k)/n), \quad k = \dots,-2,-1,0,1,2,\dots,$$
Thus, the partial derivatives are equal to the derivative in the one change-point scenario, $$C_1(t_1,t_2)=C_2(t_1,t_2) = \frac{1}{n}f_{u_2-u_1,-}^\prime(0).$$  Substituting $\frac{1}{n}f_{u_2-u_1,-}^\prime(0)$  for $C_1(t_1,t_2)$ and $C_2(t_1,t_2)$ in \eqref{intermediate} and the double summation goes to an integral as $n\rightarrow\infty$ yields \eqref{eq:twochptp1star}.

\section{Skewness Correction}
\label{sec:skewness-correction-1}

\subsection{Derivation of \eqref{eq:onechptp2} and \eqref{eq:twochptp2}}
\label{sec:skewness}

We first show how to approximate the marginal probability $\bP(Z_G(t)\in b+dx/b)$ better by incorporating skewness.  In the derivation below we suppress the dependence on the graph $G$ and the time parameter $t$. Consider the probability measure $d \bQ_\theta = e^{\theta Z - \psi(\theta)} d\bP$, where $\psi(\theta) = \log\bE_{\bP}(e^{\theta Z}).$  Choose $\theta_b$ such that $\dot{\psi}(\theta_b) = \bE_{\bQ_{\theta_b}}(Z) = b$.  Then,
\begin{align}
 \bP(Z\in b+dx/b) & = \bE_\bP(\mathbf{1}_{Z\in b+dx/b})  \approx e^{-\theta_b (b+x/b) + \psi(\theta_b)} \bQ_{\theta_b} (Z\in b+dx/b). \label{eq:Qb}
\end{align}
Since under $\bQ_{\theta_b}$, $Z$ is centered at $b$ with variance $\overset{..}{\psi}(\theta_b)$, $\bQ_{\theta_b}(Z \in b+dx/b)$ can be approximated by the normal density,
\begin{equation}\label{Qthetab} \bQ_{\theta_b} (Z\in b+dx/b) \approx \frac{1}{\sqrt{2\pi \overset{..}{\psi}(\theta_b)}} \exp\left(-\frac{x^2}{2b^2 \overset{..}{\psi}(\theta_b)}\right) \approx \frac{1}{\sqrt{2\pi \overset{..}{\psi}(\theta_b)}}. \end{equation}
The second approximation above is accurate for $x/b \rightarrow 0$.

To obtain $\psi(\theta_b)$ and $\overset{..}{\psi}(\theta_b)$, we use Taylor expansions, noting that  $\psi(0)=\dot{\psi}(0) =0, \overset{..}{\psi}(0) = 1, \overset{\dots}{\psi}(0) = \bE_\bP(Z^3) \overset{\Delta}{=} \gamma$:
\begin{align} \label{psitheta}
  \psi(\theta) & \approx \psi(0) + \dot{\psi}(0) \theta + \overset{..}{\psi}(0) \frac{\theta^2}{2} + \overset{\dots}{\psi}(0) \frac{\theta^3}{6} = \frac{\theta^2}{2}\left( 1+ \frac{\gamma \theta}{3}\right), \\
\overset{..}{\psi}(\theta) & \approx \overset{..}{\psi}(0) + \overset{\dots}{\psi}(0) \theta = 1 + \gamma \theta. \label{psidotdottheta}
\end{align}
Combining \eqref{eq:Qb}, \eqref{Qthetab},\eqref{psitheta} and \eqref{psidotdottheta} gives
\begin{align} \label{skewcorrectedmarg}
\bP(Z\in b+dx/b) & \approx \frac{1}{\sqrt{2\pi (1 + \gamma \theta_b) }} \exp(-\theta_b b - x \theta_b /b + \theta_b^2 (1+ \gamma \theta_b/3)/2).
\end{align}
For an approximation of $\theta_b$, we solve $\dot{\psi}({\theta_b})$ by approximating $\psi$ up to the third order,
\begin{align}
b=\dot{\psi}(\theta_b) & \approx \dot{\psi}(0) + \overset{..}{\psi}(0) \theta_b + \overset{\dots}{\psi}(0) \frac{\theta_b^2}{2} = \theta_b + \frac{1}{2}\gamma \theta_b^2, \label{thetaeq}
\end{align}
yielding
\begin{equation}\label{thetab}
  \theta_b \approx (-1+\sqrt{1+2\gamma b})/\gamma.
\end{equation}
Note that when $\gamma=0$, $\theta_b=b$.  \eqref{eq:onechptp2} follows by using \eqref{skewcorrectedmarg} in (\ref{proofline1}) in the proof  of Theorem \ref{thm:asym} and approximating the $\theta_bx/b$ term in the exponent by $x$.  

The derivation for \eqref{eq:twochptp2} is similar but to give a better approximation to $\bP(Z_G(t_1,t_2)\in b+dx/b)$ by incorporating skewness.

\subsection{Effect of Skewness and Extrapolation at Boundaries}
\label{sec:extr-skewn-corr}

To gain a better understanding of the role of skewness, we explore the following quantities involved in the $p$-value approximations:
\begin{itemize}
\item $\gamma_G(t) \overset{\Delta}{=} \bE[Z_G^3(t)]$,
\item $\theta_{b,G}(t) \overset{\Delta}{=} (-1+\sqrt{1+2\gamma_G(t) b})/\gamma_G(t)$,
\item $S_G(t) \overset{\Delta}{=}  \frac{1}{\sqrt{1 + \gamma_G(t) \theta_{b,G}(t) }} \exp(\frac{1}{2}(b-\theta_{b,G}(t))^2 + \frac{\gamma_G(t) \theta_{b,G}(t)^3}{6})$.  
\end{itemize}
Figure \ref{fig:mdp} shows the three quantities versus $t$ for the single change-point scan statistic on a MDP graph when $n=1000, b=3$.  Since the structure of MDP is always the same and does not depend on the distribution of $\by_i$, Figure \ref{fig:mdp} is representative of all MDP graphs with $n=1000$ subjects and threshold $b=3$.  We can see from the figure that $\gamma$ is always larger than 0, indicating right skewness.  When $\gamma=0$, $\theta_b = b$; when $\gamma>0$, $\theta_b<b$.  When $Z_G(t)$ is right-skewed, the analytic approximation of the $p$-value assuming Gaussianity is smaller than the actual $p$-value, so the skewness correction should increase the $p$-value approximation.   This is indeed true as $S_G(t)$ is U-shaped with a minimum of 1.

Each node in the MDP has degree 1.  The shapes of $\gamma_G(t)$ and $\theta_{b,G}(t)$ for $Z_G(t)$ computed on graphs with very low number of hubs are similar to their shapes for $Z_G(t)$ computed on the MDP.
For example, for data in low dimensions ($<5$), scans based on MST and NNG constructed based on Euclidean distance have similar skewness properties as described above.  However, as the dimension of the data increases, MST and NNG constructed based on Euclidean distance tend to become dominated by hubs, and the distribution of $Z_G(t)$ becomes left-skewed.  For a left-skewed distribution, $\gamma\leq 0, \theta_b\geq b$, and $S \leq 1$.  One problem for left-skewed distributions is that if $\gamma$ is smaller than $-1/(2b)$, the current approximation does not yield real-valued solution for $\theta_b$.  This issue is discussed in Remark \ref{remark:extra} and here we provide a heuristic solution to this problem based on an extrapolation procedure.

We illustrate procedure through a MST constructed on a simulated 100-dimensional data based on Euclidean distance.  From Figure \ref{fig:mst_100d}, we see that $\theta_{b,G}(t)$ and $S_G(t)$ are not defined except in the middle region.  In this case, the integrand $$ S_G(nu)  h_G(n,x)\nu\sqrt{2b_0^2 h_G(n,x)}$$ is directly extrapolated to the edge regions using the boundary tangent at each side.  If extrapolation is negative, it is set to zero.  Figure \ref{fig:mst_integrand} illustrates the integrand before and after extrapolation.




\begin{figure}[!htp]
  \centering
  \includegraphics[width=.32\textwidth]{mdp_r.png}
  \includegraphics[width=.32\textwidth]{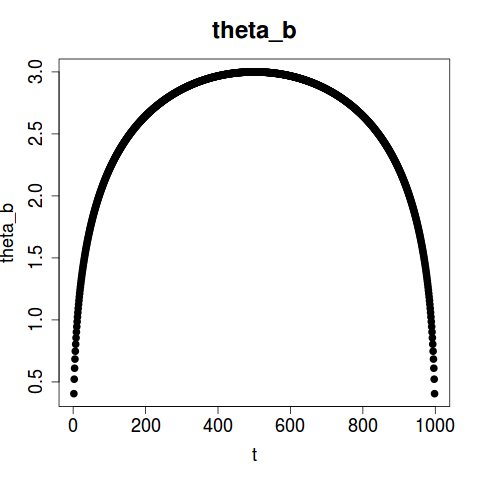}
  \includegraphics[width=.32\textwidth]{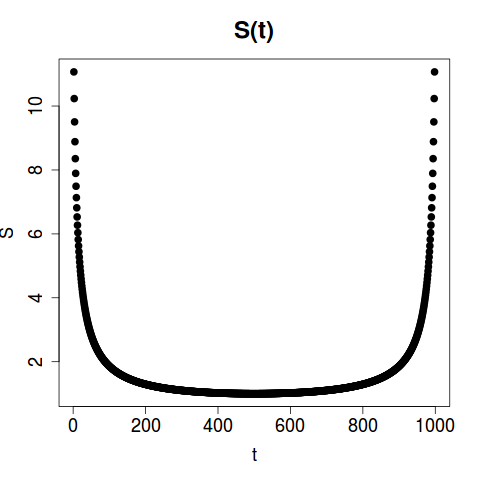} \\
  \caption{The three quantities, $\gamma_G(t), \theta_{b,G}(t)$ and $S_G(t)$ from left to right,  for a MDP graph. $n=1000, b=3$. }
  \label{fig:mdp}
\end{figure}


\begin{figure}[!htp]
  \centering
  \includegraphics[width=.32\textwidth]{mst_r_100.png}
  \includegraphics[width=.32\textwidth]{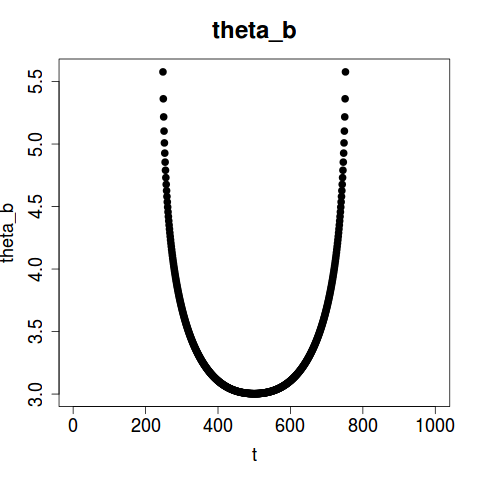}
  \includegraphics[width=.32\textwidth]{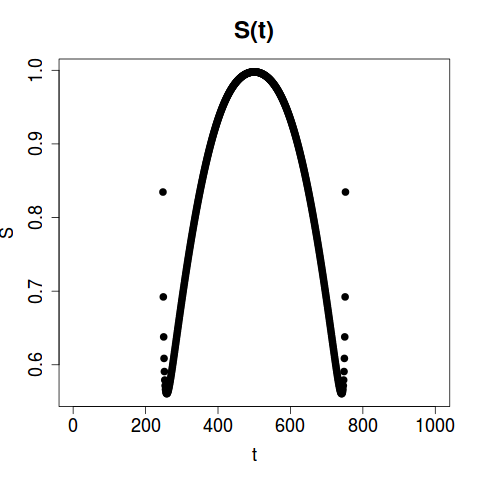}
  \caption{The three quantities, $\gamma_G(t), \theta_{b,G}(t)$ and $S_G(t)$ from left to right, for a MST graph constructed using Euclidean distance on a sequence of $n=1000$ observations \emph{iid} drawn from $N(\mathbf{0}, I_{100})$.  $b=3$. }
  \label{fig:mst_100d}
\end{figure}

\begin{figure}[!htp]
\centering
  \includegraphics[width=.4\textwidth]{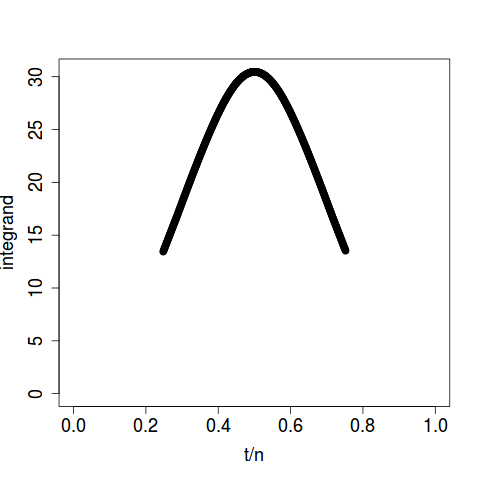}
  \includegraphics[width=.4\textwidth]{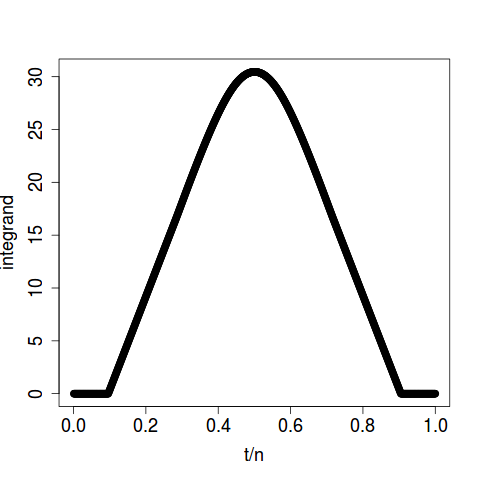}
  \caption{The integrand before (left) and after (right) extrapolation.  The integrand can only be directly calculated in the middle part ($t\in [248,752]$), and the outer part is obtained by extending using the boundary tangent.}
  \label{fig:mst_integrand}
\end{figure}


\section{Checking Analytic Approximations to $p$-values}
\label{sec:check-analyt-appr}

\subsection{Critical Value}
\label{sec:critical-value}

\subsubsection{A Single Change-Point Alternative (NNG)}
\label{sec:single-change-point-1}

Tables \ref{tab:nng_05} - \ref{tab:nng_01} show the results of $p$-value approximations for the single change-point alternative with the underlying graph being the NNG.  We see that the result is quite similar to that based on MST in Tables \ref{tab:mst_05} - \ref{tab:mst_01}.
 
\begin{table}[!htp]
  \caption{Critical values for the single change-point scan statistic based on NNG at 0.05 significance level.  $n=1000$.}
  \label{tab:nng_05}
  \centering
\begin{tabular}{c|ccc|ccc|ccc|cc}
\hline \hline
& \multicolumn{9}{|c|}{Critical Values} & \multicolumn{2}{|c}{Graph} \\ \cline{2-12}
& \multicolumn{3}{|c|}{$n_0=100$} & \multicolumn{3}{|c|}{$n_0=50$} & \multicolumn{3}{|c|}{$n_0=25$} & & \\ \cline{2-10}
  & A1 & A2 & Per & A1 & A2 & Per & A1 & A2 & Per & $\sum |G_i|^2$ & $d_{\text{max}}$ \\ \hline \hline
  & 2.96 & 2.98 & 2.95 & 3.04 & 3.07 & 3.03 & 3.10 & 3.13 & 3.08 &  2008 &  2 \\
N(0,1) & 2.96 & 2.98 & 2.97 & 3.05 & 3.07 & 3.05 & 3.10 & 3.13 & 3.09 &  1972 &  2 \\
$d=1$ & 2.96 & 2.98 & 3.01 & 3.04 & 3.07 & 3.10 & 3.10 & 3.12 & 3.13 &  2032 &  2 \\
  & 2.96 & 2.98 & 2.97 & 3.04 & 3.07 & 3.04 & 3.10 & 3.13 & 3.09 &  2008 &  2 \\
  & 2.96 & 2.99 & 3.01 & 3.05 & 3.08 & 3.10 & 3.10 & 3.13 & 3.13 &  1954 &  2 \\ \hline
  & 2.96 & 2.99 & 2.98 & 3.05 & 3.08 & 3.08 & 3.10 & 3.13 & 3.11 &  1948 &  2 \\
Exp(1) & 2.96 & 2.98 & 2.96 & 3.04 & 3.07 & 3.07 & 3.10 & 3.13 & 3.11 &  2038 &  2 \\
$d=1$ & 2.96 & 2.98 & 2.96 & 3.04 & 3.08 & 3.05 & 3.10 & 3.13 & 3.09 &  2014 &  2 \\
  & 2.96 & 2.98 & 2.99 & 3.04 & 3.07 & 3.08 & 3.10 & 3.12 & 3.13 &  2008 &  2 \\
  & 2.96 & 2.98 & 2.99 & 3.04 & 3.08 & 3.08 & 3.10 & 3.13 & 3.13 &  2038 &  2 \\ \hline
  & 2.94 & 2.92 & 2.89 & 3.02 & 2.97 & 2.93 & 3.07 & 3.00 & 2.96 &  3370 &  6 \\
N(0,1) & 2.94 & 2.91 & 2.90 & 3.02 & 2.97 & 2.95 & 3.07 & 2.99 & 2.96 &  3502 &  6 \\
$d=10$ & 2.94 & 2.91 & 2.89 & 3.01 & 2.96 & 2.95 & 3.06 & 2.98 & 2.96 &  3444 &  7 \\
  & 2.94 & 2.91 & 2.91 & 3.01 & 2.96 & 2.94 & 3.06 & 2.98 & 2.96 &  3436 &  6 \\
  & 2.94 & 2.91 & 2.88 & 3.02 & 2.97 & 2.93 & 3.07 & 2.99 & 2.94 &  3330 &  6 \\ \hline
  & 2.94 & 2.92 & 2.91 & 3.02 & 2.98 & 2.96 & 3.07 & 3.00 & 2.98 &  3144 &  5 \\
Exp(1) & 2.94 & 2.92 & 2.92 & 3.02 & 2.98 & 2.97 & 3.07 & 3.00 & 2.99 &  3096 &  6 \\
$d=10$ & 2.94 & 2.92 & 2.92 & 3.02 & 2.98 & 2.98 & 3.07 & 3.01 & 3.01 &  3118 &  6 \\
  & 2.94 & 2.93 & 2.92 & 3.02 & 2.98 & 2.97 & 3.07 & 3.01 & 2.99 &  3114 &  5 \\
  & 2.94 & 2.92 & 2.91 & 3.02 & 2.98 & 2.98 & 3.07 & 3.01 & 3.00 &  3152 &  6 \\ \hline
  & 2.87 & 2.65 & 2.62 & 2.95 & 2.65 & 2.62 & 3.00 & 2.65 & 2.62 &  9382 & 52 \\
N(0,1) & 2.87 & 2.73 & 2.70 & 2.95 & 2.75 & 2.71 & 3.01 & 2.76 & 2.71 &  8466 & 24 \\
$d=100$ & 2.88 & 2.76 & 2.72 & 2.96 & 2.78 & 2.72 & 3.01 & 2.79 & 2.72 &  7756 & 20 \\
  & 2.86 & 2.59 & 2.56 & 2.94 & 2.59 & 2.56 & 3.00 & 2.59 & 2.56 & 11092 & 68 \\
  & 2.87 & 2.68 & 2.64 & 2.95 & 2.69 & 2.64 & 3.00 & 2.69 & 2.64 &  9538 & 38 \\ \hline
  & 2.86 & 2.71 & 2.70 & 2.95 & 2.72 & 2.70 & 3.00 & 2.73 & 2.70 & 10222 & 34 \\
Exp(1) & 2.86 & 2.72 & 2.68 & 2.95 & 2.74 & 2.69 & 3.00 & 2.74 & 2.69 & 10390 & 37 \\
$d=100$ & 2.86 & 2.70 & 2.64 & 2.94 & 2.71 & 2.64 & 3.00 & 2.71 & 2.64 & 11574 & 35 \\
  & 2.87 & 2.74 & 2.72 & 2.95 & 2.76 & 2.73 & 3.01 & 2.77 & 2.73 &  8782 & 22 \\
  & 2.87 & 2.73 & 2.68 & 2.95 & 2.74 & 2.68 & 3.01 & 2.74 & 2.68 &  8622 & 41 \\
\hline \hline
\end{tabular}
\end{table}

\begin{table}[!htp]
  \caption{Critical values for the single change-point scan statistic based on NNG at 0.01 significance level.  $n=1000$.}
  \label{tab:nng_01}
  \centering
\begin{tabular}{c|ccc|ccc|ccc|cc}
\hline \hline
& \multicolumn{9}{|c|}{Critical Values} & \multicolumn{2}{|c}{Graph} \\ \cline{2-12}
& \multicolumn{3}{|c|}{$n_0=100$} & \multicolumn{3}{|c|}{$n_0=50$} & \multicolumn{3}{|c|}{$n_0=25$} & & \\ \cline{2-10}
  & A1 & A2 & Per & A1 & A2 & Per & A1 & A2 & Per & $\sum |G_i|^2$ & $d_{\text{max}}$ \\ \hline \hline
  & 3.50 & 3.53 & 3.53 & 3.57 & 3.61 & 3.59 & 3.61 & 3.65 & 3.63 &  2008 &  2 \\
N(0,1) & 3.50 & 3.54 & 3.52 & 3.57 & 3.61 & 3.63 & 3.61 & 3.65 & 3.66 &  1972 &  2 \\
$d=1$ & 3.50 & 3.53 & 3.58 & 3.57 & 3.61 & 3.66 & 3.61 & 3.65 & 3.71 &  2032 &  2 \\
  & 3.50 & 3.53 & 3.56 & 3.57 & 3.61 & 3.63 & 3.61 & 3.65 & 3.68 &  2008 &  2 \\
  & 3.50 & 3.54 & 3.53 & 3.57 & 3.62 & 3.64 & 3.61 & 3.66 & 3.65 &  1954 &  2 \\ \hline
  & 3.50 & 3.54 & 3.50 & 3.57 & 3.62 & 3.61 & 3.61 & 3.66 & 3.64 &  1948 &  2 \\
Exp(1) & 3.50 & 3.53 & 3.57 & 3.57 & 3.61 & 3.63 & 3.61 & 3.65 & 3.65 &  2038 &  2 \\
$d=1$ & 3.50 & 3.54 & 3.52 & 3.57 & 3.61 & 3.63 & 3.61 & 3.66 & 3.66 &  2014 &  2 \\
  & 3.50 & 3.53 & 3.60 & 3.57 & 3.61 & 3.66 & 3.61 & 3.65 & 3.71 &  2008 &  2 \\
  & 3.50 & 3.54 & 3.54 & 3.57 & 3.62 & 3.58 & 3.61 & 3.66 & 3.66 &  2038 &  2 \\ \hline
  & 3.48 & 3.45 & 3.46 & 3.55 & 3.49 & 3.48 & 3.59 & 3.50 & 3.49 &  3370 &  6 \\
N(0,1) & 3.48 & 3.44 & 3.47 & 3.54 & 3.48 & 3.48 & 3.59 & 3.49 & 3.48 &  3502 &  6 \\
$d=10$ & 3.48 & 3.44 & 3.42 & 3.54 & 3.47 & 3.45 & 3.58 & 3.48 & 3.46 &  3444 &  7 \\
  & 3.48 & 3.44 & 3.43 & 3.54 & 3.47 & 3.46 & 3.59 & 3.48 & 3.47 &  3436 &  6 \\
  & 3.48 & 3.44 & 3.44 & 3.55 & 3.48 & 3.48 & 3.59 & 3.49 & 3.48 &  3330 &  6 \\ \hline
  & 3.49 & 3.45 & 3.46 & 3.55 & 3.49 & 3.51 & 3.59 & 3.50 & 3.51 &  3144 &  5 \\
Exp(1) & 3.49 & 3.45 & 3.48 & 3.55 & 3.49 & 3.52 & 3.59 & 3.50 & 3.52 &  3096 &  6 \\
$d=10$ & 3.49 & 3.46 & 3.48 & 3.55 & 3.49 & 3.54 & 3.59 & 3.51 & 3.57 &  3118 &  6 \\
  & 3.49 & 3.46 & 3.41 & 3.55 & 3.50 & 3.46 & 3.59 & 3.51 & 3.46 &  3114 &  5 \\
  & 3.49 & 3.46 & 3.49 & 3.55 & 3.49 & 3.52 & 3.59 & 3.51 & 3.53 &  3152 &  6 \\ \hline
  & 3.42 & 3.13 & 3.07 & 3.49 & 3.13 & 3.07 & 3.54 & 3.13 & 3.07 &  9382 & 52 \\
N(0,1) & 3.43 & 3.21 & 3.19 & 3.50 & 3.21 & 3.19 & 3.54 & 3.21 & 3.19 &  8466 & 24 \\
$d=100$ & 3.44 & 3.25 & 3.23 & 3.50 & 3.25 & 3.23 & 3.54 & 3.25 & 3.23 &  7756 & 20 \\
  & 3.42 & 3.09 & 3.08 & 3.48 & 3.09 & 3.08 & 3.53 & 3.09 & 3.08 & 11092 & 68 \\
  & 3.42 & 3.16 & 3.16 & 3.49 & 3.16 & 3.16 & 3.54 & 3.16 & 3.16 &  9538 & 38 \\ \hline
  & 3.42 & 3.20 & 3.19 & 3.49 & 3.20 & 3.19 & 3.53 & 3.20 & 3.19 & 10222 & 34 \\
Exp(1) & 3.42 & 3.22 & 3.21 & 3.49 & 3.22 & 3.21 & 3.53 & 3.22 & 3.21 & 10390 & 37 \\
$d=100$ & 3.42 & 3.18 & 3.17 & 3.48 & 3.18 & 3.17 & 3.53 & 3.18 & 3.17 & 11574 & 35 \\
  & 3.43 & 3.23 & 3.23 & 3.49 & 3.23 & 3.23 & 3.54 & 3.23 & 3.23 &  8782 & 22 \\
  & 3.43 & 3.22 & 3.24 & 3.50 & 3.22 & 3.24 & 3.54 & 3.22 & 3.24 &  8622 & 41 \\
\hline \hline
\end{tabular}

\end{table}

\subsubsection{A Changed-Interval Alternative}
\label{sec:chang-interv-altern}

Tables \ref{tab:mst_05_2d} - \ref{tab:nng_01_2d} show the results of $p$-value approximations for the changed interval alternative.  The notation and simulation settings are identical to those for the single change-point alternative in Section \ref{sec:numerical-studies}, except that $n_0$ is replaced by $l_0$ for the smallest window size.  ($l_1$ is set to $n-l_0$.)

From the tables, conclusions similar to those for the single change-point alternative can be drawn.  The analytic approximation after skewness correction performs much better than the analytic approximation under Gaussian assumption, especially when dimension increases.  The accuracy of skew-corrected approximation does not degrade significantly with dimension.  It does well for MST- and NNG- based tests when the smallest window size to be considered is as small as 25 for both 0.05 and 0.01 significance levels, and for MDP-based test when the smallest window size is 50.

\begin{table}[!htp]
  \caption{Critical values for the changed interval scan statistic based on MST at 0.05 significance level.  $n=1000$.}
  \label{tab:mst_05_2d}
  \centering


\begin{tabular}{c|ccc|ccc|ccc|cc}
\hline \hline
& \multicolumn{9}{|c|}{Critical Values} & \multicolumn{2}{|c}{Graph} \\ \cline{2-12}
& \multicolumn{3}{|c|}{$l_0=100$} & \multicolumn{3}{|c|}{$l_0=50$} & \multicolumn{3}{|c|}{$l_0=25$} & & \\ \cline{2-10}
  & A1 & A2 & Per & A1 & A2 & Per & A1 & A2 & Per & $\sum |G_i|^2$ & $d_{\text{max}}$ \\ \hline \hline
$d=1$ & 4.08 & 4.29 & 4.24 & 4.22 & 4.76 & 4.73 & 4.33 & 5.44 & 5.77 & 4994 & 2 \\ \hline
& 3.97 & 3.89 & 3.84 & 4.07 & 3.92 & 3.89 & 4.16 & 3.93 & 3.89 & 5454 & 8 \\
N(0,1) & 3.97 & 3.91 & 3.81 & 4.07 & 3.95 & 3.85 & 4.16 & 3.97 & 3.87 & 5400 & 7 \\
$d=10$ & 3.97 & 3.90 & 3.81 & 4.07 & 3.93 & 3.90 & 4.16 & 3.94 & 3.91 & 5448 & 8 \\
& 3.97 & 3.90 & 3.91 & 4.07 & 3.94 & 3.93 & 4.16 & 3.95 & 3.94 & 5440 & 7 \\
& 3.97 & 3.89 & 3.82 & 4.07 & 3.91 & 3.85 & 4.15 & 3.93 & 3.85 & 5524 & 8 \\ \hline
& 3.99 & 3.93 & 3.86 & 4.09 & 3.97 & 3.92 & 4.17 & 3.99 & 3.95 & 5042 & 8 \\
Exp(1) & 3.99 & 3.93 & 3.84 & 4.09 & 3.96 & 3.90 & 4.17 & 4.00 & 3.92 & 5040 & 6 \\
$d=10$ & 3.99 & 3.93 & 3.85 & 4.09 & 3.97 & 3.91 & 4.17 & 4.00 & 3.93 & 5106 & 6 \\
& 3.99 & 3.93 & 3.82 & 4.09 & 3.97 & 3.87 & 4.17 & 3.99 & 3.91 & 5042 & 6 \\
& 3.99 & 3.91 & 3.94 & 4.08 & 3.95 & 3.98 & 4.17 & 3.97 & 3.98 & 5126 & 8 \\ \hline
& 3.87 & 3.51 & 3.52 & 3.98 & 3.51 & 3.52 & 4.09 & 3.51 & 3.52 & 11600 & 40 \\
N(0,1) & 3.86 & 3.49 & 3.55 & 3.98 & 3.49 & 3.55 & 4.08 & 3.49 & 3.55 & 13346 & 64 \\
$d=100$ & 3.88 & 3.57 & 3.66 & 3.99 & 3.57 & 3.66 & 4.09 & 3.57 & 3.66 & 10422 & 34 \\
& 3.88 & 3.57 & 3.58 & 3.99 & 3.57 & 3.58 & 4.09 & 3.57 & 3.58 & 10804 & 43 \\
& 3.88 & 3.56 & 3.58 & 3.99 & 3.56 & 3.58 & 4.09 & 3.56 & 3.58 & 10862 & 36 \\ \hline
& 3.88 & 3.63 & 3.59 & 3.99 & 3.63 & 3.59 & 4.09 & 3.63 & 3.59 & 10384 & 24 \\
Exp(1) & 3.87 & 3.58 & 3.49 & 3.98 & 3.58 & 3.49 & 4.09 & 3.58 & 3.49 & 11922 & 33 \\
$d=100$ & 3.88 & 3.60 & 3.63 & 3.99 & 3.60 & 3.63 & 4.09 & 3.60 & 3.63 & 11194 & 34 \\
& 3.89 & 3.63 & 3.55 & 4.00 & 3.63 & 3.55 & 4.10 & 3.63 & 3.55 & 9680 & 27 \\
& 3.88 & 3.62 & 3.60 & 3.99 & 3.62 & 3.60 & 4.09 & 3.62 & 3.60 & 10468 & 29 \\ \hline \hline
\end{tabular}

\end{table}

\begin{table}[!htp]
  \caption{Critical values for the changed interval scan statistic based on MST at 0.01 significance level.  $n=1000$.}
  \label{tab:mst_01_2d}
  \centering


\begin{tabular}{c|ccc|ccc|ccc|cc}
\hline \hline
& \multicolumn{9}{|c|}{Critical Values} & \multicolumn{2}{|c}{Graph} \\ \cline{2-12}
& \multicolumn{3}{|c|}{$l_0=100$} & \multicolumn{3}{|c|}{$l_0=50$} & \multicolumn{3}{|c|}{$l_0=25$} & & \\ \cline{2-10}
  & A1 & A2 & Per & A1 & A2 & Per & A1 & A2 & Per & $\sum |G_i|^2$ & $d_{\text{max}}$ \\ \hline \hline
$d=1$ & 4.51 & 4.78 & 4.73 & 4.63 & 5.31 & 5.30 & 4.72 & 6.08 & 6.65 & 4994 & 2 \\ \hline
& 4.42 & 4.32 & 4.31 & 4.50 & 4.33 & 4.33 & 4.58 & 4.33 & 4.33 & 5454 & 8 \\
N(0,1) & 4.42 & 4.34 & 4.22 & 4.51 & 4.36 & 4.25 & 4.58 & 4.37 & 4.25 & 5400 & 7 \\
$d=10$ & 4.42 & 4.33 & 4.20 & 4.50 & 4.51 & 4.25 & 4.58 & 4.35 & 4.29 & 5448 & 8 \\
& 4.42 & 4.34 & 4.36 & 4.50 & 4.32 & 4.36 & 4.58 & 4.36 & 4.36 & 5440 & 7 \\
& 4.42 & 4.32 & 4.31 & 4.50 & 4.33 & 4.32 & 4.57 & 4.33 & 4.32 & 5524 & 8 \\ \hline
& 4.43 & 4.36 & 4.36 & 4.52 & 4.39 & 4.36 & 4.59 & 4.39 & 4.36 & 5042 & 8 \\
Exp(1) & 4.43 & 4.36 & 4.30 & 4.52 & 4.39 & 4.36 & 4.59 & 4.40 & 4.36 & 5040 & 6 \\
$d=10$ & 4.43 & 4.36 & 4.32 & 4.52 & 4.39 & 4.38 & 4.59 & 4.40 & 4.44 & 5106 & 6 \\
& 4.43 & 4.36 & 4.27 & 4.52 & 4.39 & 4.33 & 4.59 & 4.39 & 4.33 & 5042 & 6 \\
& 4.43 & 4.35 & 4.35 & 4.52 & 4.37 & 4.35 & 4.59 & 4.37 & 4.35 & 5126 & 8 \\ \hline
& 4.34 & 3.99 & 4.28 & 4.43 & 3.99 & 4.28 & 4.52 & 3.99 & 4.28 & 11600 & 40 \\
N(0,1) & 4.33 & 3.98 & 3.95 & 4.42 & 3.98 & 3.95 & 4.51 & 3.98 & 3.95 & 13346 & 64 \\
$d=100$ & 4.34 & 4.04 & 4.12 & 4.44 & 4.04 & 4.12 & 4.52 & 4.04 & 4.12 & 10422 & 34 \\
& 4.34 & 4.05 & 4.22 & 4.43 & 4.05 & 4.22 & 4.52 & 4.05 & 4.22 & 10804 & 43 \\
& 4.34 & 4.03 & 4.00 & 4.43 & 4.03 & 4.00 & 4.52 & 4.03 & 4.00 & 10862 & 36 \\ \hline
& 4.34 & 4.10 & 3.95 & 4.44 & 4.10 & 3.95 & 4.52 & 4.10 & 3.95 & 10384 & 24 \\
Exp(1) & 4.33 & 4.05 & 3.87 & 4.43 & 4.05 & 3.87 & 4.52 & 4.05 & 3.87 & 11922 & 33 \\
$d=100$ & 4.34 & 4.08 & 4.14 & 4.43 & 4.08 & 4.14 & 4.52 & 4.08 & 4.14 & 11194 & 34 \\
& 4.35 & 4.10 & 3.86 & 4.44 & 4.10 & 3.86 & 4.53 & 4.10 & 3.86 & 9680 & 27 \\
& 4.34 & 4.08 & 4.10 & 4.44 & 4.08 & 4.10 & 4.52 & 4.08 & 4.10 & 10468 & 29 \\ \hline \hline
\end{tabular}

\end{table}

\begin{table}[!htp]
  \caption{Critical values for the changed interval scan statistic based on MDP.  $n=1000$.}
  \label{tab:mdp_2d}
  \centering
significance level = 0.05
  \begin{tabular}{c|cc|cc|cc|cc}
\hline \hline
 & & & \multicolumn{2}{|c|}{$d=1$} & \multicolumn{2}{|c|}{$d=10$} & \multicolumn{2}{|c}{$d=100$}   \\ \hline
$l_0$ & A1 & A2 & N(0,1) & Exp(1) & N(0,1) & Exp(1) & N(0,1) & Exp(1) \\ \hline \hline
100 & 4.08 & 4.38 & 4.39 & 4.46 & 4.30 & 4.29 & 4.32 & 4.32 \\ \hline
50  & 4.22 & 4.97 & 5.03 & 5.12 & 5.10 & 4.87 & 5.19 & 4.99 \\ \hline
25  & 4.33 & 5.81 & 6.31 & 6.32 & 6.14 & 6.12 & 6.60 & 6.35 \\ \hline \hline
  \end{tabular}

\bigskip

significance level = 0.01
  \begin{tabular}{c|cc|cc|cc|cc}
\hline \hline
 & & & \multicolumn{2}{|c|}{$d=1$} & \multicolumn{2}{|c|}{$d=10$} & \multicolumn{2}{|c}{$d=100$}   \\ \hline
$l_0$ & A1 & A2 & N(0,1) & Exp(1) & N(0,1) & Exp(1) & N(0,1) & Exp(1) \\ \hline \hline
100 & 4.51 & 4.90 & 4.91 & 5.13 & 4.93 & 4.92 & 5.01 & 4.91 \\ \hline
50  & 4.63 & 5.58 & 5.63 & 5.94 & 5.64 & 5.48 & 6.13 & 5.63 \\ \hline
25  & 4.72 & 6.52 & 6.91 & 6.91 & 6.91 & 6.91 & 7.12 & 6.91 \\ \hline \hline
  \end{tabular}

\end{table}

\begin{table}[!htp]
  \caption{Critical values for the changed interval scan statistic based on NNG at 0.05 significance level.  $n=1000$.}
  \label{tab:nng_05_2d}
  \centering
\begin{tabular}{c|ccc|ccc|ccc|cc}
\hline \hline
& \multicolumn{9}{|c|}{Critical Values} & \multicolumn{2}{|c}{Graph} \\ \cline{2-12}
& \multicolumn{3}{|c|}{$l_0=100$} & \multicolumn{3}{|c|}{$l_0=50$} & \multicolumn{3}{|c|}{$l_0=25$} & & \\ \cline{2-10}
  & A1 & A2 & Per & A1 & A2 & Per & A1 & A2 & Per & $\sum |G_i|^2$ & $d_{\text{max}}$ \\ \hline \hline
& 4.04 & 4.10 & 4.07 & 4.15 & 4.23 & 4.20 & 4.23 & 4.31 & 4.30 & 2026 & 2 \\
N(0,1) & 4.04 & 4.10 & 4.09 & 4.15 & 4.24 & 4.18 & 4.23 & 4.31 & 4.24 & 1942 & 2 \\
$d=1$ & 4.04 & 4.10 & 4.11 & 4.15 & 4.24 & 4.23 & 4.23 & 4.31 & 4.35 & 1948 & 2 \\
& 4.04 & 4.10 & 3.96 & 4.15 & 4.23 & 4.11 & 4.23 & 4.31 & 4.25 & 2038 & 2 \\
& 4.04 & 4.10 & 4.04 & 4.15 & 4.24 & 4.17 & 4.23 & 4.31 & 4.31 & 1960 & 2 \\ \hline
& 4.04 & 4.10 & 4.00 & 4.15 & 4.23 & 4.14 & 4.23 & 4.31 & 4.24 & 2086 & 2 \\
Exp(1) & 4.04 & 4.10 & 4.08 & 4.15 & 4.23 & 4.20 & 4.23 & 4.31 & 4.24 & 1990 & 2 \\
$d=1$ & 4.04 & 4.10 & 4.00 & 4.15 & 4.24 & 4.15 & 4.23 & 4.32 & 4.27 & 2014 & 2 \\
& 4.04 & 4.10 & 4.01 & 4.15 & 4.23 & 4.20 & 4.23 & 4.31 & 4.34 & 2080 & 2 \\
& 4.04 & 4.10 & 4.04 & 4.15 & 4.23 & 4.18 & 4.23 & 4.31 & 4.27 & 2008 & 2 \\ \hline
& 3.99 & 3.92 & 3.82 & 4.09 & 3.96 & 3.88 & 4.18 & 3.97 & 3.90 & 3558 & 6 \\
N(0,1) & 3.99 & 3.91 & 3.86 & 4.09 & 3.94 & 3.86 & 4.18 & 3.95 & 3.88 & 3508 & 6 \\
$d=10$ & 4.00 & 3.92 & 3.86 & 4.10 & 3.96 & 3.93 & 4.18 & 3.97 & 3.93 & 3394 & 6 \\
& 3.99 & 3.91 & 3.81 & 4.09 & 3.94 & 3.86 & 4.18 & 3.95 & 3.90 & 3418 & 6 \\
& 3.99 & 3.91 & 3.88 & 4.09 & 3.94 & 3.88 & 4.18 & 3.95 & 3.88 & 3450 & 6 \\ \hline
& 4.00 & 3.94 & 3.85 & 4.10 & 3.98 & 3.91 & 4.18 & 3.99 & 3.91 & 3306 & 6 \\
Exp(1) & 4.01 & 3.95 & 3.91 & 4.11 & 4.00 & 3.98 & 4.19 & 4.02 & 3.99 & 3118 & 5 \\
$d=10$ & 4.00 & 3.94 & 3.89 & 4.10 & 3.98 & 3.93 & 4.19 & 4.00 & 3.94 & 3018 & 5 \\
& 4.00 & 3.95 & 3.90 & 4.11 & 3.99 & 3.93 & 4.19 & 4.01 & 3.93 & 3014 & 5 \\
& 4.01 & 3.96 & 3.95 & 4.11 & 4.01 & 3.97 & 4.19 & 4.03 & 3.99 & 3092 & 5 \\ \hline
& 3.89 & 3.55 & 3.48 & 4.00 & 3.55 & 3.48 & 4.10 & 3.55 & 3.48 & 8240 & 30 \\
N(0,1) & 3.88 & 3.50 & 3.49 & 3.99 & 3.50 & 3.49 & 4.09 & 3.50 & 3.49 & 9360 & 33 \\
$d=100$ & 3.90 & 3.61 & 3.60 & 4.00 & 3.61 & 3.60 & 4.10 & 3.61 & 3.60 & 8482 & 18 \\
& 3.88 & 3.51 & 3.48 & 3.99 & 3.51 & 3.48 & 4.09 & 3.51 & 3.48 & 9154 & 40 \\
& 3.88 & 3.50 & 3.44 & 3.99 & 3.50 & 3.44 & 4.09 & 3.50 & 3.44 & 9392 & 39 \\ \hline
& 3.88 & 3.54 & 3.47 & 3.99 & 3.54 & 3.47 & 4.09 & 3.54 & 3.47 & 10406 & 45 \\
Exp(1) & 3.88 & 3.55 & 3.55 & 3.99 & 3.55 & 3.55 & 4.09 & 3.55 & 3.55 & 10504 & 44 \\
$d=100$ & 3.88 & 3.54 & 3.61 & 3.99 & 3.54 & 3.61 & 4.09 & 3.54 & 3.61 & 10106 & 32 \\
& 3.90 & 3.64 & 3.53 & 4.00 & 3.63 & 3.53 & 4.10 & 3.63 & 3.53 & 8666 & 22 \\
& 3.90 & 3.58 & 3.57 & 4.00 & 3.58 & 3.57 & 4.10 & 3.58 & 3.57 & 8274 & 28 \\ \hline \hline
\end{tabular}
\end{table}

\begin{table}[!htp]
  \caption{Critical values for the changed interval scan statistic based on NNG at 0.01 significance level.  $n=1000$.}
  \label{tab:nng_01_2d}
  \centering
\begin{tabular}{c|ccc|ccc|ccc|cc}
\hline \hline
& \multicolumn{9}{|c|}{Critical Values} & \multicolumn{2}{|c}{Graph} \\ \cline{2-12}
& \multicolumn{3}{|c|}{$l_0=100$} & \multicolumn{3}{|c|}{$l_0=50$} & \multicolumn{3}{|c|}{$l_0=25$} & & \\ \cline{2-10}
  & A1 & A2 & Per & A1 & A2 & Per & A1 & A2 & Per & $\sum |G_i|^2$ & $d_{\text{max}}$ \\ \hline \hline
& 4.48 & 4.55 & 4.58 & 4.57 & 4.67 & 4.65 & 4.64 & 4.73 & 4.65 & 2026 & 2 \\
N(0,1) & 4.48 & 4.56 & 4.53 & 4.57 & 4.68 & 4.71 & 4.64 & 4.74 & 4.79 & 1942 & 2 \\
$d=1$ & 4.48 & 4.56 & 4.56 & 4.57 & 4.68 & 4.72 & 4.64 & 4.74 & 4.83 & 1948 & 2 \\
& 4.48 & 4.55 & 4.45 & 4.57 & 4.67 & 4.68 & 4.64 & 4.74 & 4.69 & 2038 & 2 \\
& 4.48 & 4.56 & 4.56 & 4.57 & 4.68 & 4.66 & 4.64 & 4.74 & 4.82 & 1960 & 2 \\ \hline
& 4.48 & 4.55 & 4.49 & 4.57 & 4.67 & 4.62 & 4.64 & 4.74 & 4.68 & 2086 & 2 \\
Exp(1) & 4.48 & 4.55 & 4.49 & 4.57 & 4.67 & 4.57 & 4.64 & 4.73 & 4.57 & 1990 & 2 \\
$d=1$ & 4.48 & 4.56 & 4.49 & 4.57 & 4.68 & 4.59 & 4.64 & 4.75 & 4.60 & 2014 & 2 \\
& 4.48 & 4.55 & 4.61 & 4.57 & 4.67 & 4.65 & 4.64 & 4.74 & 4.76 & 2080 & 2 \\
& 4.48 & 4.55 & 4.60 & 4.57 & 4.67 & 4.65 & 4.64 & 4.73 & 4.78 & 2008 & 2 \\ \hline
& 4.44 & 4.35 & 4.20 & 4.52 & 4.39 & 4.25 & 4.60 & 4.37 & 4.25 & 3558 & 6 \\
N(0,1) & 4.44 & 4.34 & 4.34 & 4.52 & 4.35 & 4.38 & 4.59 & 4.35 & 4.38 & 3508 & 6 \\
$d=10$ & 4.44 & 4.35 & 4.28 & 4.52 & 4.36 & 4.33 & 4.60 & 4.37 & 4.33 & 3394 & 6 \\
& 4.44 & 4.34 & 4.30 & 4.52 & 4.36 & 4.30 & 4.59 & 4.35 & 4.30 & 3418 & 6 \\
& 4.44 & 4.34 & 4.22 & 4.52 & 4.35 & 4.22 & 4.59 & 4.35 & 4.22 & 3450 & 6 \\ \hline
& 4.44 & 4.37 & 4.31 & 4.53 & 4.43 & 4.39 & 4.60 & 4.39 & 4.39 & 3306 & 6 \\
Exp(1) & 4.45 & 4.38 & 4.39 & 4.53 & 4.42 & 4.50 & 4.60 & 4.42 & 4.50 & 3118 & 5 \\
$d=10$ & 4.45 & 4.37 & 4.31 & 4.53 & 4.72 & 4.33 & 4.60 & 4.39 & 4.38 & 3018 & 5 \\
& 4.45 & 4.38 & 4.42 & 4.53 & 4.35 & 4.45 & 4.60 & 4.41 & 4.45 & 3014 & 5 \\
& 4.45 & 4.39 & 4.46 & 4.53 & 4.43 & 4.47 & 4.61 & 4.43 & 4.47 & 3092 & 5 \\ \hline
& 4.35 & 4.02 & 3.91 & 4.44 & 4.02 & 3.91 & 4.53 & 4.02 & 3.91 & 8240 & 30 \\
N(0,1) & 4.34 & 3.97 & 3.82 & 4.44 & 3.97 & 3.82 & 4.52 & 3.97 & 3.82 & 9360 & 33 \\
$d=100$ & 4.36 & 4.07 & 3.94 & 4.45 & 4.07 & 3.94 & 4.53 & 4.07 & 3.94 & 8482 & 18 \\
& 4.35 & 3.99 & 4.06 & 4.44 & 3.99 & 4.06 & 4.52 & 3.99 & 4.06 & 9154 & 40 \\
& 4.34 & 3.98 & 3.83 & 4.44 & 3.98 & 3.83 & 4.52 & 3.98 & 3.83 & 9392 & 39 \\ \hline
& 4.34 & 4.02 & 3.87 & 4.43 & 4.02 & 3.87 & 4.52 & 4.02 & 3.87 & 10406 & 45 \\
Exp(1) & 4.34 & 4.03 & 3.99 & 4.43 & 4.03 & 3.99 & 4.52 & 4.03 & 3.99 & 10504 & 44 \\
$d=100$ & 4.34 & 4.02 & 4.22 & 4.43 & 4.02 & 4.22 & 4.52 & 4.02 & 4.22 & 10106 & 32 \\
& 4.35 & 4.10 & 3.95 & 4.44 & 4.10 & 3.95 & 4.53 & 4.10 & 3.95 & 8666 & 22 \\
& 4.36 & 4.05 & 4.02 & 4.45 & 4.05 & 4.02 & 4.53 & 4.05 & 4.02 & 8274 & 28 \\ \hline \hline
\end{tabular}

\end{table}

\subsection{Coverage Probability}
\label{sec:coverage-probability}
Here, we show extra boxplots of coverage probability for the single change-point alternative.  We see that the skewness correction helps in general.

\begin{figure}[!htp]
  \centering
  \includegraphics[width=\textwidth]{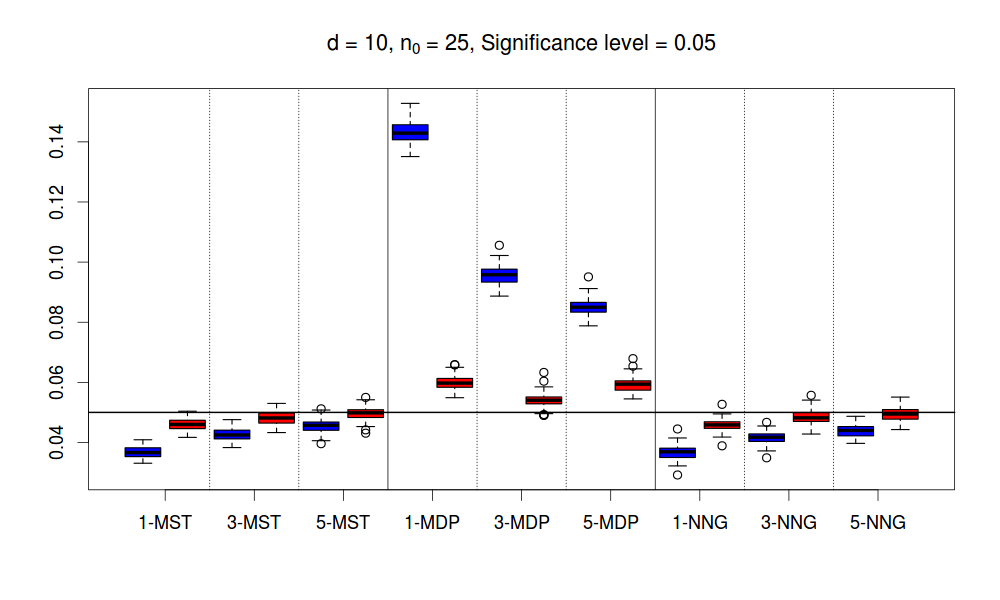}
  \includegraphics[width=\textwidth]{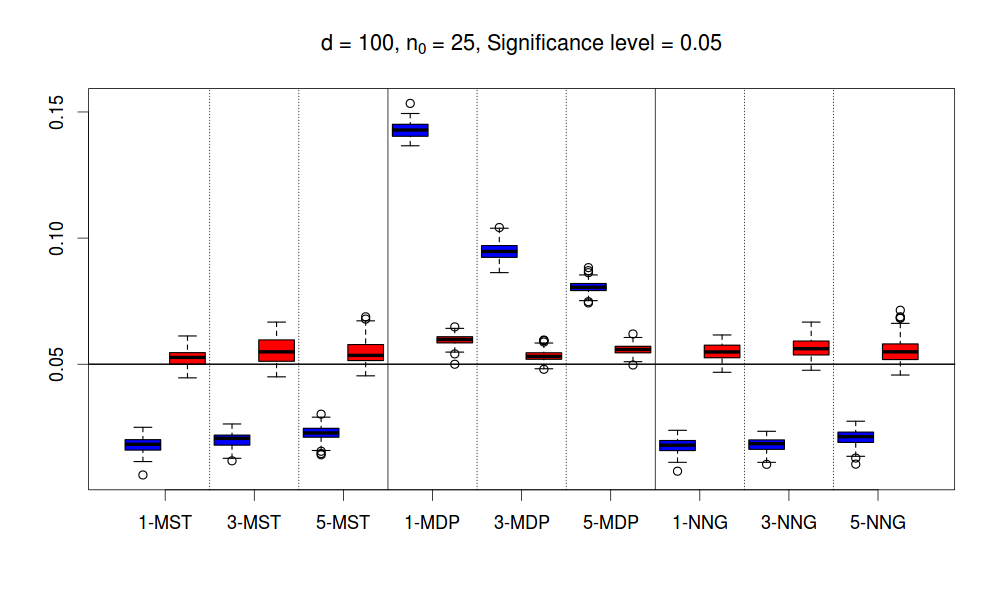}
  \caption{The counterpart boxplots of Figure \ref{fig:cov_50_05} with the smallest window size being 25 and significance level 0.05.}
  \label{fig:cov_25_05}
\end{figure}

\begin{figure}[!htp]
  \centering
  \includegraphics[width=\textwidth]{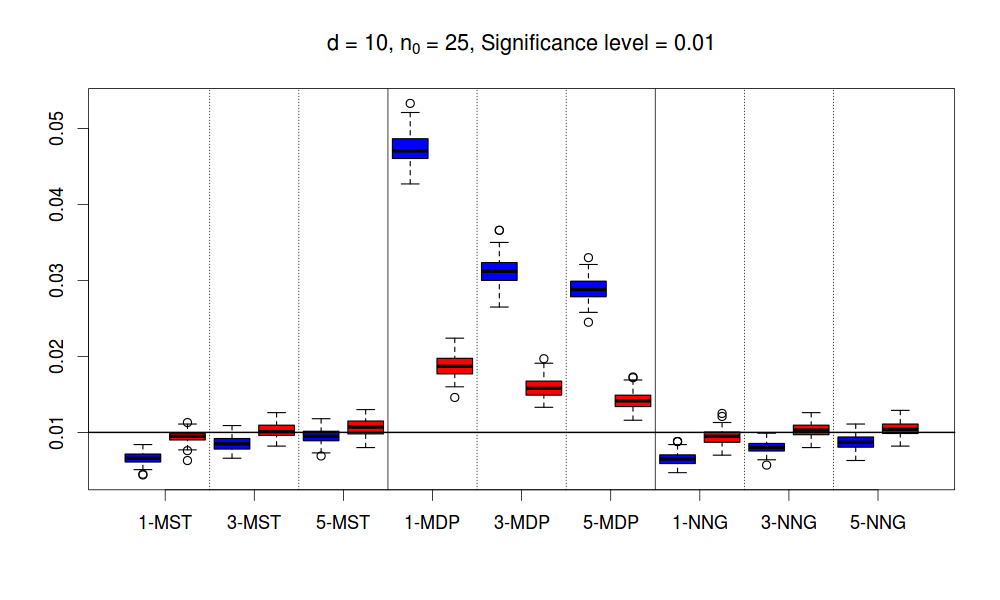}
  \includegraphics[width=\textwidth]{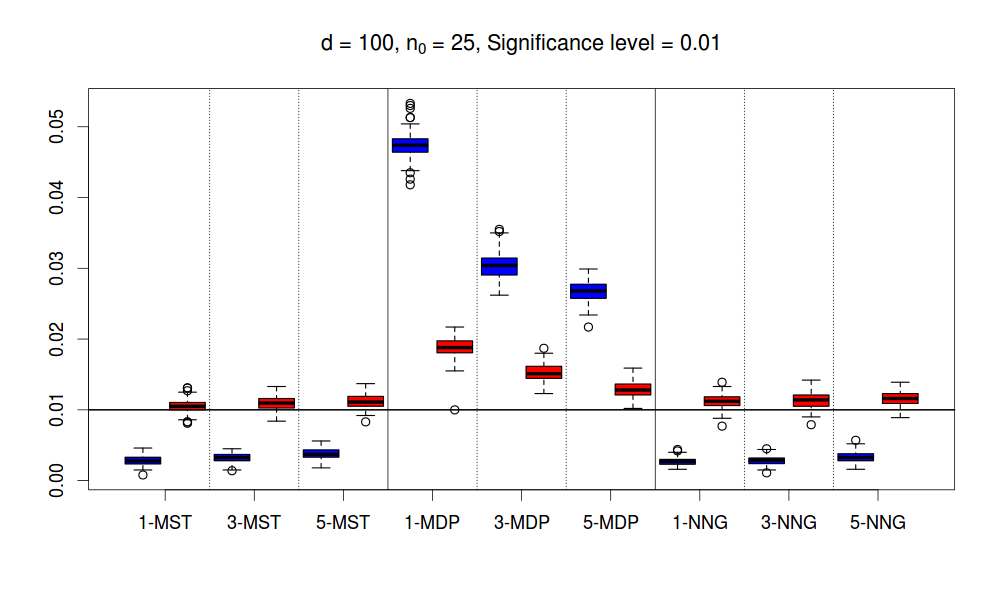}
  \caption{The counterpart boxplots of Figure \ref{fig:cov_50_05} with the smallest window size being 25 and significance level 0.01.}
  \label{fig:cov_25_01}
\end{figure}

\begin{figure}[!htp]
  \centering
  \includegraphics[width=\textwidth]{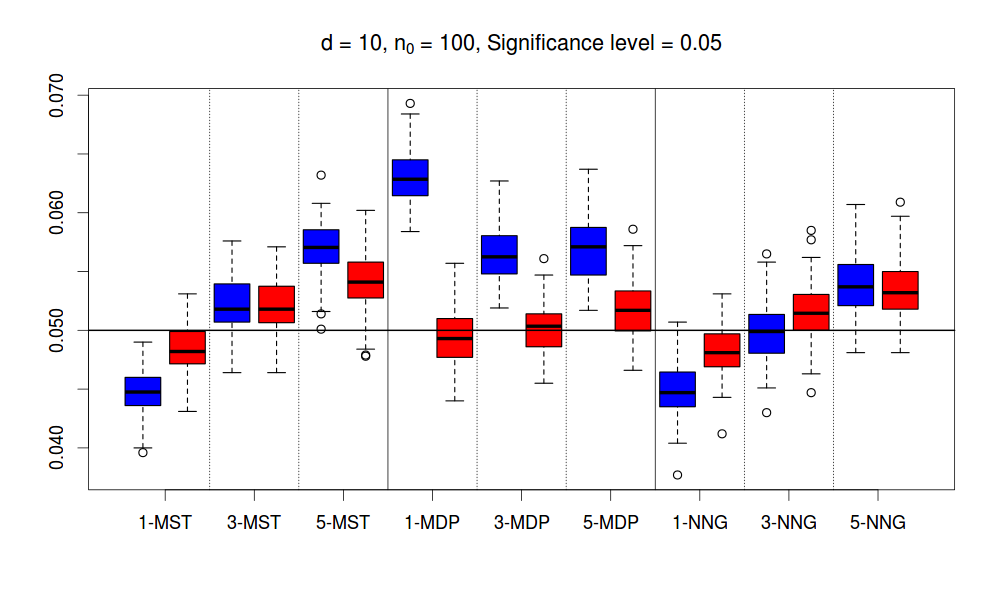}
  \includegraphics[width=\textwidth]{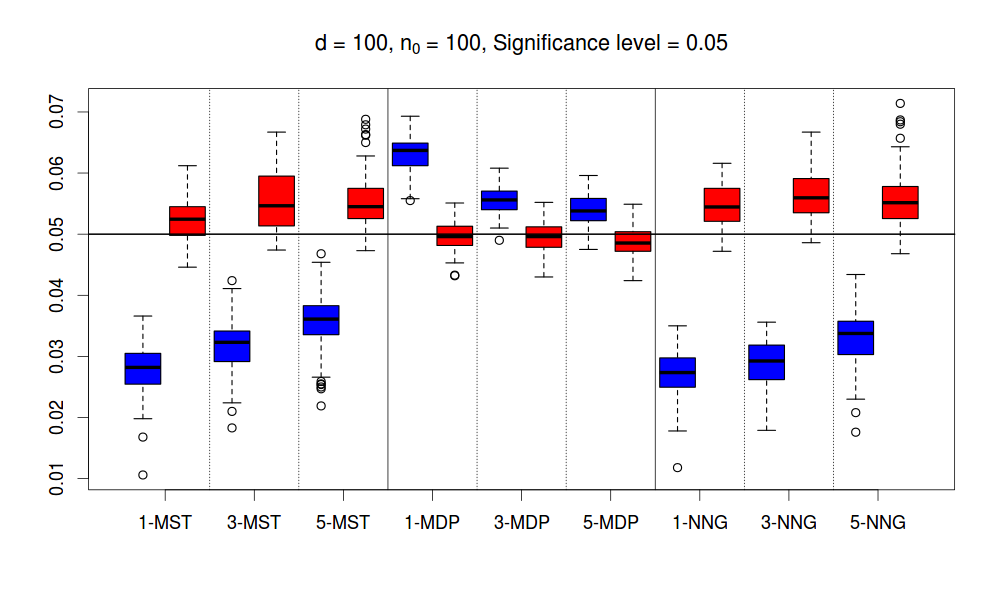}
  \caption{The counterpart boxplots of Figure \ref{fig:cov_50_05} with the smallest window size being 100 and significance level 0.05.}
  \label{fig:cov_100_05}
\end{figure}

\begin{figure}[!htp]
  \centering
  \includegraphics[width=\textwidth]{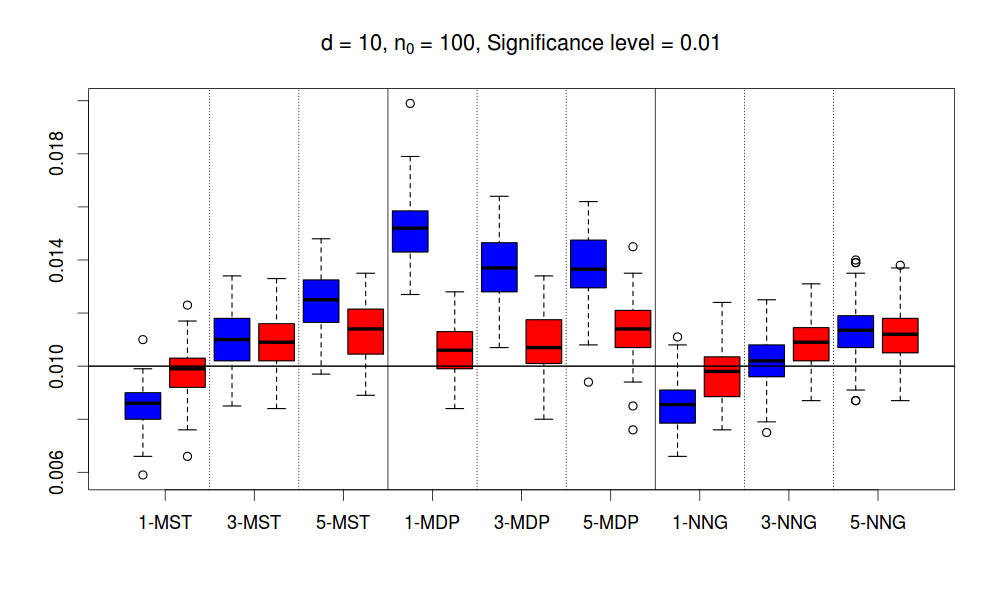}
  \includegraphics[width=\textwidth]{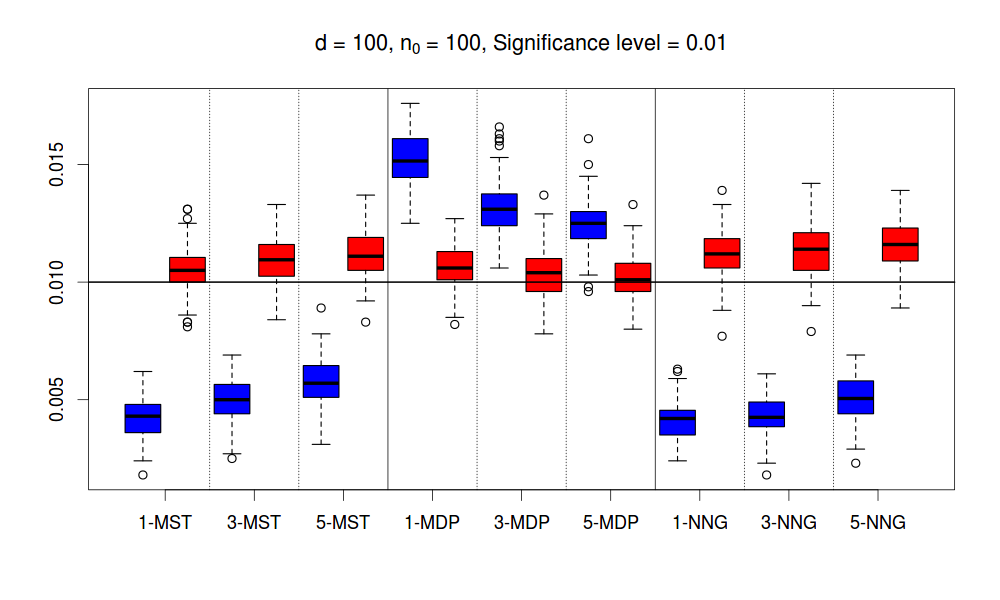}
  \caption{The counterpart boxplots of Figure \ref{fig:cov_50_05} with the smallest window size being 100 and significance level 0.01.}
  \label{fig:cov_100_01}
\end{figure}


\newpage
\section{Block Permutation Results}
\label{sec:block-perm-results}

\subsection{Authorship Data}
\label{sec:authorship-data}

\subsubsection{Scan over the Entire Book}
\label{sec:scan-over-entire}
Here, we show plots for $Z_{G,bp}$ for the authorship data under block size 2 and 10.

\begin{figure}[!htp]
  \centering
  \includegraphics[width=\textwidth]{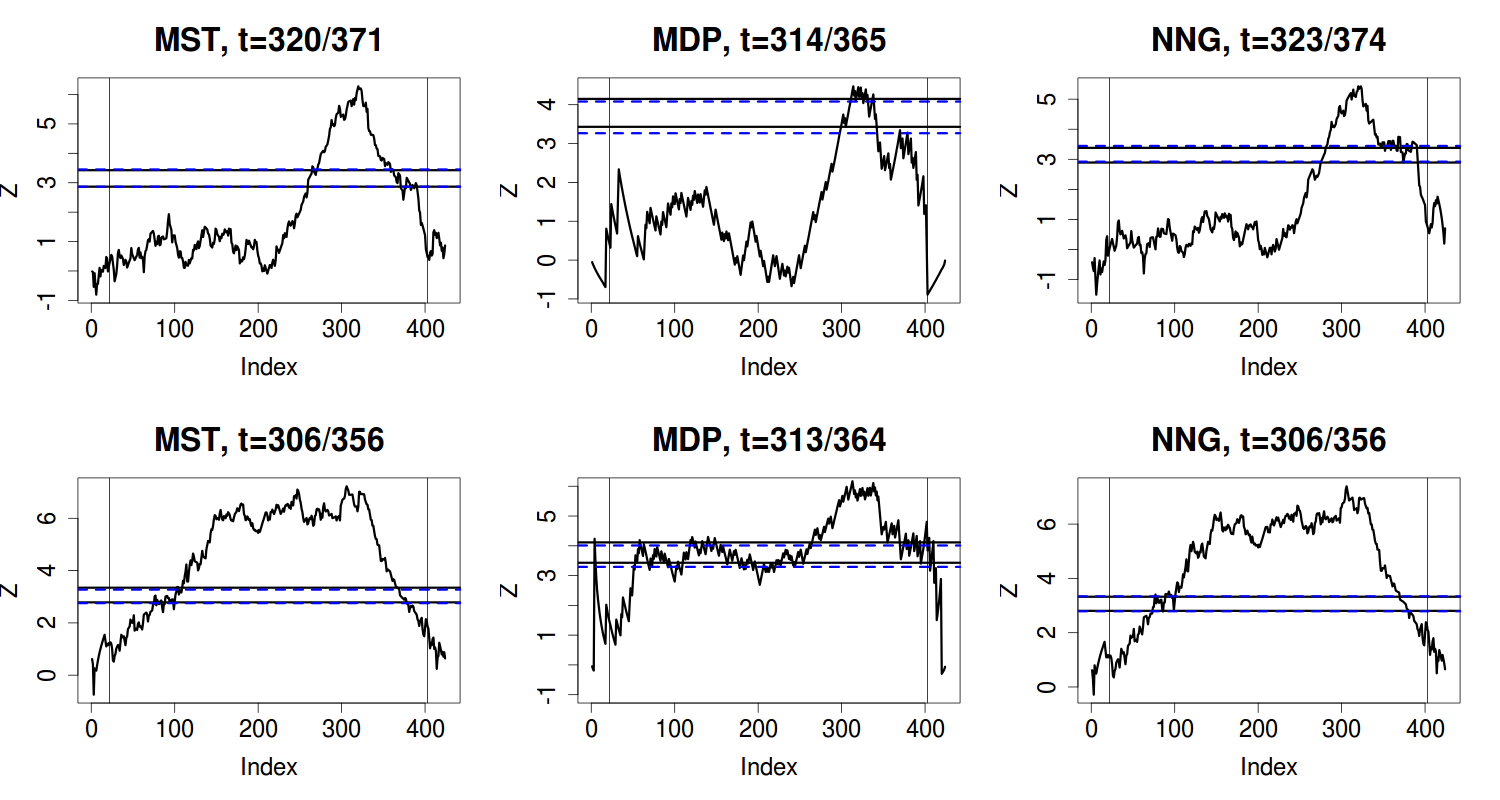}
   \caption{Block permutation results for the authorship data with block size 2.}
  \label{fig:authorbp2}
\end{figure}

\begin{figure}[!htp] 
  \centering
  \includegraphics[width=\textwidth]{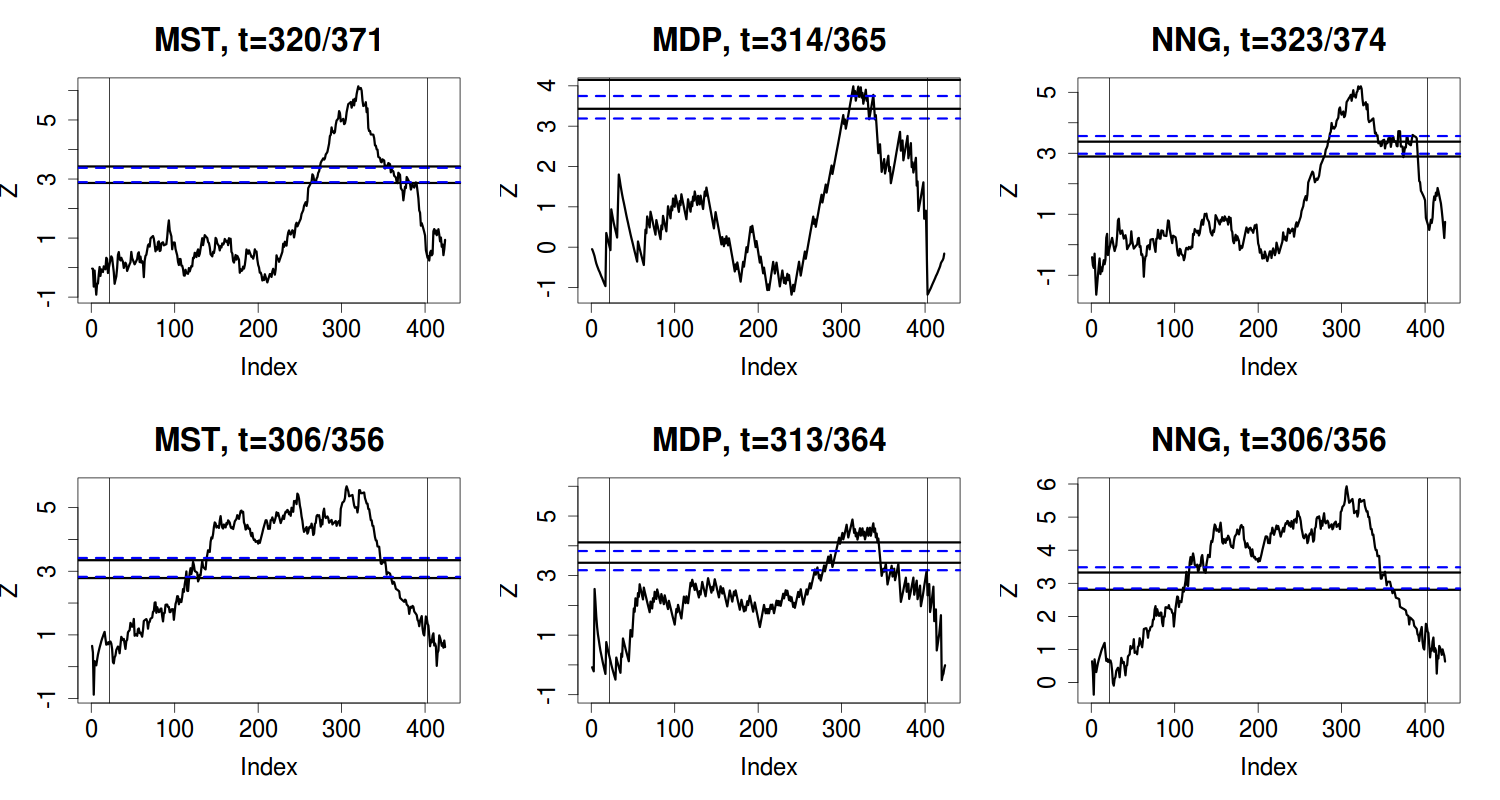}
   \caption{Block permutation results for the authorship data with block size 10.}
  \label{fig:authorbp10}
\end{figure}

\subsubsection{Scan over the First 350 Chapters}
\label{sec:scan-over-first}
Here, we show results under block permutations for the authorship data but only using data from the first 350 chapters.

\begin{table}[!htp]
  \centering
   \caption{p-values from 10,000 block permutations for the authorship data only using data from the firt 350 chapters.}
 \begin{tabular}{|c||ccc|ccc|}
    \hline
    & \multicolumn{3}{|c|}{Word length} & \multicolumn{3}{c|}{Context-free word frequency} \\ \cline{2-7}
block size   & MST & MDP & NNG & MST & MDP & NNG \\ \hline \hline
1 & 0.0485 & 0.1079 & 0.3053 & 0 & 0.0019 & 0 \\ \hline
2 & 0.0918 & 0.1345 & 0.4287 & 0 & 0.0033 & 0 \\ \hline
5 & 0.1838 & 0.1788 & 0.5490 & 0 & 0.0029 & 0 \\ \hline
10 & 0.2330 & 0.2335 & 0.6360 & 0 & 0.0127 & 0 \\ \hline
  \end{tabular}
 \label{tab:authorship2bp}
\end{table}

\begin{figure}[!htp] 
  \centering
  \includegraphics[width=\textwidth]{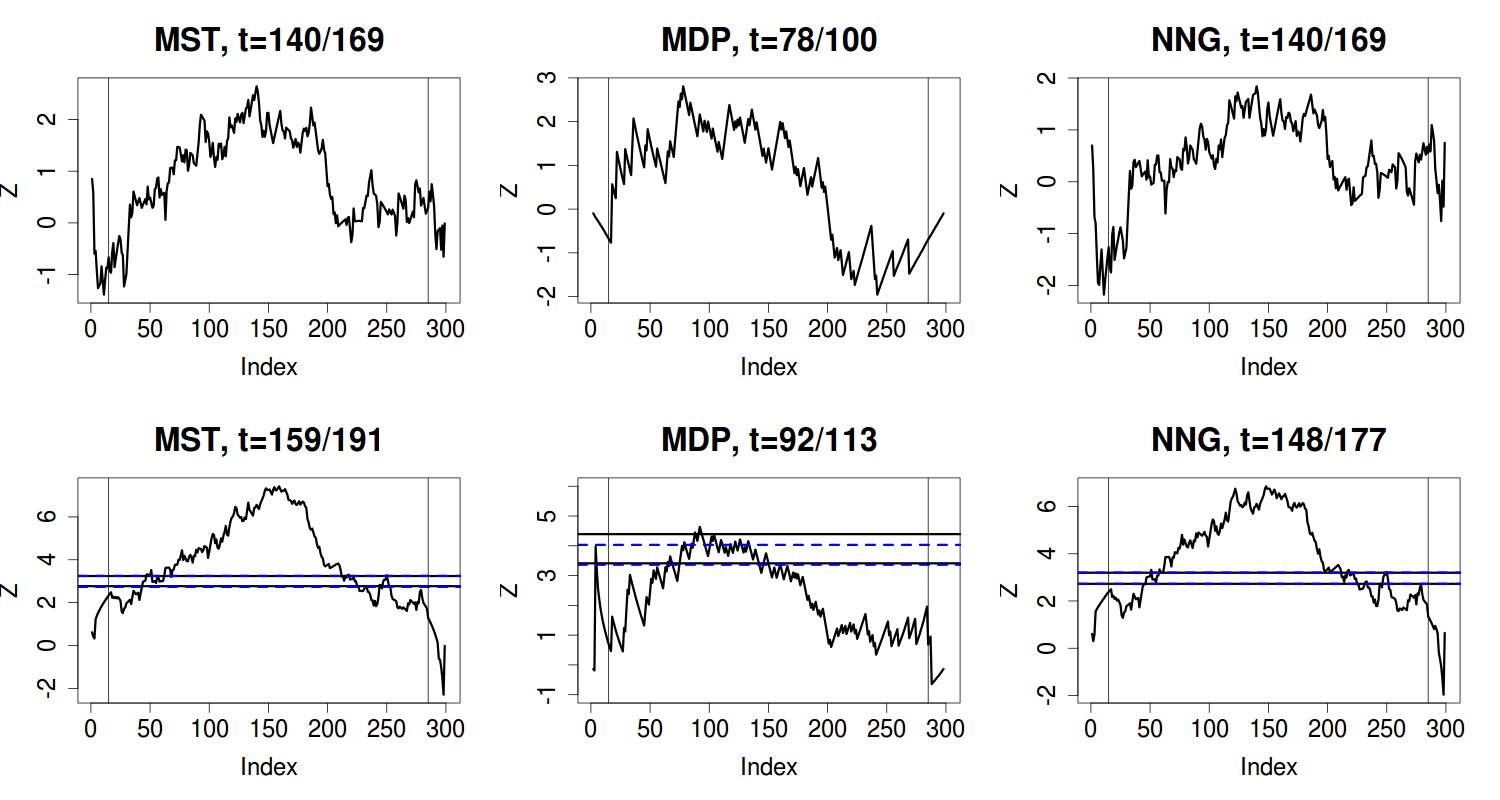}
   \caption{Block permutation results for the authorship data only using the first 350 chapters with block size 2.}
  \label{fig:author2bp2}
\end{figure}

\begin{figure}[!htp] 
  \centering
  \includegraphics[width=\textwidth]{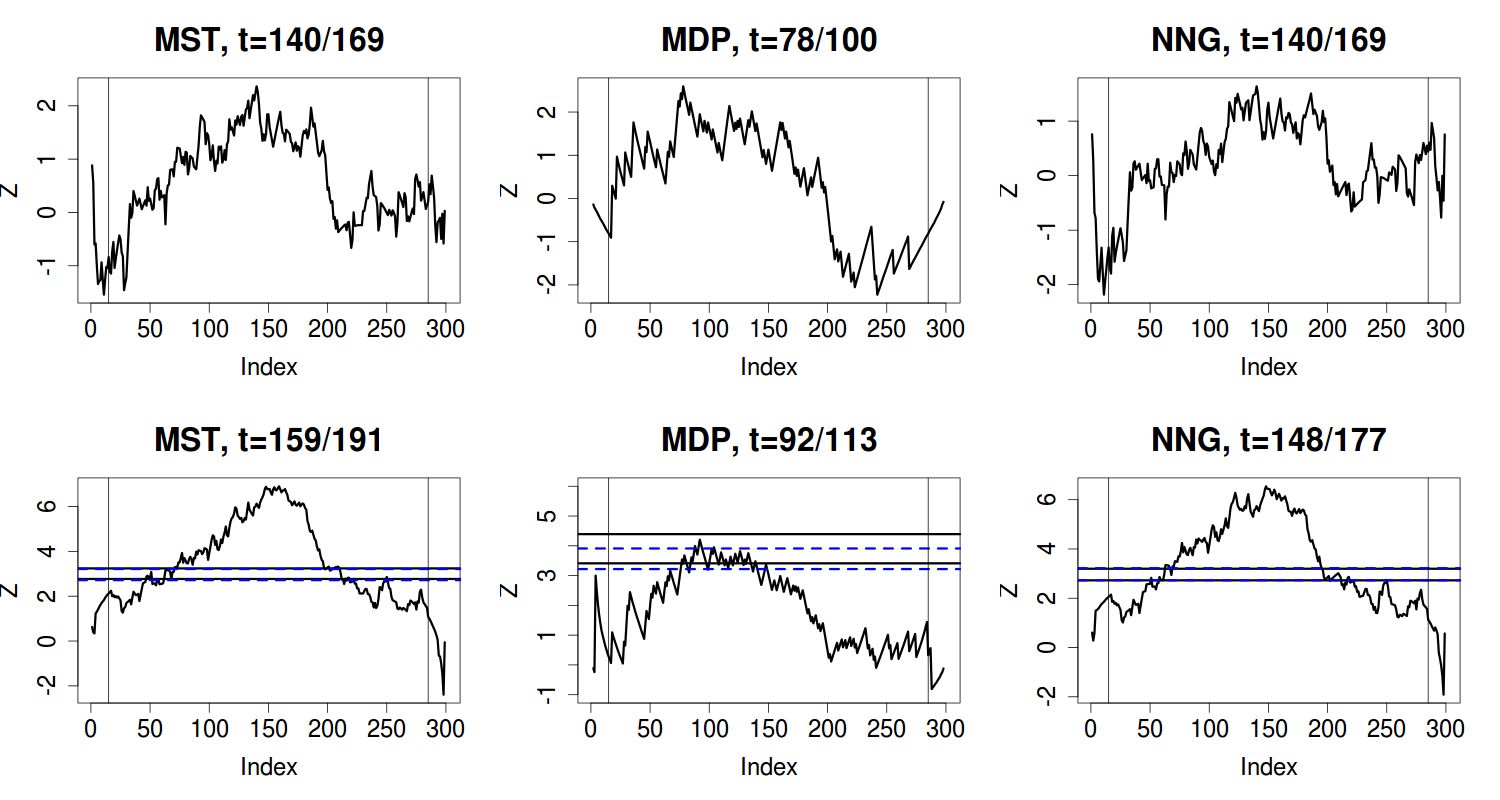}
   \caption{Block permutation results for the authorship data only using the first 350 chapters with block size 5.}
  \label{fig:author2bp5}
\end{figure}

\begin{figure}[!htp] 
  \centering
  \includegraphics[width=\textwidth]{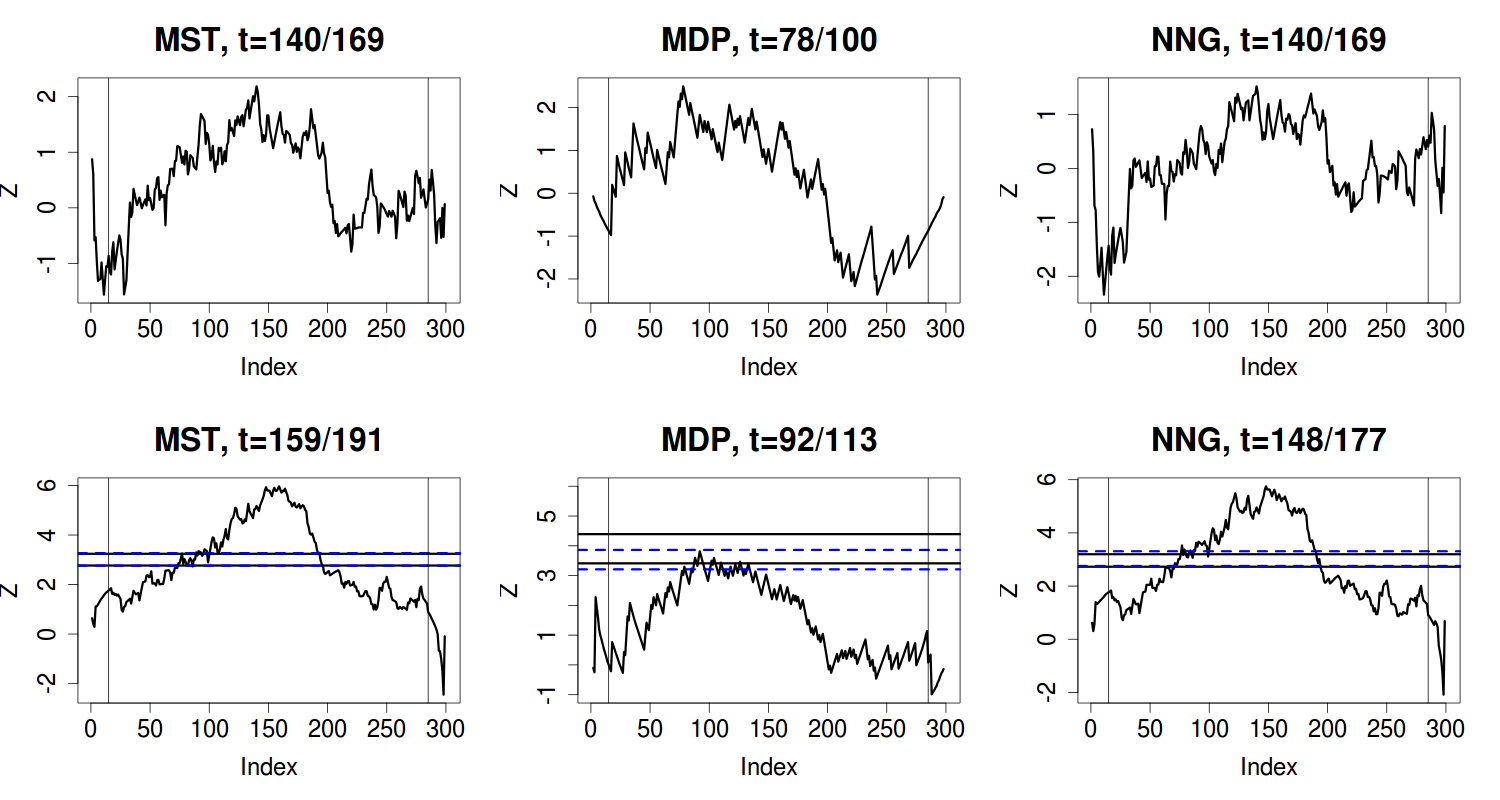}
   \caption{Block permutation results for the authorship data only using the first 350 chapters with block size 10.}
  \label{fig:author2bp10}
\end{figure}

\newpage
\subsection{Friendship Network}
\label{sec:friendship-network-1}
Here, we show results under block permutations for the phone call network data.

\begin{table}[!htp]
  \centering
   \caption{p-values from 10,000 block permutations for the phone call network data.}
 \begin{tabular}{|c||ccc|ccc|}
    \hline
    & \multicolumn{3}{|c|}{Word length} & \multicolumn{3}{c|}{Context-free word frequency} \\ \cline{2-7}
block size   & MST & MDP & NNG & MST & MDP & NNG \\ \hline \hline
1 & 0 & 0 & 0 & 0 & 0 & 0 \\ \hline
2 & 0 & 0 & 0 & 0 & 0 & 0 \\ \hline
5 & 0 & 0 & 0 & 0 & 0 & 0 \\ \hline
10 & 0 & 0 & 0.0017 & 0 & 0 & 0 \\ \hline
  \end{tabular}
 \label{tab:networkbp}
\end{table}

\begin{figure}[!htp] 
  \centering
  \includegraphics[width=\textwidth]{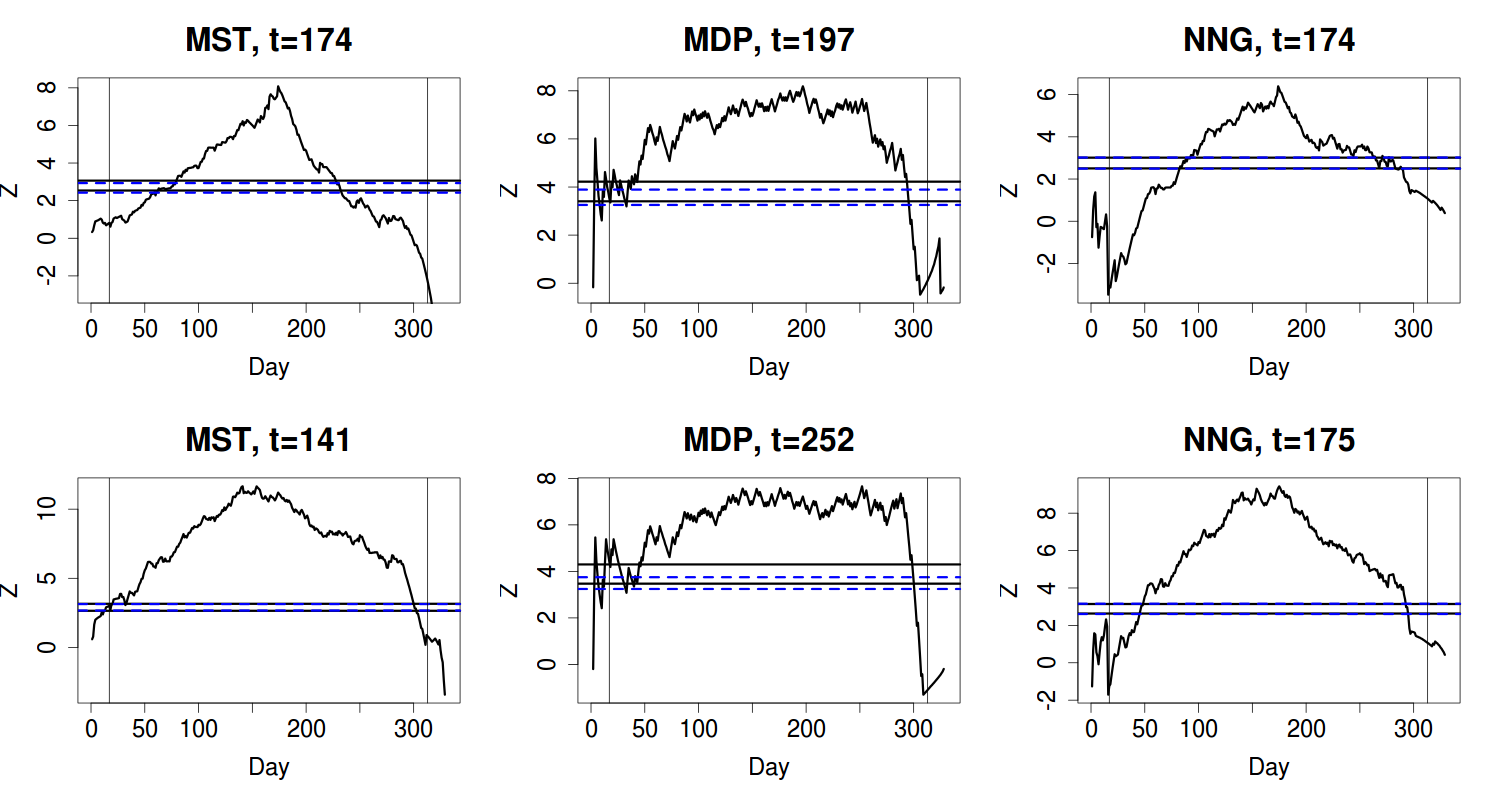}
   \caption{Block permutation results for the phone call network data with block size 2.}
  \label{fig:networkbp2}
\end{figure}

\begin{figure}[!htp] 
  \centering
  \includegraphics[width=\textwidth]{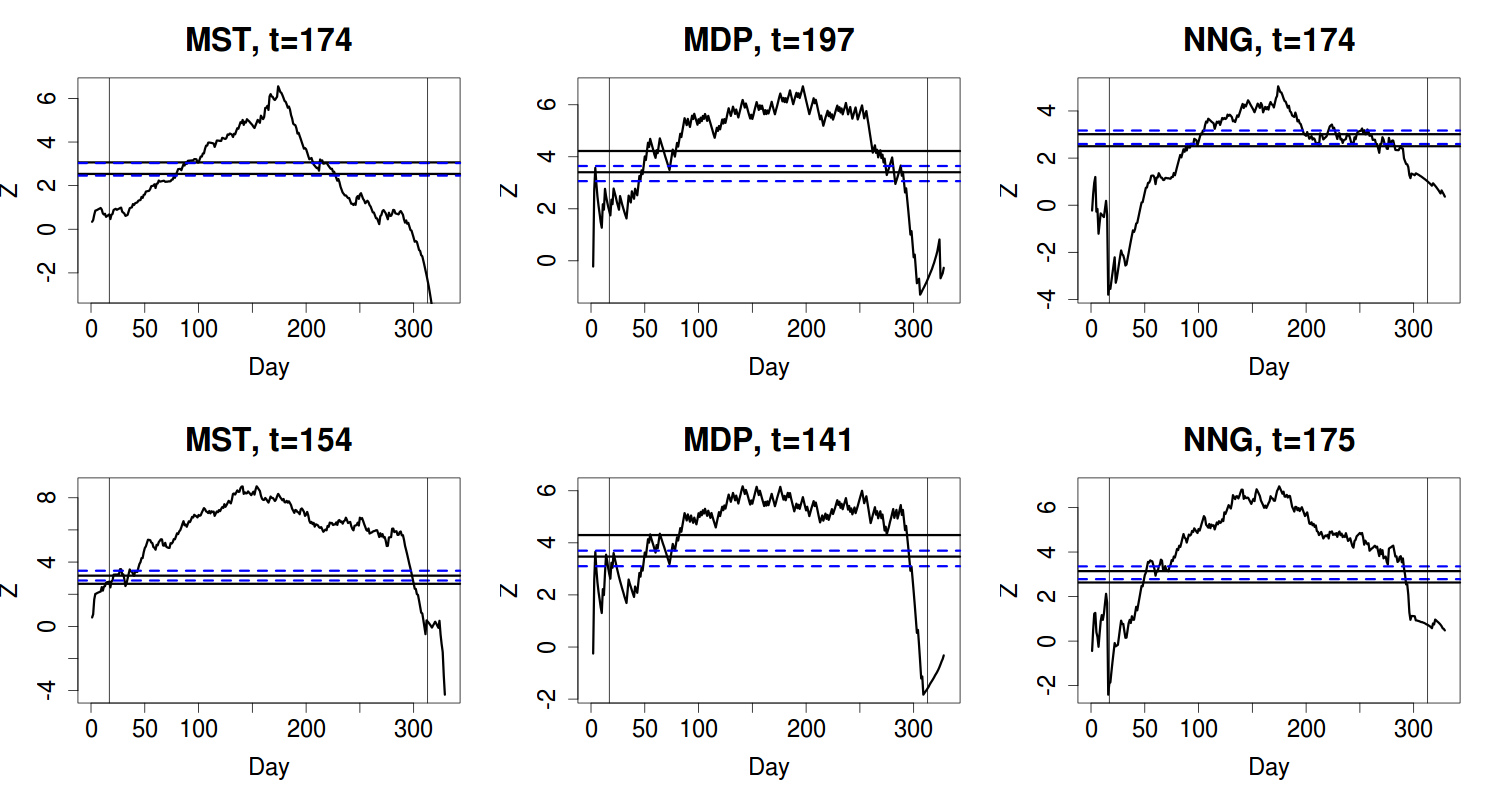}
   \caption{Block permutation results for the phone call network data with block size 5.}
  \label{fig:networkbp5}
\end{figure}

\begin{figure}[!htp] 
  \centering
  \includegraphics[width=\textwidth]{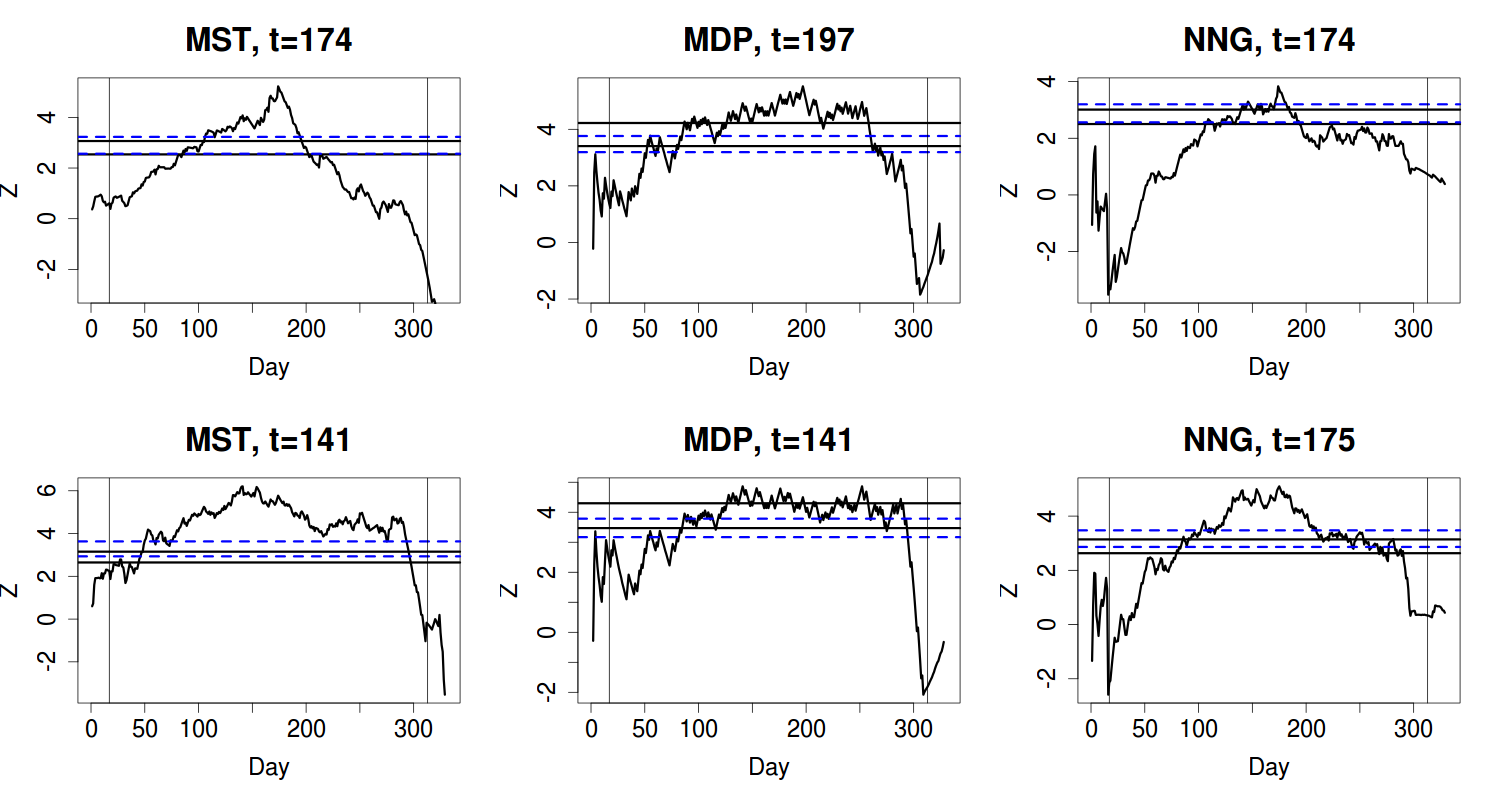}
   \caption{Block permutation results for the phone call network data with block size 10.}
  \label{fig:networkbp10}
\end{figure}


\newpage
\section*{Acknowledgements}
We thank David Siegmund, Jerome Friedman, and Susan Holmes for helpful discussions.  We also thank J. Gir\'{o}n for kindly providing the data for the analysis of Tirant lo Blanc.


\end{document}